\newtheorem*{remark}{Remark}
\newcommand{\ours}{\textsc{Ours}\xspace}
\newcommand{\ucb}{\textsc{UCB}\xspace}
\newcommand{\scan}{\textsc{Scan}\xspace}
\newcommand{\sample}{\textsc{UniformSample}\xspace}
\newcommand{\explore}{\textsc{ExplorationOnly}\xspace}
\newcommand{\scanbest}{\textsc{ScanBest}\xspace}
\newcommand{\scanworst}{\textsc{ScanWorst}\xspace}
\newcommand{\sortedscan}{\textsc{SortedScan}\xspace}
\newcommand{\tth}{\textsuperscript{th}\xspace}
\newcounter{myprocctr}
\newenvironment{proc}{
   \refstepcounter{myprocctr}
   \textsc{Procedure \themyprocctr.}
}{\par}  
\numberwithin{myprocctr}{section}
\begin{document}

\title{Approximating Opaque Top-k Queries}

\author{Jiwon Chang}
\email{jchang38@ur.rochester.edu}
\affiliation{%
  \institution{University of Rochester}
  \city{Rochester}
  \state{NY}
  \country{USA}
}
\author{Fatemeh Nargesian}
\email{fnargesian@rochester.edu}
\affiliation{%
  \institution{University of Rochester}
  \city{Rochester}
  \state{NY}
  \country{USA}
}

\renewcommand{\shortauthors}{Chang and Nargesian}

\begin{abstract}
    Combining query answering and data science workloads has become prevalent. 
    An important class of such workloads is top-k queries with a scoring function implemented as an opaque UDF---a black box whose internal structure and scores on the search domain are unavailable. 
    Some typical examples include costly calls to fuzzy classification and regression models. 
    The models may also be changed in an ad-hoc manner.
    Since the algorithm does not know the scoring function's behavior on the input data, opaque top-k queries become expensive to evaluate exactly or speed up by indexing. 
    Hence, we propose an approximation algorithm for opaque top-k query answering. 
    Our proposed solution is a task-independent hierarchical index and a novel bandit algorithm. 
    The index clusters elements by some cheap vector representation then builds a tree of the clusters. 
    Our bandit is a diminishing returns submodular epsilon-greedy bandit algorithm that maximizes the sum of the solution set's scores. 
    Our bandit models the distribution of scores in each arm using a histogram, then targets arms with fat tails. 
    We prove that our bandit algorithm approaches a constant factor of the optimal algorithm. 
    We evaluate our standalone library on large synthetic, image, and tabular datasets over a variety of scoring functions. 
    Our method accelerates the time required to achieve nearly optimal scores by up to an order of magnitude compared to exhaustive scan while consistently outperforming baseline sampling algorithms. 
\end{abstract}

\begin{CCSXML}
<ccs2012>
    <concept>
        <concept_id>10002951.10002952.10003190.10003192</concept_id>
        <concept_desc>Information systems~Database query processing</concept_desc>
        <concept_significance>500</concept_significance>
    </concept>
</ccs2012>
\end{CCSXML}

\ccsdesc[500]{Information systems~Database query processing}

\keywords{top-k queries; approximate query processing; user-defined functions}

\maketitle

\section{Introduction}
\label{sec:intro} 

Intermixing opaque user defined functions (UDF) with query answering has become an important workload for performing data science on unstructured text, images, and tabular data~\cite{foufoulas2023efficient}. 
An opaque UDF is a black-box function whose semantics and internal structure nor its scores on the database are available a priori~\cite{he2020method}. 
They are invoked in stand-alone code bases~\cite{spiegelberg2021tuplex,he2020method,he2024optimizing} or intermixed with SQL queries~\cite{he2020method,anderson2016input,he2024optimizing,sikdar2020monsoon}. 
Opaque UDFs are typically written in conventional programming languages and they increasingly involve calls to machine learning models~\cite{foufoulas2023efficient,spiegelberg2021tuplex,dai2024uqe,grulich2021babelfish}, as emerging academic and industry systems integrate ad-hoc model training and inference into databases with declarative interfaces.  
For example, users may train models on-the-fly to predict missing values~\cite{hasani2019approxml,jin2024aidb,zhao2024neurdb} or prompt large language models (LLMs)~\cite{dai2024uqe,anderson2024design,liu2024optimizing,patel2024lotus,noauthor_ai_query_nodate,noauthor_llm_nodate} via an SQL extension.

\subsection{Opaque Top-\textit{k} Queries} 

Opaque  top-$k$ queries rank elements in a database by an opaque scoring function, and return the $k$ highest scoring elements.
\begin{alltt}
\begin{flushleft}
    \textbf{SELECT} * \textbf{FROM} vehicleListings\\
    \textbf{ORDER BY} ValuationUDF(listing) \textbf{LIMIT} 250;
\end{flushleft}
\end{alltt}
For example, an analyst may want to retrieve 250 current vehicle listings that have the highest predicted valuations by applying a price prediction ML model on a database of listings.

Traditional top-$k$ query answering literature generally assumes certain properties on scoring functions~\cite{ilyas2008survey,fagin2001optimal}.  
For example, the family of Threshold Algorithm (TA) variants typically assume that the scoring function is a monotone aggregate~\cite{fagin1996combining,fagin2001optimal,ilyas2008survey,bruno2007threshold,shmueli2009best}. Another class of scoring functions is similarity functions, such as inner product of vector representations~\cite{pan2024survey}. 
These function classes are prevalent yet suffer from limited expressivity, potentially losing complex user intent.

Arbitrary UDFs, however, pose fundamental challenges to top-$k$ query evaluation. First, they may be difficult to analyze, as they involve imperative programming languages and ML models. 
Existing methods increase programmer burden~\cite{spiegelberg2021tuplex} or limit the scope~\cite{hagedorn2021putting} in order to expose the internals of the UDF to the compiler. 
Second, they may be expensive to run---foundation models are the extreme case~\cite{dai2024uqe,liu2024optimizing,noauthor_ai_query_nodate,noauthor_llm_nodate}. 
Hence, an exhaustive scan is not attractive. 
Third, the scores given by the function may change over time, if the scoring function is a continually learning model~\cite{de2021continual} or if the score depends on some external factor, like a user-specified label or prompt. As such, it is hard to build a sorted index over the elements' scores or apply the existing approximate nearest neighbor search techniques~\cite{WangXY021}. Concretely, consider the following scenario. 

\begin{example}
    Analyst Alice is developing an in-house model to predict the market prices of used vehicles. The model is updated frequently to adapt to changing market preferences. She is interested in identifying which listings have the highest predicted valuations, as this would help her company target its most valuable customers.
    To achieve this, she trains a gradient-boosted decision tree model for tabular regression~\cite{chen2016xgboost}. This UDF is an opaque scoring function: it invokes a model and incurs significant inference latency, and it is constantly evolving. 
    Alice then issues a top-$k$ query using this UDF to find the 250 highest-valued listings from a dataset of hundreds of thousands. As she continues to refine the model, she will repeatedly issue new top-$k$ queries. Since she wants to interactively analyze the dataset and the models, she is willing to sacrifice some quality for reduced latency.
    \label{ex:running}
\end{example}

Traditional methods are unsatisfactory for the query given in Example~\ref{ex:running}. 
Conventional data engines such as Spark or PostgreSQL treat UDFs as a black box, and will execute the UDF exhaustively on all data points~\cite{he2020method,cao2004efficient}. 
This is not viable when the search domain is very large, each invocation of the model is costly, or the user does not need the exact top-$k$ solution. 
Taking a sample of the search domain is also viable, but degrades in quality when only a small subset of the elements have high scores. 
Uniform sampling, in the worst case, requires a sample size of $n (1-\varepsilon)$, where $n$ is the number of data points and $\varepsilon$ is approximation tolerance. 
Specialized algorithms for limited classes of scoring functions and indexes are not applicable, either.

A recent body of work studied how to statistically optimize opaque UDF queries. However, existing methods are limited to other types of queries, such as feature engineering~\cite{anderson2016input}, selection~\cite{he2020method,dai2024uqe}, aggregation~\cite{dai2024uqe}, and queries with partially obscured predicates~\cite{sikdar2020monsoon}. 
To the best of our knowledge, no prior work studied statistical optimization for opaque top-$k$ queries. 

A method for approximate opaque top-$k$ query execution has to balance conflicting objectives. 
It should be applicable across a wide variety of opaque scoring functions. 
It should not incur significant execution-time overhead in the worst-case scenario. 
It should be amenable to batched execution to maximize lower-level optimization opportunities. 
Finally, its performance-over-time curve should be close to optimal, and consistently better than random sampling.

\begin{figure}
    \centering
    \subfloat[Hierarchical cluster with \\histogram sketches.]
    {
        \includegraphics[width=0.47\columnwidth]{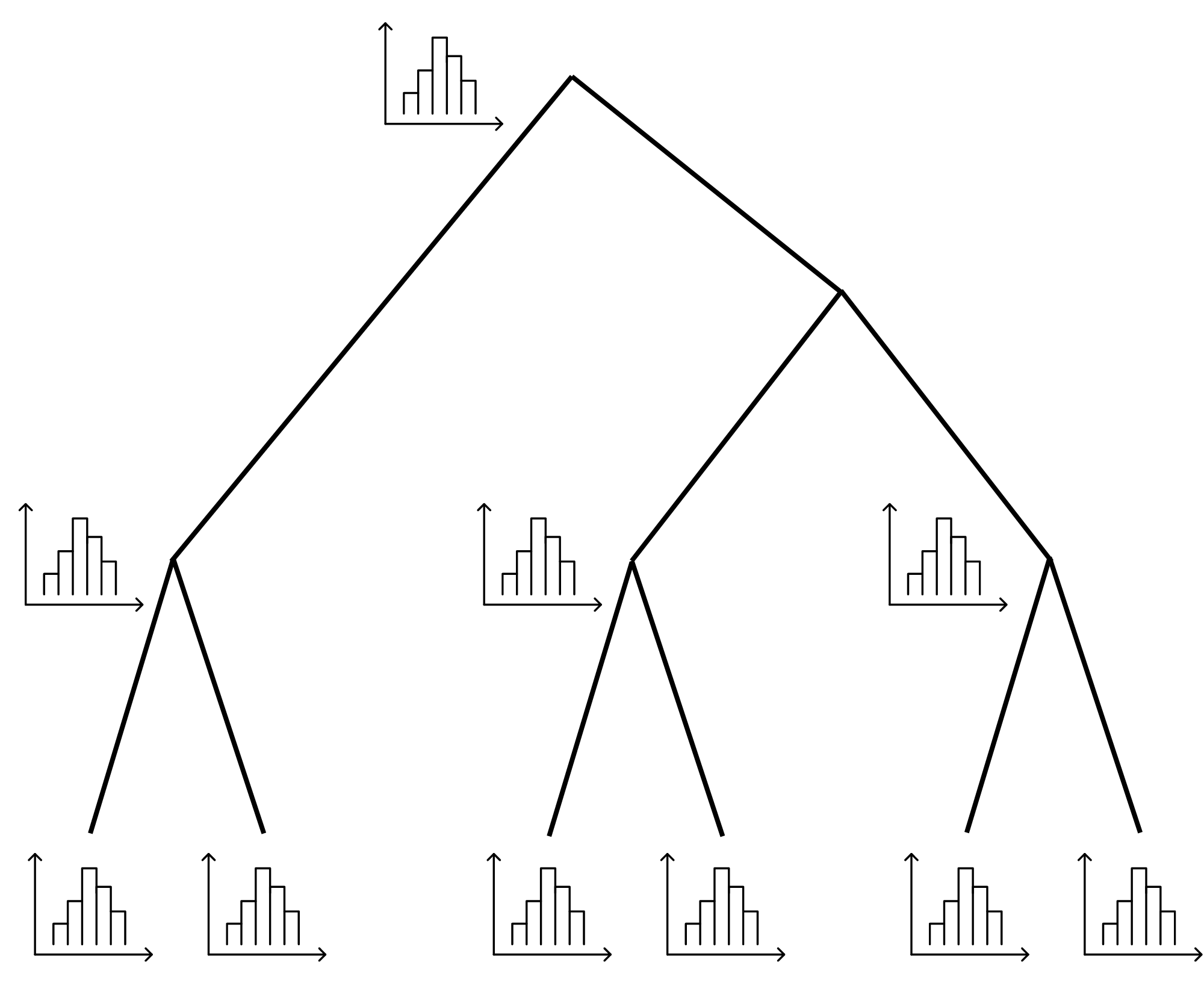}
        \label{fig:index-visualization}
    }
    \hfill
    \subfloat[Each iteration of our \\bandit algorithm.]
    {
        \includegraphics[width=0.48\columnwidth]{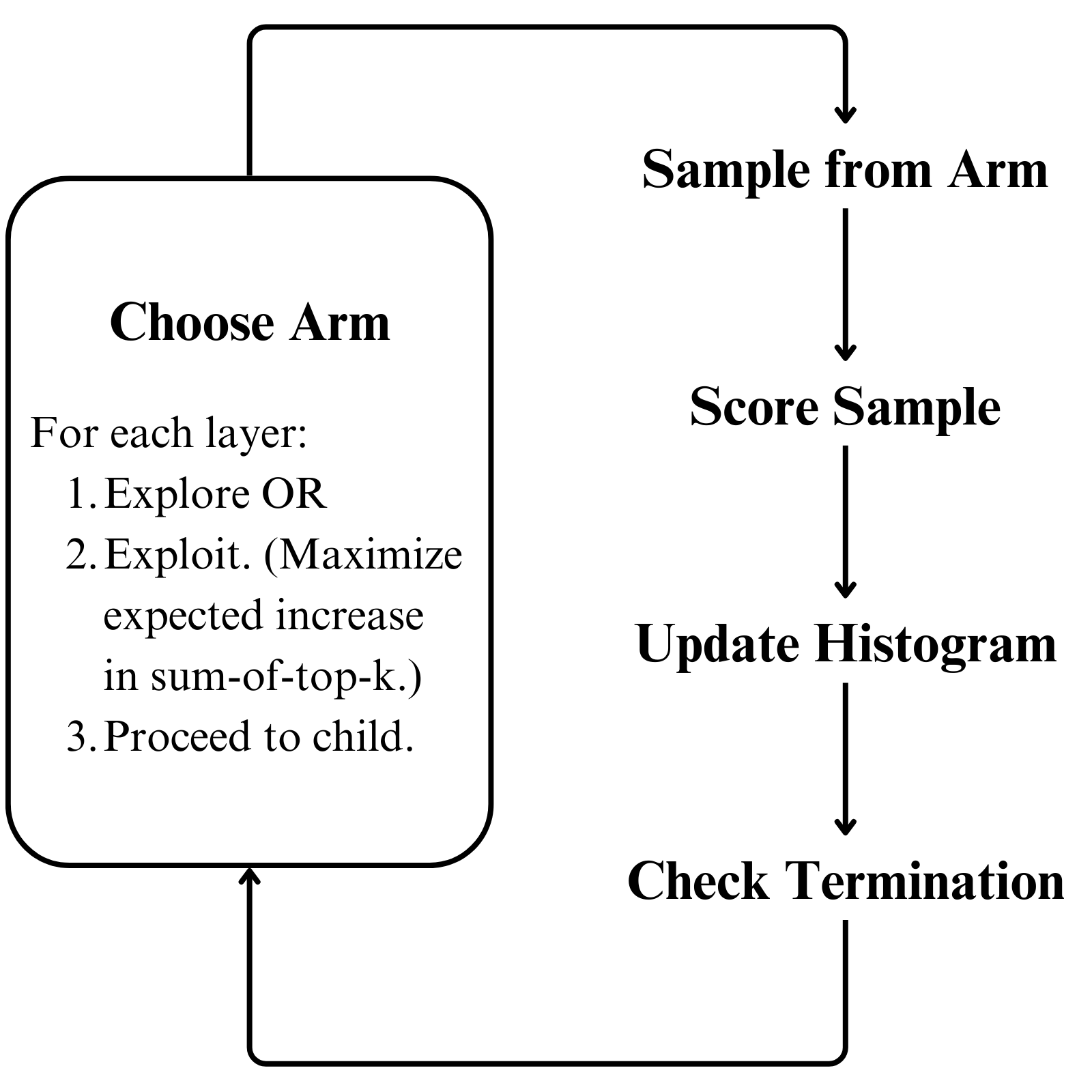}
        \label{fig:bandit-visualization}
    } 
    \caption{
    A high-level overview of our proposed solution 
    during indexing and query execution.  
    }
    \Description{Left: A non-balanced binary tree where each node, including leaves, have a histogram sketch attached. Right: The logical loop of one iteration of our bandit algorithm, which repeates: choose arm, sample from arm, score sample, update histogram, and check termination.}
    \label{fig:overview-visualization}
\end{figure}

\subsection{Solution Sketch} \label{sec:solution}

Our proposed solution, as illustrated in Figure~\ref{fig:overview-visualization}, consists of a tree index and a novel bandit algorithm. The tree index hierarchically clusters data at preprocessing time before any opaque top-$k$ queries. 

We adopt the VOODOO index from He et al.~\cite{he2020method}. This index vectorizes each element using a cheap heuristic, applies $k$-means clustering to the vectors, and then performs agglomerative clustering on the cluster centroids.

Our main technical contribution is the query execution algorithm. 
We abstract the problem of sampling over a collection of heterogeneous clusters for top-$k$ queries as a stochastic diminishing returns (DR) submodular bandit problem~\cite{soma2014optimal,asadpour2016maximizing}. 
In this problem setting, there are three classes of algorithms: offline, non-adaptive, and adaptive. 
The optimal offline algorithm is the theoretical best-case scan when the insertion order of the data is ideal. 
Adaptive algorithms change their behavior based on the realizations of random samples; non-adaptive algorithms do not. 
Prior work established gaps between the three classes~\cite{hellerstein2015discrete,asadpour2016maximizing}. 
We design an algorithm that performs close to the optimal adaptive, as illustrated in Figure~\ref{fig:rate-illustration}. 

\begin{figure}
    \centering
    \includegraphics[width=0.9\linewidth]{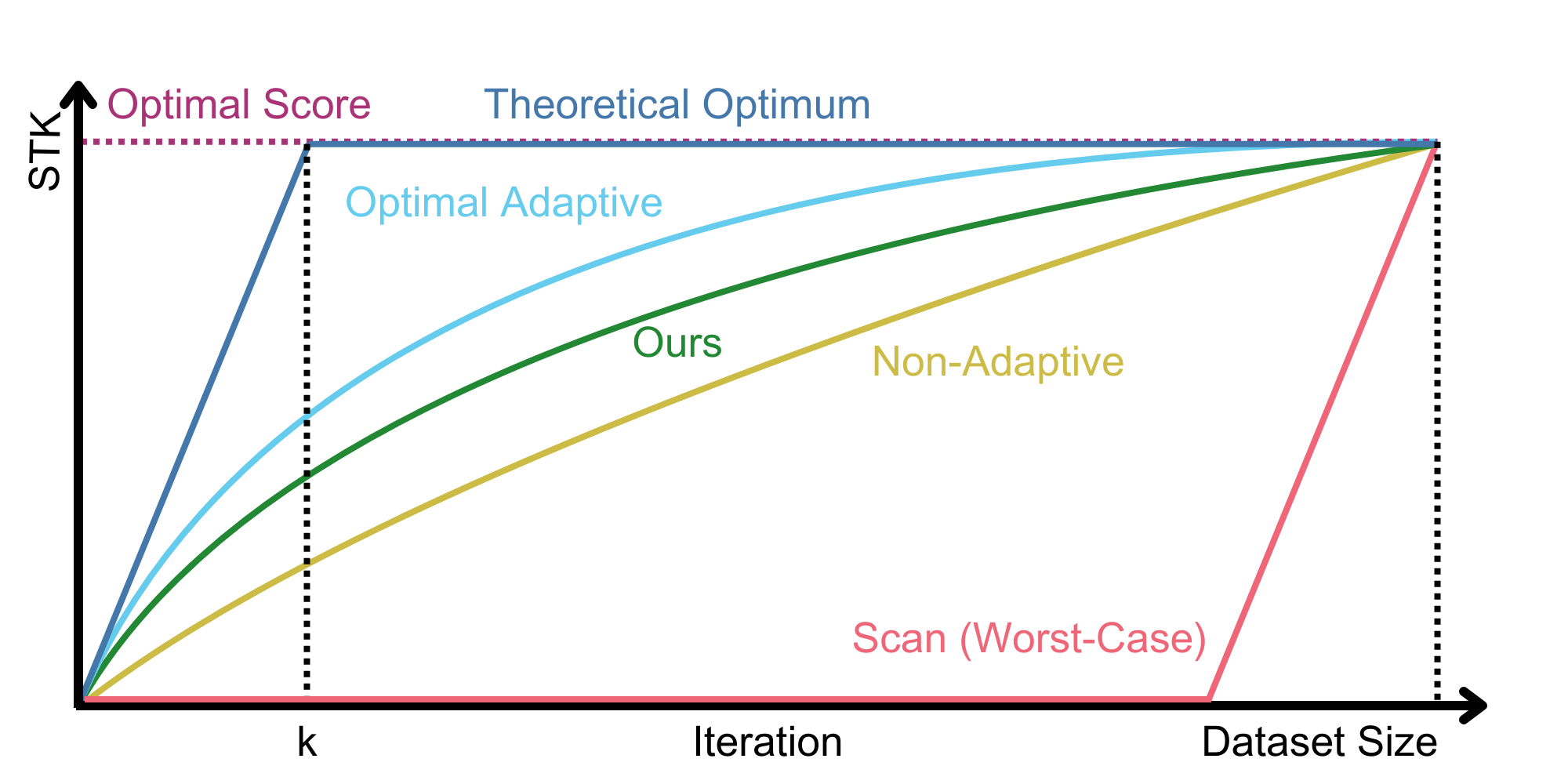}
    \caption{Visualizing the relative performance of various algorithms. $x$-axis is the number of iterations (i.e. scoring function calls) and the $y$-axis indicates the quality of results in terms of the sum of top-$k$ scores (STK).}
    \Description{The optimal score in terms of STK is some unknown constant. If the elements are ordered in the best or the worst case, \scan achieves the optimal score after just $k$ iterations, or as late as after $n - k$ iterations. The optimal adaptive algorithm is closest to the theoretical optimum, followed by \ours and non-adaptive sampling algorithms.}
    \label{fig:rate-illustration}
\end{figure}

Our main contributions are summarized below.
\begin{enumerate}
    \item We study the problem of approximating opaque top-$k$ queries. We formalize query answering over the index structure as a DR-submodular bandit problem. 
    \item We propose a histogram-based $\varepsilon$-greedy bandit algorithm.  
    \item We prove approximation guarantees for the objective of maximizing the total utility of the solution, given by its sum of scores. When the scoring domain is non-negative, discrete and finite, our algorithm's expected utility is lower bounded by $(1-\exp(-1-1/2T)) \text{OPT} - O(T^{2/3})$ at iteration limit $T$, which approaches 63\% of the optimal adaptive algorithm. 
    \item When the scoring distribution is an arbitrary non-negative interval, we describe a histogram maintenance strategy that improves flexibility and accuracy, and fallback strategies that improve practical performance.
    \item We implement our solution as a standalone library and perform extensive experiments on large synthetic, image, and tabular datasets over a variety of scoring functions. 
\end{enumerate}

The remainder of this paper is organized as follows. 
In \S~\ref{sec:problemdef}, we formally define the concrete and abstract problems. 
\S~\ref{sec:algorithms} presents our bandit algorithm. 
In \S~\ref{sec:analysis}, we establish the theoretical guarantees. 
\S~\ref{sec:experiments} reports the experimental results. 
We discuss related work in \S~\ref{sec:related}. Finally, we conclude with a discussion in \S~\ref{sec:discussion} and~\ref{sec:conclusion}.

\section{Problem Definition}\label{sec:problemdef}

This section defines two related problems: the first is the concrete problem of answering opaque top-$k$ queries, and the second is an abstract bandit problem that assumes an index. 

Refer to Table~\ref{tbl:notations} for the precise definition of notations.

\begin{table}[htb] 
    \centering
    \setlength{\extrarowheight}{2pt} 
    \begin{tabularx}{\columnwidth}{|>{\centering\arraybackslash}p{0.15\columnwidth}|
                                    >{\raggedright\arraybackslash}X|}
        \hline
        \textbf{Notation} & \textbf{Definition} \\
        \hline\hline
        $T$ & Iteration limit. Unknown a priori. \\ 
        \hline
        $t$ & Current iteration count. \\ 
        \hline
        $S_t$ & Set of scores of the running solution at iteration $t$ \textit{or} the solution set itself. \\ 
        \hline
        $k$ & Cardinality LIMIT on the query result. \\ 
        \hline
        $\mathcal{M}(\mathcal{X})$ & The set of all multisets over some domain $\mathcal{X}$. \\ 
        \hline
        $\mathcal{X}$ & The domain of all legal elements. \\ 
        \hline
        $D$ & The target dataset. \\ 
        \hline
        $n$ & Number of elements in $D$. \\ 
        \hline
        $f$ & Scoring function. \\ 
        \hline
        $L$ & Number of clusters in some clustering of $D$. \\ 
        \hline
        $\mathcal{D}_l$ & Distribution of scores of elements in the $l^{\text{th}}$ cluster. \\ 
        \hline
        $p(x | \mathcal{D}_l)$ & Probability of sampling outcome $x$ from $D_l$. \\ 
        \hline
        $S_{(k)}$ & $k^{\text{th}}$ largest element of some (multi)set $S$. \\ 
        \hline
        $\text{STK}(S)$ & Sum of the $k$ largest elements in (multi)set $S$. \\ 
        \hline
        OPT & $\mathbb{E}[STK(S_t)]$ for the optimal adaptive algorithm. \\
        \hline
        $\Delta_t$ & $\text{STK}(S_t) - \text{STK}(S_{t-1})$. \\ 
        \hline
        $\mathbb{E}[\Delta_{t,l}]$ & Expected $\Delta_t$ after sampling from $\mathcal{D}_l$. \\ 
        \hline
        $l_t$ & Cluster chosen by our algorithm at iteration $t$. \\ 
        \hline
        $B$ & Number of buckets in histogram sketches. \\ 
        \hline
        $\alpha$ & Initial maximum range of the histograms. \\
        \hline
        $\beta$ & Re-binning overestimation parameter. \\
        \hline
        $F$ & Frequency to check fallback condition. \\
        \hline
    \end{tabularx}
    \caption{Definition of notations used throughout the paper.}
    \label{tbl:notations}
\end{table}

\subsection{Sum-of-Top-\textit{k} Objective}

We must define an objective for our algorithm. In top-$k$ query problems with stronger assumptions, algorithms aim for high precision or recall under a latency budget. However, precision and recall are extrinsic measures of solution set quality. A solution set does not have precision or recall in a vacuum, and we would also need information about the ground truth solution. Such information is not available in the opaque setting.

In the opaque setting, we need an \textit{intrinsic} measure of solution set quality. We propose Sum-of-Top-$k$ scores (STK) as this measure. Given a finite set of non-negative numbers $S$, STK is the sum of up to $k$ largest elements of $S$. 
If the score of each element represents its utility, then $\text{STK}(S)$ represents the total utility of $S$ for a user that is interested only in the top-$k$ elements.
Formally,
\begin{equation}
    \text{STK}(S) = \sum_{k' = 1}^k (S)_{(k')}
\label{eq:stk}
\end{equation}
where $(S)_{(k)}$ is the $k$th largest element in $S$. 
Note that $S_{(k)}$ stands for the $k$th smallest element in the standard notation for order statistics. We flip the notation for simplicity.

\begin{example}
    Alice queries Bob to retrieve a collection of valuable used car listings from an online automobile trading platform, in descending order of appraised price. Since Alice has limited time and patience, she will read only up to the first 250 rows of Bob's report. Bob's performance will be evaluated based on the total estimated prices of the top 250 cars that he has selected. 
\end{example}

Although STK gives limited guarantees in terms of precision or recall, it is useful for several reasons. Its marginal gain is  easy to compute and to estimate. It has desirable diminishing-returns properties. Maximizing STK exactly gives the ground truth solution, barring ties. Finally, we empirically show that STK is tightly correlated with precision and recall in real-world data.

\subsection{Concrete Problem}

Let $D \in \mathcal{X}^n$ be a dataset over some arbitrary domain $\mathcal{X}$ (e.g., images, documents, rows). Assume an opaque scoring function $f: \mathcal{X} \to [0, \infty)$ with a potentially unknown but fixed maximum range. Let multiset $S_{t}\subset D$ be an agent's running top-$k$ solution at iteration $t$. Note that $S_{t}$ may contain fewer than $k$ elements, at the beginning, when $t<k$. 
We consider the setting where in each iteration $t$, the agent can access an element $x_t \in D$ and evaluate $f(x_t)$. Then, the running solution $S_{t-1}$ is updated based on $f(x_t)$, forming $S_t$. That is, if $f(x_t)>(S_{t-1})_{(k)}$, $x_t$ kicks out  $(S_{t-1})_{(k)}$. 
The running solution may contain ties. 
We evaluate the performance of $S_t$ using $\text{STK}(S_t)$. 
The user monitors the running solution and retrieves the result as soon as satisfied. As such, our goal is to query $D$ in an order that maximizes STK at each iteration.

\subsection{Abstract Problem}

We formalize the problem using a multi-armed bandit (MAB) framework. MAB is a class of problems and algorithms for online decision-making under uncertainty. In the classical MAB setting, an agent interacts with a collection of unknown random distributions, referred to as arms. In each iteration, the agent selects an arm to probe and receives a  random reward from the corresponding distribution. The goal is to maximize the expected sum of rewards over an iteration limit.  
We adapt this framework to our setting. A cluster of data points in a dataset is  an arm with an unknown distribution of scores from the opaque UDF. 
The reward of a sampled point at iteration $t$  is the marginal gain 
$\Delta_t = \text{STK}(S_{t}) - \text{STK}(S_{t-1})$. 
It follows from telescoping sum that $\sum_{t=1}^T \Delta_t = \text{STK}(S_T)$, since $\text{STK}(\emptyset) = 0$. 

\begin{definition}[Top-$k$ Bandit Problem]
    Suppose there are $L$ unknown probability distributions, denoted $\mathcal{D}_1, \dots, \mathcal{D}_L$. Each distribution has a finite domain of non-negative integers $\mathcal{X} \subset \mathbb{Z}^+$. (We will generalize this to real intervals later.) The agent starts with an empty multiset $S_0 = \emptyset$. In each iteration $1 \le t \le T$, the agent samples a value i.i.d. from some distribution $\mathcal{D}_l$ for $l \in [L]$ and adds it to $S_{t-1}$ to obtain $S_t$. The goal is to select the appropriate arms, maximizing 
    $\mathbb{E}[\text{STK}(S_T)]$
    by the iteration limit $T \in \mathbb{Z}^+$. \label{def:abstract} 
\end{definition}

Consider Example~\ref{ex:running}. Alice might index the data using $k$-means clustering and a dendrogram, as in He et al. \cite{he2020method}. This abstraction also applies to existing B-trees or a union of datasets. When working with a hierarchical index, Definition~\ref{def:abstract} applies to clusters in each layer of the tree, as shown in Figure~\ref{fig:overview-visualization}. While the analysis assumes sampling i.i.d. from an abstract distribution $\mathcal{D}_l$, in practice, Alice samples listings from each cluster without replacement and applies $f$ to obtain a score. In this case, the scores are the appraised prices.
Alice monitors the quality of the solution via a live interface and retrieves the final solution once satisfied.

Solving this abstract problem is challenging. A good algorithm must balance exploration and exploitation. Furthermore, even when the distributions are known, the nature of exploitation differs from the standard bandit scenario. As we demonstrate in \S~\ref{sec:analysis}, the bandit must consider the full distribution, in contrast to standard bandit algorithms, which only require estimating the mean reward of each arm. Finally, computational overhead must remain low.

\section{Top-\textit{k} Bandit Algorithm} \label{sec:algorithms}
 
In this section, we design and analyze an algorithm to effectively solve the abstract problem (Definition~\ref{def:abstract}).

\subsection{Theoretical Variant} \label{sec:discrete}

We first describe an algorithm for a simplified setting. Let us ignore the scoring function $f: \mathcal{X} \to [0,\infty)$, and draw a score directly from distribution $\mathcal{D}_l$. Concretely, suppose $\mathcal{X}$ is a finite set of non-negative integers, each $\mathcal{D}_l$ is a probability mass function over $\mathcal{X}$, and $f$ is the identity function. 

\textbf{Greedy Action.} Most bandit algorithms have some notion of a greedy exploitation. What is an analogous objective for the top-$k$ bandit setting? Notice that only the highest $k$ scores are useful. Hence, we cannot adopt the greedy action from the standard bandit setting. Instead, a greedy action in our context is one that maximizes the expected marginal gain in STK. The expected marginal gain from sampling $D_l$ is

\begin{align}
    &\mathbb{E}[\Delta_{t,l}] = \mathbb{E}[\text{STK}(S_t) - \text{STK}(S_{t-1})] \nonumber\\
    &= \sum_{x_t \in \mathcal{X}} p(x_t | D_l) \cdot \mathbb{I}(f(x_t) > (S_{t-1})_{(k)}) \cdot (f(x_t) - (S_{t-1})_{(k)})
    \label{eq:marginal}
\end{align}

where $x_t \sim \mathcal{D}_l$, and $p(x_t | \mathcal{D}_l)$ is the probability of outcome $x_t$ in $\mathcal{D}_l$. 

This equation holds since any new score greater than $(S_{t-1})_{(k)}$ will ``kick out'' $(S_{t-1})_{(k)}$ in the running solution. As a result, we might prioritize an arm with a fat tail and a lower mean over another arm with a thinner tail and a higher mean.

\textbf{Exploration-Exploitation Tradeoff.} The term $(S)_{(k)}$ can exhibit arbitrarily large local curvature. To illustrate, let $k = 2$ and $\mathcal{X} = {0, 1, \dots, 100}$. Further, let $S_2 = {0, 0}$, $S_3 = {0, 0, 100}$, and $S_4 = S_5 = {0, 0, 100, 100}$. Then, the marginal increase in $S_{(k)}$ is first 0, then 100, then 0. Consequently, there is no guarantee that an arm profitable in one iteration will remain profitable in the next, and vice versa. Therefore, an Upper Confidence Bound (UCB) style algorithm, which prioritizes exploring promising arms, is unsuitable. Instead, we explore each arm uniformly. 

\textbf{Algorithm description.} With these considerations, our proposed algorithm for the discrete domain is an $\varepsilon$-greedy algorithm. In each iteration, with some probability $\varepsilon_t$, the agent explores a uniformly random arm. Otherwise, it exploits by choosing the arm that maximizes $\mathbb{E}[\Delta_{t,l}]$, as formulated in Equation~\ref{eq:marginal}. $\mathbb{E}[\Delta_{t,l}]$ is estimated by maintaining certain statistics. Let $N_l$ be the number of times arm $l$ has been visited, for all $l \in [L]$. Let $N_{l, x}$ be the number of times $\mathcal{D}_l$ was probed and returned a score of $x$, for all $l \in [L]$ and $x \in \mathcal{X}$. The greedy arm $l^G_t$ is given by
\begin{equation}
l^G_t = \underset{l \in [L]}{\mathrm{argmax}} \sum_{x \in \mathcal{X}} \frac{N_{l,x}}{N_l} \max \left(x - (S_{t-1})_{(k)}, 0 \right).
\end{equation}

We set the exploration chance to decrease at a rate of $\varepsilon_t = t^{-1/3}$, since exploration to refine empirical probability estimates is most valuable early on.

\begin{algorithm}[htb]
    \caption{Practical $\varepsilon$-Greedy for Top-$k$ MAB}\label{alg:cont-epsgreedy}
    \KwData{Bucket count $B$, cardinality constraint $k$, distributions $D_1, \cdots, D_L$, initial maximum $\alpha$, scaling parameter $\beta \in [1, \infty)$, fallback frequency $F \in [0, 1]$}.
    \KwResult{$k$ highest scoring elements at iteration limit $T > 0$.}
    $H_l \gets $ empty uniform histogram with range $[0, \alpha]$ and $B$ buckets for all $l \in [L]$\;
    $PQ \gets $ priority queue with cardinality constraint $k$\; 
    \For{$1 \le t \le T$}{
        \If{after $0.3\times n$ iterations and every $F \times n$ iterations}{
            Check and apply fallback condition, if applicable\;
        }\ElseIf{$\text{with probability } t^{-1/3}$}{
            $l_t \gets \text{uniformly random from } [L]$
        }\Else{
            $l_t \gets $ arm that maximizes $\mathbb{E}[\Delta_{t,l}]$\;
        }
        $x_t \gets f(\mathrm{sample}(D_{l_t}))$\;
        \If{$x_t > \text{maximum range in } H_{l_t}$}{
            Extend $H_{l_t}$'s range to $[0, \beta \cdot x_t]$\;
        }
        Insert $(x_t, s_t)$ to $PQ$.\;
        \If{$PQ_{(k)} > \text{upper border of second lowest bin}$}{
            \text{Extend lowest bucket for} $H_{l_t}$
        }
        Update $H_{l_t}$ with $x_t$\;
        \If{$D_{l_t}$ is empty}{
            Drop $D_{l_t}$\;
        }
    }
    \Return $PQ.\text{Values}()$
\end{algorithm}

\subsection{Practical Variant}
\label{sec:practicalvar}

Although the algorithm described in \S~\ref{sec:discrete} is most amenable for analysis, we do not use it in practice. Instead, we implement Algorithm~\ref{alg:cont-epsgreedy}, which generalizes to arbitrary domain $\mathcal{X}$ and continuous scores while being efficient in practice. 

\subsubsection{Algorithm Description} In this version, we maintain a priority queue of the $k$ highest scores seen so far. 
The queue is implemented using a cardinality-constrained min-max heap~\cite{atkinson1986min}. 
The algorithm uses histograms to model the continuous distribution $D_l$. A histogram stores three pieces of information: the number of visits to each arm, the bin borders, and the number of times each bin has been sampled. It initializes the histograms as empty equi-width histograms with range $[0, \alpha]$ for some small constant $\alpha$ and $B$ bins. During each exploitation step, the algorithm computes $\mathbb{E}[\Delta_{t,l}]$ under the uniform value assumption, interpreting each histogram as a piecewise-continuous function that is constant within each bin.
Ties are broken randomly.

\subsubsection{Indexing scheme} \label{sec:index}

So far, we assumed  the existence of a flat index that clusters the elements. In practice, we adopt the \text{VOODOO} index from He et al.~\cite{he2020method}, which creates a hierarchical clustering of the search domain based on a task-independent and inexpensive vectorization scheme. 
The intuition is that elements with similar vector representations will likely have similar scores. Hence, grouping similar elements into a cluster allows us to exploit this statistical information~\cite{anderson2016input}. The hierarchy groups similar clusters into the same subtree to further benefit from the heuristic~\cite{he2020method}. Our goal is to exploit any useful statistical information captured by the index and to incur low overhead in the worst case.

At a high level, index construction has three phases: vectorization, clustering, and tree building.
We vectorize images using pixel values. For tabular data, we impute and normalize numeric and boolean columns. We then apply $k$-means clustering over the elements' vector representations. 

We take a subsample for clustering if the dataset is large. Finally, we build a dendrogram of the cluster centroids using hierarchical agglomerative clustering (HAC) with average linkage. 
The index construction process is fast, with a nearly linear $O(n L^3)$ time complexity, since $L \ll n$, and a small constant factor thanks to cheap vectorization methods.

Each cluster in the index is an arm. 
Similar to He et al., we run our bandit algorithm over clusters in each layer of the index. The histogram of each cluster approximates the scores of the UDF for all points in its descendant clusters. Upon selecting a cluster, its children constitute the collection of arms that the agent can pull in the next bandit loop.  

\subsubsection{Fallback strategy} \label{sec:fallback}

The intuition behind the index may not always hold. There are two main failure modes: the tree might be ineffective, or the clustering itself may not contain useful information. To cope with the failure modes, He et al. proposed dynamic index reconstruction and scan fallback mode. We design an analogous fallback  strategy.

In our scenario, as our objective function is not monotone, an index that was useful in earlier iterations may no longer be useful in later iterations. As such, we check the failure conditions periodically as opposed to only once. We first process 30\% of the dataset without checking the failure condition to ensure that histogram sketches are reasonably accurate. Then, we re-test the fallback condition after every 1\% of the dataset has been processed.

The tree index is not effective if the greedy arm is not the arm chosen by the hierarchical bandit during a greedy iteration. 
This may happen if the greedy arm is in the same subtree as some other ``bad'' arms~\cite{he2020method} 
To test this condition, we first compute estimated marginal gains of each leaf cluster using the histograms. 
The cluster with the highest marginal gain is the greedy arm. 
Then, we simulate the hiararchical bandit navigating down the tree index, choosing the greedy child in each layer. 
If the leaf cluster chosen by this procedure is not the greedy arm, then the fallback condition holds. 
If so, we turn the index into a flat partition, removing the tree while preserving the clustering.

\textbf{Clustering fallback.}

The clustering is not effective if greedy exploitation would result in a lower slope in the STK versus time curve than a simple uniform sample. 
We fall back to a uniform sample rather than a linear scan, as the former is better suited for the any-time query model. 
We estimate the slope of the two competing algorithms at iteration $t$ as follows. 

\begin{equation*}
    \text{slope}_{\text{bandit}} \approx \frac{\mathrm{argmax}_{l \in L}\; \mathbb{E}[\Delta_{t,l}]}{\text{scoring function latency} + \text{bandit latency}},
\end{equation*}
\begin{equation*}
    \displaystyle
    \text{slope}_{\text{sample}} \approx \frac{\sum_{l \in [L]} |D_l| \times \mathbb{E}[\Delta_{t, l}]}{(\sum_{l \in L} |D_l|) \times \text{scoring function latency}}.
\end{equation*}
Here $|D_l|$ is the remaining size of the $l^{\text{th}}$ cluster and $\mathbb{E}[\Delta_{t, l}]$ is the expected marginal gain in STK after sampling from the $l^{\text{th}}$ cluster. Scoring function latency and bandit latency are measured dynamically. 
If $\text{slope}_{\text{sample}} > \text{slope}_{\text{bandit}}$, then we shuffle all remaining elements, then scan from the shuffled list.

\subsubsection{Advanced histogram maintenance} \label{sec:histmaintenance}

Our algorithm makes the uniform value assumption, which does not always hold. To remedy discrepancies between the histogram sketch and the true distribution, we implement several histogram maintenance strategies.

\textbf{Re-binning.} There are two issues with the histogram approach. First, $S_{(k)}$ increases over iterations. Consequently, information in the lower bins of the histogram becomes irrelevant, whereas the higher ranges of the domain require greater precision. Second, if the maximum value in the domain exceeds the initial estimate, the histogram may fail to capture important distributional information. While increasing bin count could alleviate both problems, we aim to ensure bounded overhead. Instead, we implement methods to extend the lowest bin of the histogram (Figure~\ref{fig:histogram-2}) and to extend the total range of the histogram (Figure~\ref{fig:histogram-3}). Both methods use the uniform value assumption. We re-bin the lowest bin when $S_{(k)}$ is greater than the upper limit of the second lowest bin. This means that the two lowest bins can be combined and still convey the same amount of useful information. We use the parameter $\beta$ to slightly overestimate the true maximum range when a large score is sampled, with a default value of $\beta = 1.1$. Any $\beta \in [1, 2]$ should be reasonable.

\textbf{Empty child handling.} In the analysis, we assume that the size of each arm is infinite. This assumption does not hold if $T$ is large or if some arms have few elements. Thus, a leaf cluster in the index could run out of fresh elements. In such cases, we drop the leaf node and recursively drop its parent if it no longer has any children. This raises a potential issue: the parent cluster's histogram may now be outdated, potentially causing performance degradation if one of the parent's children was ``good'' and the other was ``bad.'' To remedy this issue, we implement a method to subtract one histogram from another using the uniform value assumption. As an edge case, the child may have different bin borders than the parent, resulting in bins with negative counts. We always round up the histogram's bin counts to zero if they become negative, to avoid numerical issues.

\subsubsection{Batched processing} Algorithm~\ref{alg:cont-epsgreedy} assumes that the histograms are updated after each iteration. In practice, batching model inference yields significant speedups by amortizing GPU latency and parallelizing computation.
Batching complicates the exploration rate guarantees of our algorithm, but we find that dividing $t$ by the batch size suffices.

In extreme cases, the optimal batch size could be large compared to the number of elements in an arm. This scenario occurs if many elements fit into the GPU memory.
In such cases, our bandit algorithm may not benefit from much statistical optimization, and the \scan baseline could be much faster. In our experiments, our batch sizes are small compared to the size of arms, so this extreme case does not arise. 

\subsubsection{Implementation details} Our proposed algorithm is implemented as a standalone library, written in Python. We assume that the index fits into memory, which holds for all datasets used in our experiments. Thus, a simple JSON file suffices as the index. We further assume that each element in the search domain has a unique string ID. These assumptions can be relaxed with future engineering efforts. In our library, a user-defined ``sampler'' function takes an ID and additional parameters as input, and returns an object---the element itself---of arbitrary type. Another user-defined scoring function takes the object as input and returns a non-negative integer. These functions also accept an array of inputs for batching. We write our scoring functions using standard Python data science libraries, such as PyTorch. 

\begin{figure}
    \centering
    \subfloat[]{
        \includegraphics[width=0.33\columnwidth]{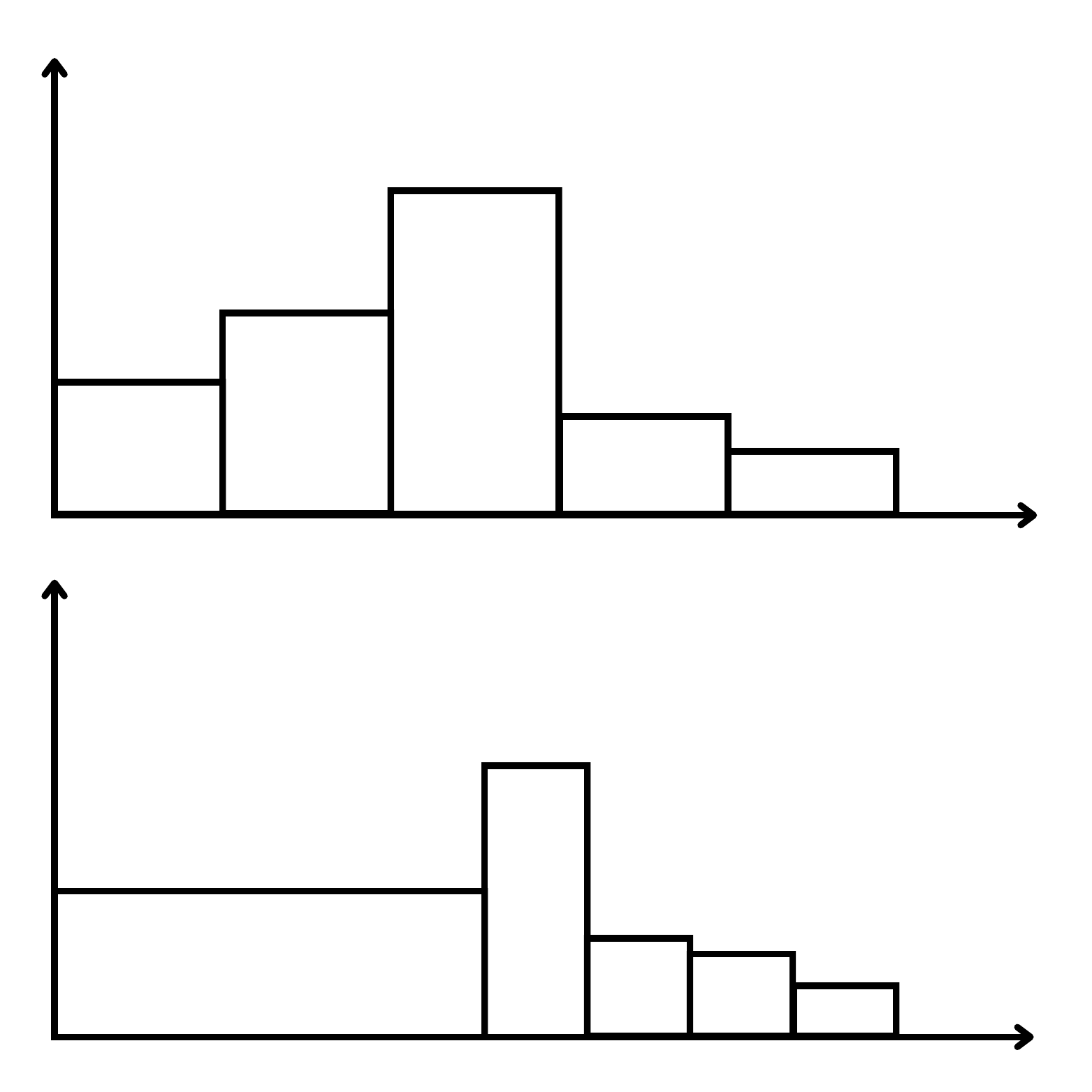}
        \label{fig:histogram-2}
    }
    \subfloat[]{
        \includegraphics[width=0.33\columnwidth]{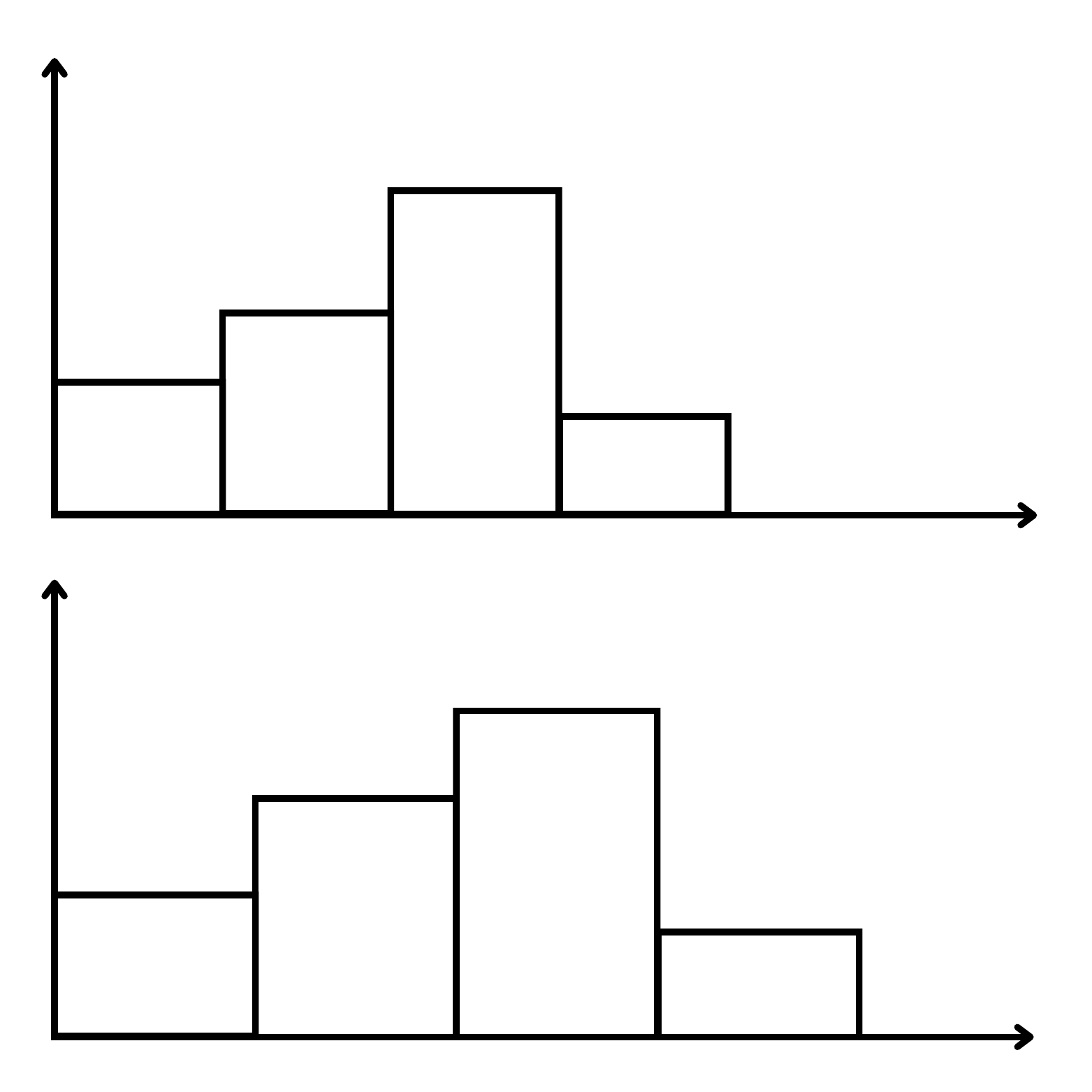}
        \label{fig:histogram-3}
    }
    \subfloat[]{
        \includegraphics[width=0.33\columnwidth]{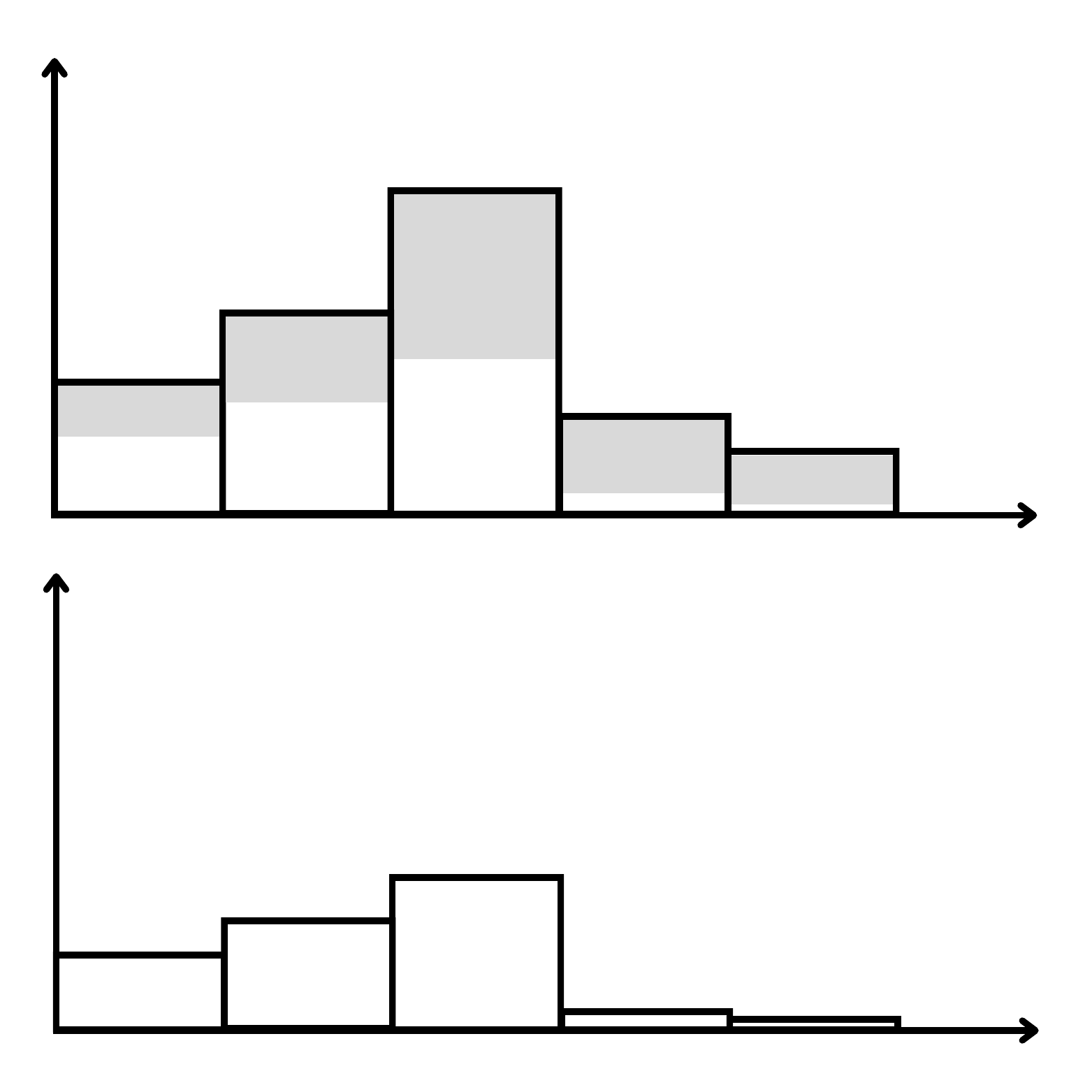}
        \label{fig:histogram-4}
    }
    \caption{A visualization of our histogram maintenance strategy. (a) Extending the lowest bin as $R_{(k)}$ value grows. (b) Extending the range to contain unexpectedly large scores, and (c) subtracting a histogram (shaded) from another histogram.} 
    \Description{(a) The lowest several bins of the histogram is consolidated into one, and the remaining bins are split into smaller bins. (b) The bins of a histogram are lengthened out to increase the maximum range of legal values. (c) The height of a histogram is decreased equal to the amount subtracted.}
    \label{fig:histogram}
\end{figure}

\subsubsection{Summary and End-to-End Example}

We now return to Example~\ref{ex:running}, and review the full practical workflow.

\begin{example}
    Alice hypothesizes that used cars that have similar feature values will have similar valuations. As such, she clusters listings based on their feature values using $k$-means clustering. She then builds a tree that groups close clusters into the same subtree. Then, she deploys the following algorithm. 
    \begin{enumerate}
        \item It initializes an empty histogram for all nodes in the tree, including leaf clusters. It also initializes a priority queue that has a maximum capacity of at most $k$ elements. The queue will contain (score, element) tuples. 
        \item In each iteration, the algorithm chooses a leaf cluster by invoking the $\varepsilon$-greedy bandit algorithm on each layer of the tree index, which:
        \begin{enumerate}
            \item With a decreasing probability, chooses a random child. 
            \item Otherwise, chooses the child that greedily maximizes marginal gain in STK. 
            \item This process repeats until it reaches a leaf cluster. 
        \end{enumerate}
        \item It takes a sample from the chosen leaf and computing the sample's score. 
        \item It updates the histograms of the leaf node and all of its parent nodes in the tree. The histogram class adjusts its bin borders when $S_{(k)}$ becomes too large or if a score that exceeds its limit is sampled. 
        \item After an initial threshold, the algorithm periodically checks the failure conditions. It may fall back to a flat index or a uniform sample over the remaining elements. 
        \item When Alice is satisfied with the solution quality, she terminates and retrieves the elements in the priority queue. 
    \end{enumerate}
\end{example}

\section{Analysis} \label{sec:analysis}

In this section, we establish a theoretical regret bound. 
Our main result (Theorem~\ref{thm:regret}) states that Algorithm~\ref{alg:cont-epsgreedy} finds a solution whose STK is lower bounded by $(1-\exp(-1-1/2T)) \text{OPT} - O(T^{2/3})$ in expectation.
That is, the quality of running solution approaches a constant factor of $1-1/e \approx 0.63$ with a multiplicative and an additive regret.

To show this, we will first assume that all $L$ distributions of arms are known. 
Under this assumption, we will reduce the problem of assigning budgets to each arm as a stochastic diminishing returns (DR) submodular maximization problem (\S~\ref{sec:algorithm-monotone-dr}). 
Intuitively, assigning a larger budget increases the expected quality of the solution, but larger budgets have diminishing returns.
A simple greedy algorithm achieves $(1-1/e)$-approximation to the optimal in expectation~\cite{asadpour2016maximizing}. 
Finally, we generalize the result to unknown distributions (\S~\ref{sec:algorithm-regret}). 
We show that our algorithm, with high probability, chooses a nearly optimal arm. 
We then apply this to a standard formula for analyzing submodular maximization algorithms. 

\subsection{Objective Function Analysis} \label{sec:algorithm-monotone-dr}

In the classical MAB setting, the reward obtained from a distribution contributes directly to the overall objective. 
Thus, the greedy action is optimal. 
In Top-$K$ MAB, however, the greedy action may be suboptimal. 
Let there be distributions $D_1$ and $D_2$, where $D_1$ has a higher mean than $D_2$, but $D_2$ has a higher variance. 
Suppose we are about to make the first sample ($t = 1$). 
If $T \gg k$, it may be optimal to sample from $D_2$, even though $D_1$ has a higher mean. 

The main objective of this section is to establish that the value of a greedy action is within a constant factor to optimal. 
The key idea is that the STK objective exhibits a diminishing returns property. Specifically, it is monotone and DR-submodular~\cite{soma2018maximizing} w.r.t. the sampling budget allocated to each cluster. Then, we reduce the top-$k$ sampling problem, under known distributions, as the \emph{stochastic monotone submodular maximization problem under matroid constraint}, which is NP-hard but admits a $(1-1/e)$-approximate adaptive greedy algorithm~\cite{asadpour2016maximizing}. 

\noindent\textbf{Preliminaries.} Let us define the desirable properties. 
Let $S \in \mathcal{M}(\mathbb{Z})$ be a multiset of integers, and let $f: \mathcal{M}(\mathbb{Z}) \to \mathbb{Z}^+$ be a function that returns a number given a multiset. 
An example of such function $f$ is the STK function (Equation~\ref{eq:stk}).
Let us denote the multiplicity of element $x$ in multiset $S$ as $m_S(x)$. Given two multisets $X, Y \in \mathcal{M}(\mathcal{X})$, we say that $X \le Y$ when $m_X(x_i) \le m_Y(x_i)$ for all $x_i \in \mathbb{Z}$. 
For example, $\{0, 1\} \le \{0, 0, 1, 1, 1\}$, but $\{0, 0, 1\}$ and $\{0, 1, 1\}$ are incomparable.
The $\le$ operator defines a lattice over multisets. 

\noindent Then, $f$ is \emph{monotone} iff
\begin{equation}
    f(X) \le f(Y) \quad
\end{equation}
for all $X, Y \in \mathcal{M}(\mathbb{Z})$ where $X \le Y$. 

\noindent Next, a function $f$ is \emph{DR-submodular over integer lattice} iff
\begin{equation}
    f(X + x_i) - f(X) \ge f(Y + x_i) - f(Y)
\end{equation}
for all $X, Y \in \mathcal{M}(\mathbb{Z})$ s.t. $X \le Y$, and $x_i \in \mathbb{Z}$. 

\noindent\textbf{Analysis of STK function.} With the definitions established, we now show that the STK objective function is monotone and DR-submodular. 
Intuitively, adding new elements to some set never decreases its STK. Furthermore, adding an element to a small set has potential to significantly improve its STK, whereas adding the same element to a larger set improves STK by a smaller amount.

\begin{theorem}\label{thm:g-monotone-dr}
    $\text{STK}(S)$ is monotone and DR-submodular.
\end{theorem}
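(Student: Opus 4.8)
The plan is to establish both properties from a single clean identity for the marginal gain of inserting one element. First I would pass to a \emph{padded} view of a multiset: since every score is non-negative, appending $k$ copies of $0$ to $S$ changes neither $\text{STK}(S)$ nor the marginal gain of inserting any new value, and it forces the padded multiset $\bar S$ to have at least $k$ elements. Writing $\theta(S)$ for the $k$-th largest element of $\bar S$ (equivalently, $\theta(S) = S_{(k)}$ when $|S|\ge k$ and $\theta(S)=0$ otherwise), a short case analysis on where an inserted value $v$ lands in the sorted order of $\bar S$ shows
\begin{equation*}
    \text{STK}(S + v) - \text{STK}(S) \;=\; \max\bigl(v - \theta(S),\, 0\bigr)
\end{equation*}
for every $v$. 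This identity is the workhorse for the rest of the argument.

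Monotonicity then follows immediately: if $X \le Y$, decompose $Y$ as $X$ with the extra elements inserted one at a time; each inserted value $v$ is non-negative, so it changes $\text{STK}$ by $\max(v-\theta(\cdot),0)\ge 0$, hence $\text{STK}(X)\le \text{STK}(Y)$. (Equivalently, one may observe that non-negativity gives $\text{STK}(S)=\max\{\,\mathrm{sum}(A): A\le S,\ |A|\le k\,\}$, and every sub-multiset of $X$ is a sub-multiset of $Y$.)

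For DR-submodularity I would reduce the claim to monotonicity of the threshold $\theta$. Because $t\mapsto \max(v-t,0)$ is non-increasing, the gain identity yields
\begin{equation*}
    \text{STK}(X + v) - \text{STK}(X) = \max(v-\theta(X),0) \;\ge\; \max(v-\theta(Y),0) = \text{STK}(Y + v) - \text{STK}(Y)
\end{equation*}
as soon as $\theta(X)\le\theta(Y)$. To get the latter, use the counting description of order statistics: for a multiset $M$ with $|M|\ge k$, $\theta(M)=\max\{\, v : |\{x\in M : x\ge v\}|\ge k \,\}$. If $X\le Y$ then $\bar X\le\bar Y$, so $|\{x\in\bar X : x\ge v\}|\le|\{x\in\bar Y : x\ge v\}|$ for every $v$; hence every $v$ witnessing $\theta(\bar X)\ge v$ also witnesses $\theta(\bar Y)\ge v$, i.e.\ $\theta(X)\le\theta(Y)$. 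Chaining the two displays proves the DR-submodular inequality.

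The only delicate point — the one I would write out most carefully — is the boundary regime $|S|<k$, where the ``$k$-th largest'' is not literally an element of $S$ and the marginal gain must be exactly $v$ rather than $\max(v-S_{(k)},0)$; the zero-padding device is precisely what collapses the $|S|<k$ and $|S|\ge k$ cases into one formula, after which everything is routine. I would also note explicitly that monotonicity genuinely relies on non-negativity of the scores (it fails for $X=\emptyset\le\{-1\}=Y$ with $k=1$), whereas the DR-submodular step above goes through verbatim over the full integer lattice.
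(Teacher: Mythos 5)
Your proof is correct and takes essentially the same route as the paper: both arguments rest on the marginal-gain identity $\text{STK}(S + v) - \text{STK}(S) = \max\bigl(v - S_{(k)}, 0\bigr)$ (the paper writes it with an indicator) combined with monotonicity of the $k$-th largest element under the multiset order, i.e.\ your $\theta(X) \le \theta(Y)$ is the paper's $(S_1)_{(k)} \le (S_2)_{(k)}$. Your zero-padding device and the counting characterization of $\theta$ merely make explicit two points the paper leaves implicit (the $|S| < k$ boundary case and the ``clearly'' step), which is a welcome but not structurally different refinement.
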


\begin{proof}
    Let $x \in \mathcal{X}$. Then, 
    \begin{equation}\text{STK}(S \cup \{x\}) = \begin{cases}
        \text{STK}(S) + x - S_{(k)} &\text{if } x \ge S_{(k)}\\
        \text{STK}(S) &\text{otherwise.}
    \end{cases}\end{equation}
    Since adding an element to $S$ does not decrease \text{STK}, that is $\text{STK}(S \cup \{x\}) \ge \text{STK}(S)$,  the function $\text{STK}(S)$ is monotonic. Next, let $S_1, S_2$ be multisets such that $S_1 \le S_2$. Then, we  prove that 
    \begin{equation}
        \text{STK}(S_1 \cup \{x\}) - \text{STK}(S_1) \ge \text{STK}(S_2 \cup \{x\}) - \text{STK}(S_2).
        \label{eq:det-dr}
    \end{equation}
    This is equivalent to the following. 
    \begin{equation}
        \mathbb{I}(x \ge {(S_1)}_{(k)}) ( x - {(S_1)}_{(k)} ) \ge \mathbb{I}(x \ge {(S_2)}_{(k)}) ( x - {(S_2)}_{(k)} )
        \label{eq:det-dr-2}
    \end{equation}
    
\noindent Clearly ${(S_1)}_{(k)} \le {(S_2)}_{(k)}$ since $S_1 \le S_2$. Hence, 
    \begin{equation}
        x - {(S_1)}_{(k)} \ge x - {(S_2)}_{(k)} \ge 0. \label{eq:g-dr-1}
    \end{equation}
    By total ordering of $\mathbb{Z}$, either $x < {(S_1)}_{(k)}$, ${(S_1)}_{(k)} \le x < {(S_2)}_{(k)}$, or ${(S_2)}_{(k)} \le x$. Hence, 
    \begin{equation}
        \mathbb{I}(x \ge {(S_1)}_{(k)}) \ge \mathbb{I}(x \ge {(S_2)}_{(k)}) \ge 0. \label{eq:g-dr-2}
    \end{equation}
    By Equation~\ref{eq:g-dr-1} and Equation~\ref{eq:g-dr-2}, Equation~\ref{eq:det-dr-2} always holds, proving DR-submodularity. 
\end{proof}

\noindent\textbf{Analysis of budget oracle.} 
A \textit{budget} restricts how many times we can sample from each arm or cluster. 
For example, suppose are just two clusters A and B that divide a dataset in half. A valid budget might be $\{(A, 2), (B, 10)\}$ which means to sample from A twice and sample from B 10 times.
Assume   $\text{BS}: (\mathbb{Z}^+_0)^L \to \mathbb{R}$ is a function that evaluaes a budget allocation. 
It takes as input a mapping from each of the $L$ arms/clusters to a non-negative integer. 
Then, it returns the value of that budget. 
Let $X$ be a budget, and let $X_l$ be the budget allocated to $\mathcal{D}_l$.
We define $S_r$ to be a random variable representing the outcome of the following procedure. 

\begin{proc}
    Start with multiset $S = \emptyset$. Then, sample the $l$\tth{} cluster $x_l$ times independently, for each $l \in [L]$. Add all scores to the running set of scores $S$ and return $S$. We define $\text{BS}(X)$ to be 
    \begin{equation}
        \text{BS}(X) = \mathbb{E}[\text{STK}(S_r)].
    \label{eq:bs}
    \end{equation}  
    \label{def:greedy-budgeted-sample}
\end{proc}

Then, the following theorem holds. 
\begin{theorem}\label{thm:h-monotone-dr}
    $\text{BS}(X)$ is monotone and DR-submodular. 
\end{theorem}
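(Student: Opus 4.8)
The plan is to lift the monotonicity and DR-submodularity of $\text{STK}$ established in Theorem~\ref{thm:g-monotone-dr} to $\text{BS}$ via a coupling argument, using that $\text{BS}(X) = \mathbb{E}[\text{STK}(S_r)]$ is an expectation over the random sampling in Procedure~\ref{def:greedy-budgeted-sample}. I would first fix, once and for all, for each cluster $l \in [L]$ an infinite i.i.d.\ sequence $\xi_{l,1}, \xi_{l,2}, \dots \sim \mathcal{D}_l$, with all sequences mutually independent, and define for any budget $Z \in (\mathbb{Z}^+_0)^L$ the multiset $S^Z := \bigcup_{l \in [L]} \{\xi_{l,1},\dots,\xi_{l,Z_l}\}$. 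By construction $S^Z$ has exactly the law of the $S_r$ from Procedure~\ref{def:greedy-budgeted-sample} under budget $Z$, so $\text{BS}(Z) = \mathbb{E}[\text{STK}(S^Z)]$, and the whole proof reduces to pointwise statements about $\text{STK}$ under this coupling.

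For monotonicity, suppose $X \le Y$. Then $X_l \le Y_l$ for each $l$, so $\{\xi_{l,1},\dots,\xi_{l,X_l}\}$ is a sub-multiset of $\{\xi_{l,1},\dots,\xi_{l,Y_l}\}$, hence $S^X \le S^Y$ holds pointwise on the sample space. Monotonicity of $\text{STK}$ then gives $\text{STK}(S^X) \le \text{STK}(S^Y)$ for every realization, and taking expectations yields $\text{BS}(X) \le \text{BS}(Y)$.

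For DR-submodularity, fix $X \le Y$ and a coordinate $l$, and write $\mathbf{e}_l$ for the $l$-th unit vector; the goal is $\text{BS}(X+\mathbf{e}_l) - \text{BS}(X) \ge \text{BS}(Y+\mathbf{e}_l) - \text{BS}(Y)$. I would extend the coupling by drawing one more sample $z \sim \mathcal{D}_l$ independent of all the $\xi$'s, and set $S^{X+\mathbf{e}_l} := S^X \cup \{z\}$ and $S^{Y+\mathbf{e}_l} := S^Y \cup \{z\}$, reusing the \emph{same} $z$ in both. A brief check confirms the marginals are correct: $S^X \cup \{z\}$ consists of $X_j$ i.i.d.\ draws from $\mathcal{D}_j$ for $j \ne l$ plus $X_l+1$ i.i.d.\ draws from $\mathcal{D}_l$, matching the law of $S_r$ under $X+\mathbf{e}_l$, and symmetrically for $Y$. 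Since $S^X \le S^Y$ pointwise under this coupling, applying the deterministic DR inequality for $\text{STK}$, Equation~\ref{eq:det-dr}, with $S_1 = S^X$, $S_2 = S^Y$, $x = z$ gives, for every realization,
\begin{equation*}
\text{STK}(S^X \cup \{z\}) - \text{STK}(S^X) \ge \text{STK}(S^Y \cup \{z\}) - \text{STK}(S^Y),
\end{equation*}
and taking expectations over the coupling, together with the marginal identifications, yields the claimed inequality for $\text{BS}$.

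The main obstacle is getting the DR-step coupling right: one must use the \emph{same} fresh sample $z$ for the two increments while simultaneously preserving $S^X \le S^Y$, and then verify that all four coupled multisets still carry their intended marginal distributions, so that the pointwise inequality can legitimately be integrated into a statement about $\text{BS}$. Everything else reduces to the already-established deterministic properties of $\text{STK}$ and monotonicity of expectation, with no further estimation required.
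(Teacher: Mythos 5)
Your proof is correct, and the monotonicity half is exactly the paper's argument: fix shared infinite i.i.d.\ tapes per arm, observe $S^X \le S^Y$ pointwise, apply Theorem~\ref{thm:g-monotone-dr}, and take expectations. The DR half, however, uses a genuinely different (and in fact sounder) coupling than the paper's. In Procedure~\ref{def:dr-experiment} the paper realizes the budgets $X+\chi_i$ and $Y+\chi_i$ by reading one \emph{further} entry off tape $i$, so the increment added on the $X$ side is $\xi_{i,X_i+1}$ while that on the $Y$ side is $\xi_{i,Y_i+1}$; since these are different elements whenever $X_i < Y_i$, the pointwise inequality $\text{STK}(S_{X'})-\text{STK}(S_X) \ge \text{STK}(S_{Y'})-\text{STK}(S_Y)$ is not a direct instance of Equation~\ref{eq:det-dr} and can actually fail for particular tape realizations (e.g.\ one arm, $k=1$, $X=(0)$, $Y=(1)$, tape $0,100$ gives $0 \ge 100$); the paper's conclusion then really rests on both increments being independent of the respective base sets and identically distributed, a step it leaves implicit. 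Your coupling—appending the \emph{same} fresh draw $z$ to both $S^X$ and $S^Y$—preserves the marginals (which you check) and makes Equation~\ref{eq:det-dr} applicable realization by realization, so the expectation step goes through with no hidden exchangeability argument. In short: same high-level strategy of lifting the deterministic properties of $\text{STK}$ through a shared-randomness coupling, but your choice of coupling at the DR step closes a gap that the paper's "holds in every possible outcome" claim glosses over.
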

Intuitively, sampling more never hurts. Sampling one more time is highly beneficial if the existing budget is tight, and less beneficial if the existing budget is plentiful.

\begin{proof}
    Analogous to multisets, two integer vectors $X, Y$ satisfy $X \le Y$ iff $x_i \le y_i$ element-wise. 
    Let $X, Y \in (\mathbb{Z}^+)^L$ be budget allocations s.t. $X \le Y$.
    For each $l$, suppose there is a pre-generated infinite tape of i.i.d. samples from $D_l$. 
    Let there be two agents: The first reads the first $x_l$ elements from the $l$th tape for each $l \in [L]$, and the second reads the first $y_l$ instead. 
    Then, regardless of the values realized on the tapes, we have $S_X \subseteq S_Y$. By monotonicity of $\text{STK}$ (Theorem~\ref{thm:g-monotone-dr}), $\text{STK}(S_X) \le \text{STK}(S_Y)$. 
    Since the inequality holds in all outcomes, regardless of the realizations on the tapes, we have 
    \begin{equation}
        \mathbb{E}[\text{STK}(S_X)] \le \mathbb{E}[\text{STK}(S_Y)]
    \end{equation}
    which proves the monotonicity of $\text{BS}$. 

    Next, let $X, Y \in (\mathbb{Z}^+)^L$ such that $X \le Y$, and let some $i \in [L]$. Then, we devise a similar experiment. 

    \begin{proc}\label{def:dr-experiment}
        For each $l \in [L]$, initialize an infinite tape of i.i.d. samples from the distribution $D_l$. 
        Initialize multisets $S_{X'}, S_{X}, S_{Y'}, S_Y$ as $\emptyset$. 
        These multisets have budgets $X + \chi_i, X, Y + \chi_i$, and $Y$, respectively. 
        Consider one multiset and its corresponding budget vector $Z$. 
        Then, add the first $z_l$ elements of the $l$th infinite tape for all $l \in [L]$. 
        Repeat for each of the four multisets. 
    \end{proc}
    By construction of Procedure~\ref{def:dr-experiment}, we have 
    $S_{X'} \ge S_X$, $X_{Y'} \ge S_Y$, and $S_Y \ge S_X$. 
    By DR-submodularity of $\text{STK}$ (Theorem~\ref{thm:g-monotone-dr}), 
    \begin{equation}
        \text{STK}(S_{X'}) - \text{STK}(S_X) \ge \text{STK}(S_{Y'}) - \text{STK}(S_Y).
    \end{equation}
    Since this holds in every possible outcome, 
    \begin{equation}
        \mathbb{E}[\text{STK}(S_{X'}) - \text{STK}(S_X)] \ge \mathbb{E}[\text{STK}(S_{Y'}) - \text{STK}(S_Y)].
    \end{equation}
    By linearity of expectation, 
    \begin{equation}
        \mathbb{E}[\text{STK}(S_{X'})] - \mathbb{E}[\text{STK}(S_X)] \ge \mathbb{E}[\text{STK}(S_{Y'})] - \mathbb{E}[\text{STK}(S_Y)].
    \end{equation}
    By construction of Procedure~\ref{def:dr-experiment}, 
    \begin{equation}
        \mathbb{E}[\text{STK}(S_{X'})] = \text{BS}(X + \chi_i), \quad \mathbb{E}[\text{STK}(S_X)] = \text{BS}(X),
    \end{equation}
    \begin{equation}
        \mathbb{E}[\text{STK}(S_{Y'})] = \text{BS}(Y + \chi_i), \quad \mathbb{E}[\text{STK}(S_Y)] = \text{BS}(Y).
    \end{equation}
    By substitution, 
    \begin{equation}
        \text{BS}(X + \chi_i) - \text{BS}(X) \ge \text{BS}(Y + \chi_i) - \text{BS}(Y)
    \end{equation}
    which proves DR-submodularity of $\text{BS}$. 
\end{proof}

Our final task is establishing $(1-1/e)$-approximation of the adaptive greedy procedure. 

\begin{remark}\label{thm:reduction}
    The top-$k$ bandit problem with known distributions reduces to a monotone submodular maximization problem under uniform matroid constraint.
\end{remark}

We let the objective function be $\text{BS}(X)$ defined in Equation~\ref{eq:bs}, which we proved to be monotone and submodular (Theorem~\ref{thm:g-monotone-dr}). 
The cardinality constraint $T$ is a uniform matroid. 
Hence, maximizing $\text{BS}(S)$ under uniform matroid constraint also maximizes the top-$K$ bandit with known distributions. 

\begin{corollary}\label{thm:adaptive-greedy}
    Adaptive greedy, which chooses distribution that greedily maximizes marginal gain of $\text{STK}(S)$ in each iteration, achieves $(1-1/e)$-approximation to the optimal adaptive algorithm.
\end{corollary}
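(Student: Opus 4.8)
The plan is to derive the corollary almost immediately from Theorem~\ref{thm:h-monotone-dr}, Remark~\ref{thm:reduction}, and the classical guarantee for the adaptive greedy policy on stochastic monotone submodular maximization under a uniform matroid constraint~\cite{asadpour2016maximizing}. The only real work is to make the reduction precise: to translate the integer-lattice DR-submodular objective $\text{BS}$ into the set-function language in which the $(1-1/e)$ guarantee is stated, and to check that the abstract ``adaptive greedy policy'' coincides with ``pick the arm maximizing $\mathbb{E}[\Delta_{t,l}]$.''

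First I would set up the expanded ground set. For each arm $l \in [L]$ introduce a countable sequence of \emph{slots} $(l,1),(l,2),\dots$, where selecting $(l,j)$ means ``draw the $j$th i.i.d.\ sample from $\mathcal{D}_l$.'' A policy that takes $x_l$ samples from arm $l$ corresponds to selecting the prefix $\{(l,1),\dots,(l,x_l)\}$; since the samples along a tape are i.i.d., restricting attention to such prefixes is without loss of generality by exchangeability. The realized state of slot $(l,j)$ is its drawn score, and the states of distinct slots are independent (i.i.d.\ within a tape, independent across tapes). The objective on a selected set of slots is $\text{STK}$ of the multiset of their realized scores; monotone DR-submodularity of $\text{BS}$ over the integer lattice (Theorem~\ref{thm:h-monotone-dr}) is precisely the statement that this is a monotone submodular set function on the expanded ground set, and ``at most $T$ samples total'' is the rank-$T$ uniform matroid on it. This is exactly the reduction asserted in Remark~\ref{thm:reduction}.

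Then I would invoke the main theorem of~\cite{asadpour2016maximizing}: for stochastic maximization of a monotone submodular function with independent item states under a uniform matroid constraint, the adaptive greedy policy---at each step add the feasible item with the largest expected marginal contribution conditioned on the states revealed so far---achieves a $(1-1/e)$-approximation to the optimal adaptive policy, which here is exactly the quantity behind $\text{OPT}$. All currently unused slots of a given arm $l$ are exchangeable, so ``which slot of arm $l$'' is immaterial and the greedy choice reduces to a choice of arm; and the expected marginal contribution of one fresh sample from arm $l$ at iteration $t$, given the running solution $S_{t-1}$, is exactly $\mathbb{E}[\Delta_{t,l}]$ as written in Equation~\ref{eq:marginal}. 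Hence adaptive greedy is the rule $l_t = \argmax_{l \in [L]} \mathbb{E}[\Delta_{t,l}]$, and its expected $\text{STK}$ after $T$ iterations is at least $(1-1/e)\,\text{OPT}$, which is the claim. The main (indeed only) obstacle is this last mile of the reduction: confirming that the prefix/exchangeability argument legitimately turns our per-iteration ``choose an arm'' decisions into the ``choose an item'' decisions of the matroid framework, and that the independence hypothesis of~\cite{asadpour2016maximizing} is met by the i.i.d.-tape construction already used to prove Theorem~\ref{thm:h-monotone-dr}; everything else is bookkeeping.
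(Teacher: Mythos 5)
Your proposal is correct and takes essentially the same route as the paper: the paper likewise reduces the known-distribution top-$k$ bandit to stochastic monotone submodular maximization under a uniform (rank-$T$) matroid constraint and then cites Theorem~4 of~\cite{asadpour2016maximizing} for the $(1-1/e)$ guarantee of adaptive greedy against the optimal adaptive policy. Your slot/tape construction and the identification of the greedy step with $\argmax_l \mathbb{E}[\Delta_{t,l}]$ merely make explicit the reduction the paper states as Remark~\ref{thm:reduction} and treats as a direct application.
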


Corollary~\ref{thm:adaptive-greedy} holds as a direct application of Theorem~4 in~\cite{asadpour2016maximizing}. 

\subsection{Regret Analysis}\label{sec:algorithm-regret}

Corollary~\ref{thm:adaptive-greedy} in \S~\ref{sec:algorithm-monotone-dr} establishes that, if distribution of all arms 
are known, then the adaptive greedy procedure achieves constant approximation. Of course, we cannot assume that the distributions are known. In this section, we show the following theorem. 

\begin{theorem}
    The expectation of STK by Algorithm~\ref{alg:cont-epsgreedy} at iteration limit $T$ is lower bounded by $\left(1 - e^{-1-\frac{1}{2T}}\right)\text{OPT} - \tilde{O}\left(T^{2/3}\right)$ if there is a finite collection of arms and the scoring domain is a finite set of non-negative integers. 
    \label{thm:regret}
\end{theorem}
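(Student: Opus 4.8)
The plan is to reduce to the $\varepsilon$-greedy variant of \S~\ref{sec:discrete} and then combine the submodular greedy recursion with a concentration argument. In the stated regime---finitely many arms, scores from a finite set of non-negative integers---Algorithm~\ref{alg:cont-epsgreedy} with enough buckets coincides with that theoretical variant: each histogram stores the exact empirical PMF over $\mathcal{X}$, and the re-binning / empty-child / fallback steps are no-ops or merely swap one valid choice for another, so it suffices to analyze the idealized $\varepsilon$-greedy. First I would record the greedy recursion. Write $v_{\max}=\max\mathcal{X}$ and $\text{OPT}=\mathbb{E}[\text{STK}(S_T^{\pi^*})]$ for the optimal adaptive policy. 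Applying the standard submodular-maximization argument to the monotone DR-submodular function $\text{BS}$ under the rank-$T$ uniform matroid (Theorem~\ref{thm:h-monotone-dr}, Remark~\ref{thm:reduction}, and the proof of Corollary~\ref{thm:adaptive-greedy} following Theorem~4 of \cite{asadpour2016maximizing}, which in fact yields the per-step bound) gives, for the true greedy arm $l^G_t=\argmax_{l}\mathbb{E}[\Delta_{t,l}\mid S_{t-1}]$,
\begin{equation*}
\mathbb{E}\!\left[\Delta_{t,l^G_t}\mid S_{t-1}\right]\;\ge\;\tfrac{1}{T}\bigl(\text{OPT}-\text{STK}(S_{t-1})\bigr).
\end{equation*}

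Next I would quantify the two ways Algorithm~\ref{alg:cont-epsgreedy} departs from this ideal. \emph{(i) Estimation error.} Each iteration explores a uniformly random arm with probability $\varepsilon_t=t^{-1/3}$, so a Chernoff bound shows that by iteration $t$ every arm has been sampled $\Omega(t^{2/3}/L)$ times with probability $\ge 1-\mathrm{poly}(T)^{-1}$; on that event a Hoeffding/DKW bound on the $|\mathcal{X}|$ cell probabilities gives $\|\hat p_l-p_l\|_\infty=\tilde{O}(t^{-1/3})$, hence $|\widehat{\mathbb{E}}[\Delta_{t,l}]-\mathbb{E}[\Delta_{t,l}\mid S_{t-1}]|\le v_{\max}|\mathcal{X}|\,\|\hat p_l-p_l\|_\infty=:\eta_t=\tilde{O}(t^{-1/3})$ for all $l$ simultaneously. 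Since the algorithm exploits the estimated-best arm $\hat l_t$, on this event $\mathbb{E}[\Delta_{t,\hat l_t}\mid S_{t-1}]\ge\mathbb{E}[\Delta_{t,l^G_t}\mid S_{t-1}]-2\eta_t$. \emph{(ii) Exploration waste.} On an exploration step the marginal gain is still non-negative by monotonicity of STK (Theorem~\ref{thm:g-monotone-dr}). Let $\mathcal{E}$ be the intersection over $t\le T$ of these concentration events, so $\Pr[\mathcal{E}^c]\le 1/T$ after a union bound. Combining (i), (ii), $0\le\text{OPT}-\text{STK}(S_{t-1})\le\text{OPT}$, and $\varepsilon_t\le 1$ yields, on $\mathcal{E}$,
\begin{equation*}
\mathbb{E}[\Delta_t\mid S_{t-1}]\;\ge\;\tfrac{1}{T}\bigl(\text{OPT}-\text{STK}(S_{t-1})\bigr)-\tfrac{\varepsilon_t}{T}\,\text{OPT}-2\eta_t.
\end{equation*}

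Finally I would unroll. Set $a_t=\mathbb{E}[\text{STK}(S_t)\,\mathbb{I}_{\mathcal{E}}]$; taking expectations and rearranging gives $\text{OPT}-a_t\le(1-1/T)(\text{OPT}-a_{t-1})+\varepsilon_t\text{OPT}/T+2\eta_t$, so with $a_0=0$,
\begin{equation*}
\text{OPT}-a_T\;\le\;(1-1/T)^T\,\text{OPT}\;+\;\sum_{t=1}^{T}\Bigl(\tfrac{\varepsilon_t}{T}\,\text{OPT}+2\eta_t\Bigr).
\end{equation*}
Using $\ln(1-1/T)\le -1/T-1/(2T^2)$ gives $(1-1/T)^T\le e^{-1-1/(2T)}$; treating the instance parameters $L,k,|\mathcal{X}|,v_{\max}$ as constants one has $\text{OPT}\le k\,v_{\max}=O(1)$ and $\sum_{t\le T}\varepsilon_t=\sum_{t\le T}\tilde{O}(\eta_t)=\tilde{O}(T^{2/3})$, so the error sum is $\tilde{O}(T^{2/3})$ (this is exactly why $\varepsilon_t=t^{-1/3}$ is chosen---it balances exploration cost $\sum\varepsilon_t$ against estimation slack $\sum\eta_t$, both $\tilde{O}(T^{2/3})$); and the event $\mathcal{E}^c$ costs at most $\Pr[\mathcal{E}^c]\,k\,v_{\max}=O(1/T)$, absorbed into the additive term. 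Hence $\mathbb{E}[\text{STK}(S_T)]\ge a_T\ge\bigl(1-e^{-1-1/(2T)}\bigr)\text{OPT}-\tilde{O}(T^{2/3})$.

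\textbf{Main obstacle.} The delicate part is coupling the two slacks cleanly. The recursion from \cite{asadpour2016maximizing} is an inequality about the \emph{exact} adaptive greedy against the optimal \emph{adaptive} benchmark, and I must show $\varepsilon$-greedy inherits it with only additive losses; this needs (a) self-referential concentration bookkeeping---the exploration count that guarantees accuracy is itself random, so the recursion only holds on $\mathcal{E}$, which must be threaded through all $T$ steps and paid for separately---and (b) bounding the error in the \emph{marginal} $\mathbb{E}[\Delta_{t,l}]$, which depends on the moving threshold $(S_{t-1})_{(k)}$, uniformly over arms and iterations, rather than merely the PMF error. Everything else is routine arithmetic.
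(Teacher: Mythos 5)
Your proposal follows essentially the same route as the paper's proof: the per-step greedy inequality $\mathbb{E}[\Delta_t]\ge\frac{1}{T}(\mathrm{OPT}-\text{STK}(S_{t-1}))$ imported from Asadpour--Nazerzadeh via the monotone DR-submodularity of BS, Hoeffding-based concentration under the $\varepsilon_t=t^{-1/3}$ schedule giving a per-step estimation slack of $\tilde{O}(t^{-1/3})$, and unrolling the resulting recurrence to obtain $(1-e^{-1-1/(2T)})\mathrm{OPT}-\tilde{O}(T^{2/3})$. The only differences are bookkeeping (you charge the dirty event once via a union bound and make the exploration-round waste an explicit $\varepsilon_t\mathrm{OPT}/T$ term, whereas the paper folds both into per-step factors $(1-O(t^{-2}))$ and the $\tilde{O}(t^{-1/3})$ slack), which does not change the argument.
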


Our proof strategy combines standard $\varepsilon$-greedy bandit analysis technique with the proof of Theorem~4 in~\cite{asadpour2016maximizing}. 
We divide all outcomes into the clean event and the dirty event. 
In the clean event, the algorithm estimates all bin values to high precision. 
The dirty event is the complement of the clean event. 
Since the dirty event occurs with low probability, the algorithm closes the gap between $\mathrm{OPT}$ and $\text{STK}(S_t)$ by a factor of $\left(\frac{1}{T} - \tilde{O}\left(T^{-1/3}\right)\right)$ due to monotone submodularity. 

\begin{proof}
    The exploration rate for Algorithm~\ref{alg:cont-epsgreedy} at iteration $t$ is given by $\varepsilon_t = t^{-1/3}$. Then by integration, the total expected number of exploration rounds by iteration $T$ is $t^{2/3}$. Let $X_t$ be the number of exploration rounds by iteration $t$, and let $X_t^l$ be the number of explorations on arm $D_l$ by iteration $t$. By the repeated applications of the Hoeffding bound, the following statements hold. 

    The total number of exploration rounds is close to expected with high probability. 
    \begin{equation}
        P\left[ \left| X_t - \mathbb{E}[X_t]\right| \le \tilde{O}\left(t^{-1/3}\right) \right] \le 1-t^{-2}
    \end{equation} 
    A similar bound holds for the number of exploration rounds per arm. We are multiplying the probability $1+L$ times to ensure that the total exploration rounds are close to expectation and that all $L$ arms' exploration rounds are close to expectation. 
    \begin{equation}
        P\left[ \left|X_t^l - \mathbb{E}[X_t^l]\right| \le \tilde{O}\left(t^{-1/3}\right)\; \forall l \right] \le {\left( 1-t^{-2}\right)}^{1+L}
    \end{equation} 
    
    Similarly, we upper bound the probability that the empirical probability estimate for each outcome $x_i$ for each arm is inaccurate compared to the ground truth. We denote all empirical estimates with an overline. This time, multiply the RHS an additional $LB$ times to ensure the bound holds for each arm-outcome pair. 
    \begin{equation}
        P\left[ \left| \overline{P}[X_l = x_i] - P[X_l = x_i] \right| \le \tilde{O}\left(t^{-1/3}\right) \; \forall l \; \forall b \right] \le {\left(1-t^{-2}\right)}^{1+L+LB} \label{eq:event-3}
    \end{equation}
    In Equation~\ref{eq:event-3}, $X_l$ is the random variable representing a sample from $D_l$, $P[X_l = x_i]$ is the probability that $X_l$ is a particular outcome $x_i$, and $\overline{P}[X_l = x_i]$ is the estimate of the probability using samples. Note that RHS of Equation~\ref{eq:event-3} is equivalent to $1-\tilde{O}\left(t^{-2}\right)$. 

    If Equation~\ref{eq:event-3} holds, then clearly
    \begin{equation}
        \left| \overline{\mathbb{E}[\Delta_{t,l}]}(S_{t-1}) - \mathbb{E}[\Delta_{t,l}](S_{t-1}) \right| \le \tilde{O}\left(t^{-1/3}\right).
        \label{eq:gain-approx}
    \end{equation}
    where $\mathbb{E}[\Delta_{t,l}]$ is the expected marginal gain in STK for sampling from $D_l$ and $S_{t-1}$ is the set obtained by the greedy procedure up to the previous iteration. 

A standard result in the known distribution, adaptive setting is
    \begin{equation}
        \mathbb{E}[\Delta_{t}] \ge \frac{1}{T}(\mathrm{OPT} - \text{STK}(S_t))
    \end{equation}
    where $\mathrm{OPT}$ is the expected score of the optimal algorithm~\cite{asadpour2016maximizing}. In the top-$k$ bandit setting, by Equation~\ref{eq:event-3} and~\ref{eq:gain-approx},
    \begin{equation}
        \mathbb{E}[\Delta_{t+1}] \ge \left(1-O\left(t^{-2}\right)\right) \cdot  \left( \frac{1}{T} (\mathrm{OPT} - \text{STK}(S_t)) - \tilde{O}\left(t^{-1/3}\right) \right)
    \end{equation}
    since with probability at least $1-O\left(t^{-2}\right)$, Algorithm~\ref{alg:cont-epsgreedy} chooses an arm with marginal gain at most $\tilde{O}\left(t^{-1/3}\right)$ smaller than the arm chosen by adaptive greedy with known distributions. 

    Then, by algebraic simplification,
    \begin{align}
        \mathbb{E}[\Delta_{t+1}] &\ge \frac{1}{T}(\text{OPT} - \text{STK}(S_t))\left(1 - O\left(t^{-2}\right)\right) - \tilde{O}\left(t^{-1/3}\right).
    \end{align}

    Next, let $\delta_t = \text{OPT} - \text{STK}(S_t)$. Then since 
    \begin{equation}
        \delta_{t+1} = \delta_t - \mathbb{E}[\Delta_{t+1}],
    \end{equation}
    by substitution,
    \begin{equation}
        \delta_t - \delta_{t+1} \ge \frac{1}{T} \delta_t \left(1 - O\left(t^{-2}\right)\right) - O\left(t^{-1/3}\right)
    \end{equation}
    which implies
    \begin{equation}
        \delta_{t+1} \le \delta_t \left(1 - \frac{1}{T} + \frac{1}{T} O\left(t^{-2}\right)\right) + \tilde{O}\left(t^{-1/3}\right).
    \end{equation}
    This gives us a recurrence relation for $\delta_t$ to upper bound it. For simplicity, let
    \begin{equation}
        a_t = 1 - \frac{1}{T} + \frac{1}{T}O(t^{-2}),
    \end{equation}

    Then by a standard application of the homogeneous recurrence relations formula, we have
    \begin{equation}
        \delta_T = \left( \prod_{t=0}^{T-1} a_t \right) \left( \text{OPT} + \sum_{i=0}^{T-1} \frac{O\left(i^{-1/3}\right)}{\prod_{j=0}^i a_j} \right). \label{eq:homogeneous}
    \end{equation}
    We wish to upper bound $\delta_T$, which means upper and lower bounding $\prod_{j=1}^{i} a_j$. We know $a_j$ is bounded by
    \begin{equation}
        1 - \frac{1}{T} \le a_j \le 1 - \frac{1}{T} + \frac{O(1)}{T}. \label{eq:at_bound} 
    \end{equation}
    Then, by substitution, 
    \begin{equation}
        \left(1 - \frac{1}{T}\right)^T \le \prod_{j=1}^{i} a_j \le \left(1 - \frac{1}{T} + \frac{O(1)}{T} \right)^T.
    \end{equation}
    \begin{equation}
        \exp\left( T \log\left( 1 - \frac{1}{T} \right) \right) \le \prod_{j=0}^{i} a_j \le \exp \left( T \log\left(1 - \frac{1}{T} + \frac{O(1)}{T} \right) \right)
    \end{equation}
    By Taylor expansion of the log terms and simplification, we obtain
    \begin{equation}
        e^{-1-\frac{1}{2i}} \le \prod_{t=0}^{i} a_j \le e^{-1+O(1)}.
    \end{equation}
    Substituting back into Equation~\ref{eq:homogeneous}, we get
    \begin{equation}
        \delta_T \le e^{-1-\frac{1}{2T}} \left(\text{OPT} +  \sum_{m=0}^{T-1} \tilde{O}\left(m^{-1/3}\right) e^{1+\frac{1}{2m}} \right).
    \end{equation}
    Since $e^{1+\frac{1}{2m}} \le O(1)$, 
    \begin{equation}
        \tilde{O}\left(m^{-1/3}\right) O(1) \le \tilde{O}\left(m^{-1/3}\right)
    \end{equation}
    so by integration,
    \begin{align}
        \delta_T &\le e^{-1-\frac{1}{2T}}\left( \text{OPT} + \tilde{O}\left(T^{2/3}\right) \right).
    \end{align}
    Now, since $\delta_T = \text{OPT} - \text{STK}(S_T)$, by substitution and rearranging,
    \begin{equation}
        \mathbb{E}[\text{STK}(S_T)] \ge \left(1 - e^{-1-\frac{1}{2T}}\right) \text{OPT} - \tilde{O}\left(T^{2/3}\right).
    \end{equation}
    
\end{proof}

Notice that as $T \to \infty$, we get
\begin{equation}
    \lim_{T \to \infty} e^{-1-\frac{1}{2T}} = e^{-1}
\end{equation}
Hence, for large $T$, the lower bound is approximately $(1-1/e)\text{OPT} - O(T^{2/3})$. Intuitively, this happens when the greedy rounds become indistinguishable as that of the known distribution setting, with an unavoidable $O(T^{2/3})$ regret incurred from exploration rounds. If this holds, then the algorithm approaches 63\% of the optimal. 

\section{Experiments} \label{sec:experiments}

In this section, we conduct extensive experiments to demonstrate the efficacy of our solution. We answer the following questions. 
\begin{enumerate}
    \item How does \ours perform for intrinsic and extrinsic quality metrics at different time steps?
    \item To what extent are the fallback strategies and histogram maintenance strategies described in \S~\ref{sec:algorithms} effective?
    \item How efficient is \ours compared to simpler baselines?
    \item How well does \ours generalize to different types of data, such as tabular and multimedia data?
    \item How well does \ours generalize to different types of scoring functions, such as regression and fuzzy classification models?
\end{enumerate}

\subsection{Preliminaries} \label{sec:expr-prelim}

Throughout this section, we report sum-of-top-$k$ (STK) as the target intrinsic objective, and Precision@K as the target extrinsic objective. Note that Recall@K is identical to Precision@K as the ground truth solution has $k$ elements. We also host high-resolution plots with additional metrics and configurations in our source code repository.

\subsubsection{Algorithms}

We implement the following algorithms in our standalone system. 
All algorithms sample without replacement. 

\begin{enumerate}
    \item \ours: Algorithm~\ref{alg:cont-epsgreedy} combined with the index of \S~\ref{sec:index}. We use $B = 8$, $\alpha = 0.1$, $\beta = 1.1$, and $F = 0.01$ by default.
    \item \ucb: A standard upper confidence bound (UCB) bandit algorithm combined with the index of \S~\ref{sec:index}. We set the exploration parameter as 1.0 and initialize the mean using query-specific prior knowledge. 
    \item \explore: A bandit which chooses a uniformly random non-empty child in each layer of the index of \S~\ref{sec:index}.  
    \item \sample: Uniform sampling over the entire search domain, implemented via pre-shuffling of the data, then performing a sequential scan. \sample represents the average case result of \scan, as there is no additional run-time overhead. 
    \item \scanbest, \scanworst: \scan over the domain where the elements are sorted in the best-case or worst-case order. This is meant to demonstrate theoretical limits. 
    \item \sortedscan: \scan over an in-memory sorted index built on a new column that contains pre-computed UDF function values. \sortedscan skips scoring function evaluation and priority queue maintenance.
\end{enumerate}

\subsubsection{Datasets}

\begin{enumerate}
    \item \textbf{Synthetic data:} We randomly generate $L$ normal distributions with $\mu \in [0,20]$ and $\sigma \in (0, 5]$. We then draw a fixed number of samples from each distribution, which then serves as the leaf clusters of the index. We build the dendrogram over the means of each cluster. There are 20 clusters and 2,500 samples per cluster.  
    \item \textbf{Tabular data:} \sloppy{We use the US Used Cars dataset~\cite{usedcars}, collected from an online reseller~\cite{cargurus}. We clean the data by projecting boolean and numeric columns, turning each column numeric, imputing missing values with column mean, then normalizing input features. 
    The cleaned dataset has 11 columns: three boolean columns (\texttt{frame damaged}, \texttt{has accidents}, \texttt{is new}), six numeric columns (\texttt{daysonmarket}, \texttt{height}, \texttt{horsepower}, \texttt{length}, \texttt{mileage}, \texttt{seller rating}), one target column (\texttt{price}), and one key column (\texttt{listing id}).
    The target feature (price) is used for model training but excluded from indexing and querying. We build an index over a subset of $100,000$ rows with 500 leaf clusters.} 
    
    \item \textbf{Multimedia data:} We use the ImageNet~\cite{deng2009imagenet} dataset. 
    We build the index and perform queries over a random subset of $320,000$ images. We scale each image to a 16x16x3 tensor, including the color channels, and flatten it to a vector. We then apply $k$-means clustering over a subsample of 100,000 images with 25 clusters. Finally, we assign all images to their closest cluster and build a dendrogram. 
\end{enumerate}

\subsubsection{Scoring Functions}

\begin{enumerate}
    \item \textbf{Synthetic data:} The scoring function for synthetic data is the simple ReLU function, $f(x) = \max(0, x)$, to ensure non-negativity. 
    \item \textbf{Tabular data:} We train a regression model to predict a listing's price, using XGBoost~\cite{chen2016xgboost} with default parameters, on a subset of 1 million rows. The train split is disjoint from the split used for indexing and query evaluation. We use a batch size of 1 on CPU for inference. 
    \item \textbf{Multimedia data:} We use a pre-trained ResNeXT-64 model's softmax layer to obtain its confidence that an image belongs to a particular label. Three  target labels were chosen randomly. We use a batch size of 400 on GPU for inference. 
\end{enumerate}

\subsubsection{Hardware Details} 
We ran all experiments on a dedicated server with a 13th Gen Intel i5-13500 (20 cores @ 4.8GHz) CPU, an NVIDIA RTX™ 4000 SFF Ada Generation GPU, and 64GB RAM running Ubuntu 22.04.

\subsection{Synthetic Data}

\begin{figure*}[htb]
    \centering
    \subfloat[STK]{
        \includegraphics[width=0.3\textwidth]{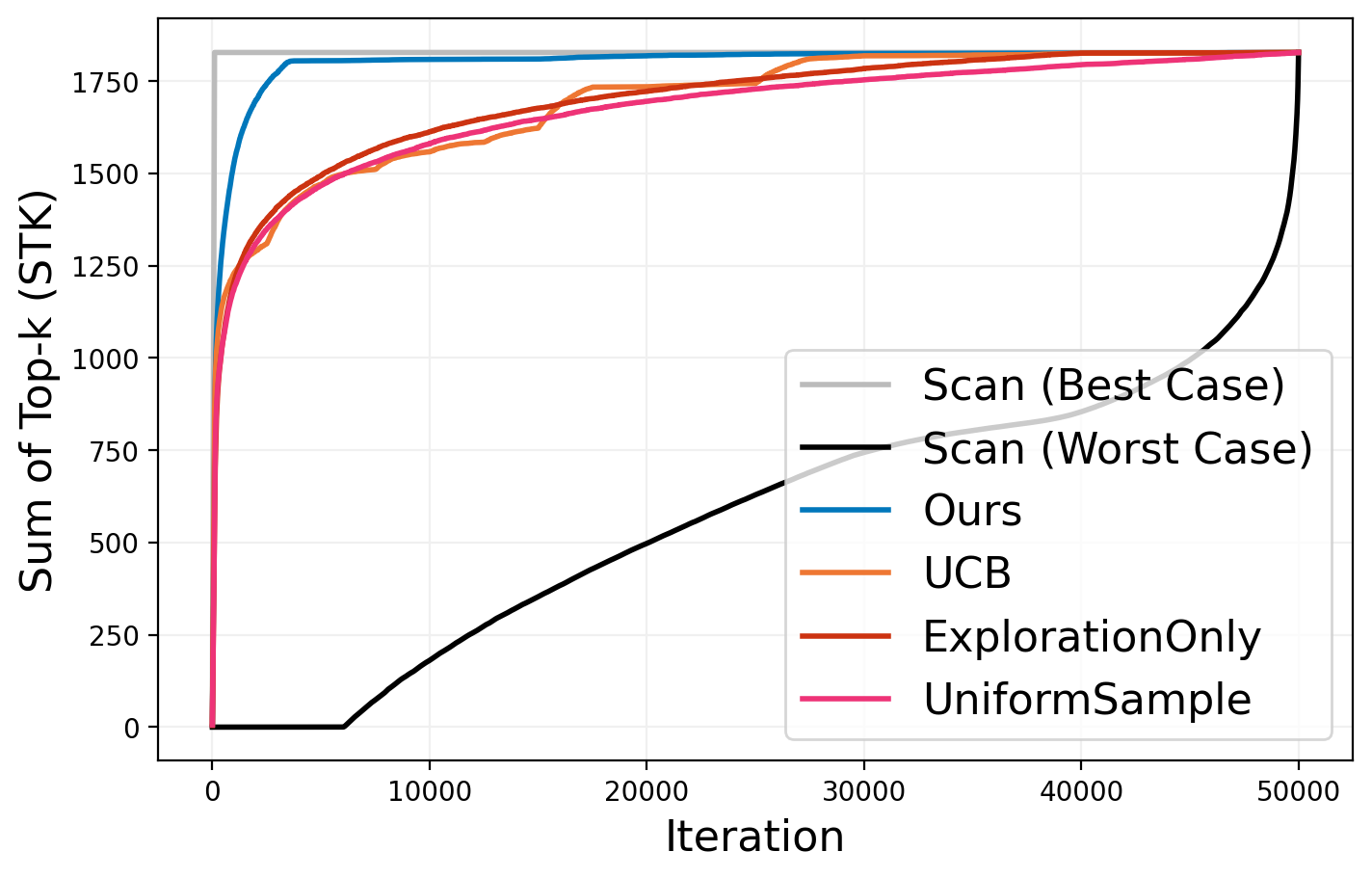}
        \label{fig:synthetic-stk}
    }
    \hfill
    \subfloat[Precision@K]{
        \includegraphics[width=0.3\textwidth]{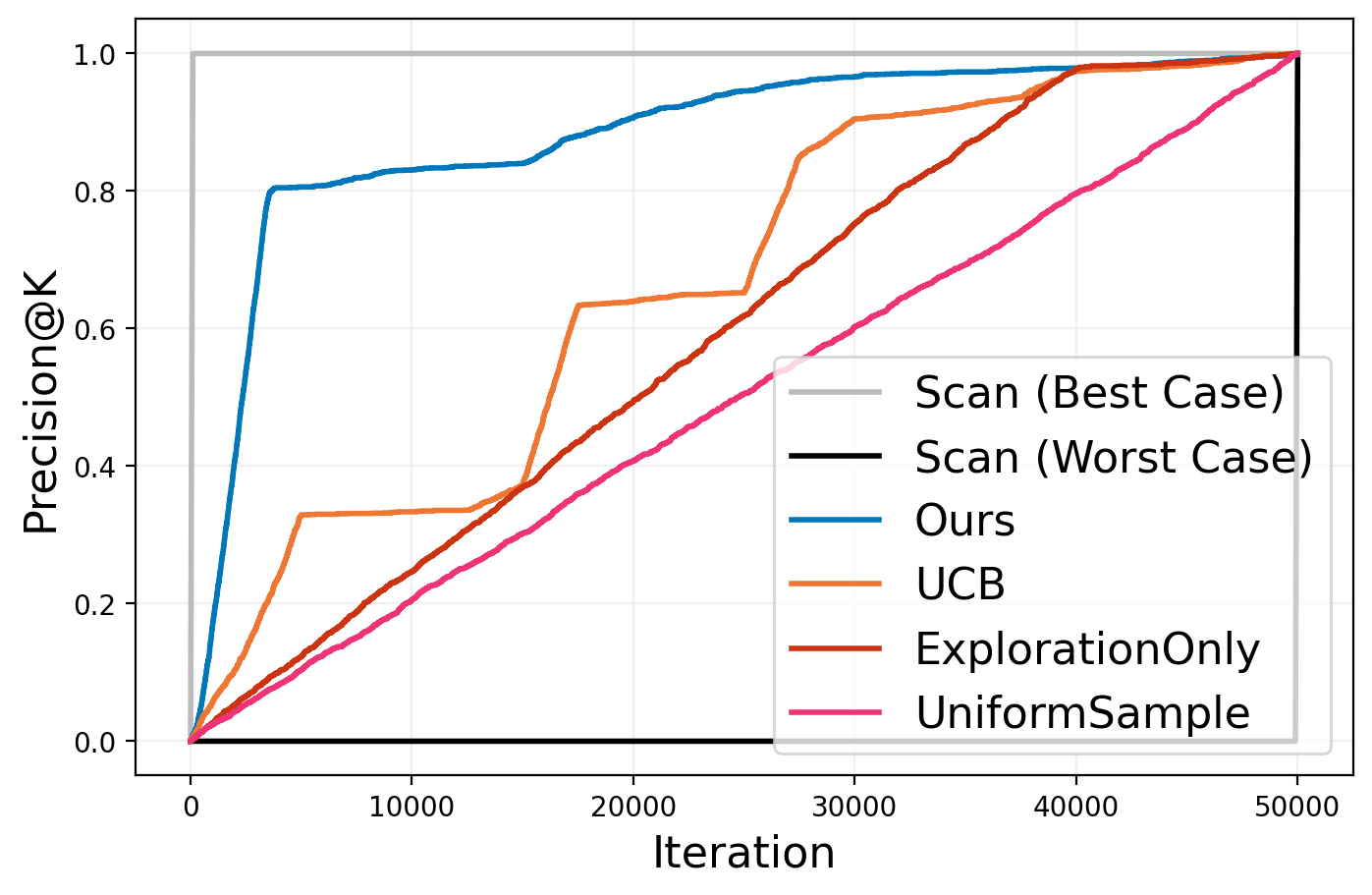}
        \label{fig:synthetic-precision}
    }
    \hfill
    \subfloat[Ablation study]{
        \includegraphics[width=0.3\textwidth]{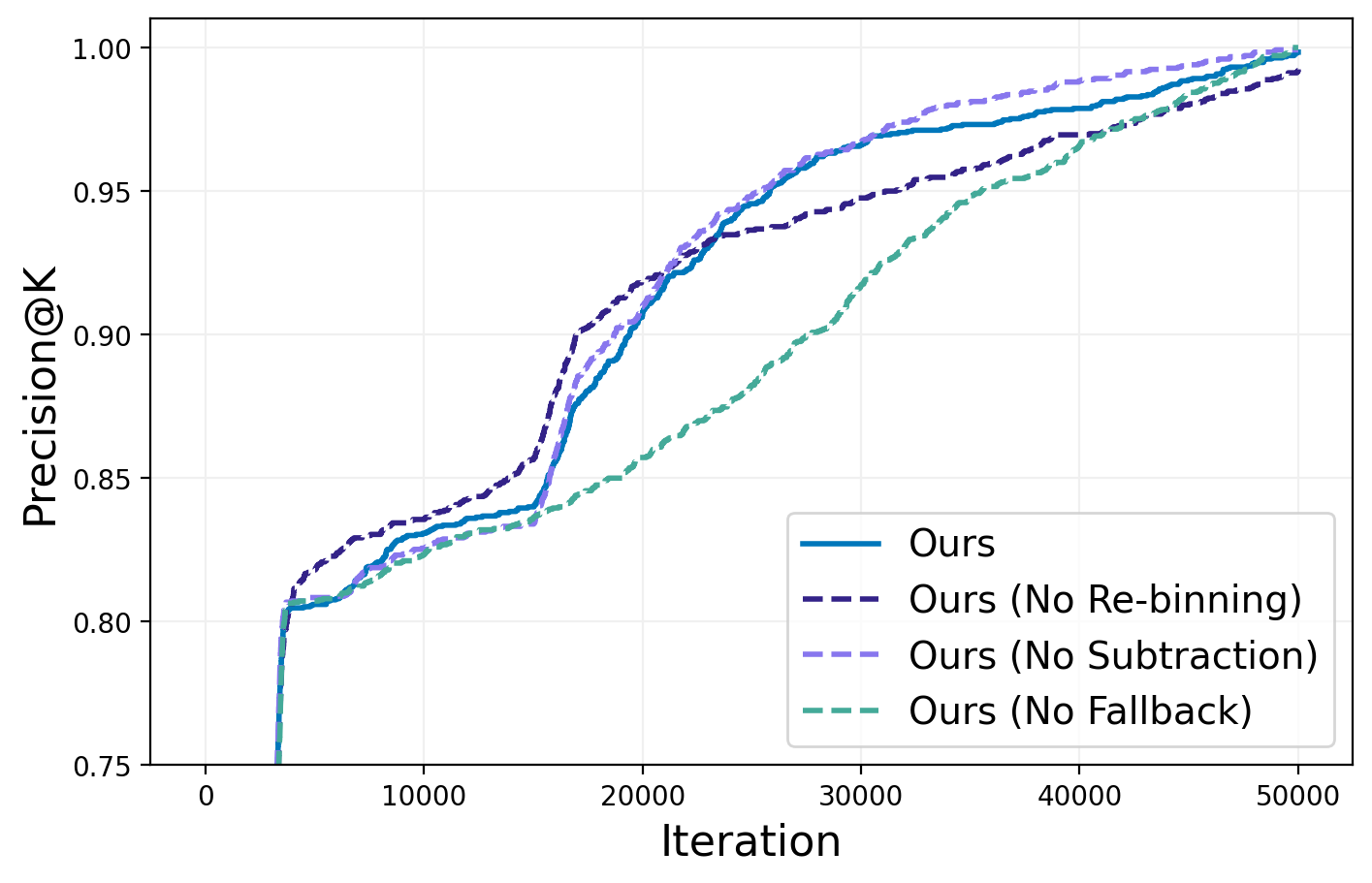}
        \label{fig:synthetic-ablation}
    }
    \caption{Selecting $100$ highest numbers from synthetic normally distributed data ($n = 50,000$) with 20 clusters. Averaged over 25 runs. (a-b) STK, Precision@K vs time. (c) Ablation study. Note the $y$-axis cutoff.}
    \label{fig:synthetic}
\end{figure*}

We run our algorithm and baseline algorithms on synthetic data, with $k = 100$ and $T = 50,000$. We also conduct an ablation study. 
We do not report latency for this dataset, since the number of iterations is an indicator of latency for synthetic data.

Figure~\ref{fig:synthetic} shows the result of this experiment. There are three main takeaways. First, \ours out-performs baseline algorithms in terms of both STK (Figure~\ref{fig:synthetic-stk}) and Precision@K (Figure~\ref{fig:synthetic-precision}). 
\ours reaches near-optimal STK rapidly. 
However, near-optimal Precision@K requires a much larger number of iterations. 
Second, the curve for \ucb is piecewise-continuous, since it chooses sub-optimal intermediate nodes after a good child is exhausted. 
Third, turning off various features does not significantly impact performance. That said, turning off fallback has the most negative impact.

\subsection{Tabular Regression}

\begin{figure*}[htb]
    \centering
    
    \subfloat[STK]{
        \raisebox{4pt}{
        \includegraphics[width=0.29\textwidth]{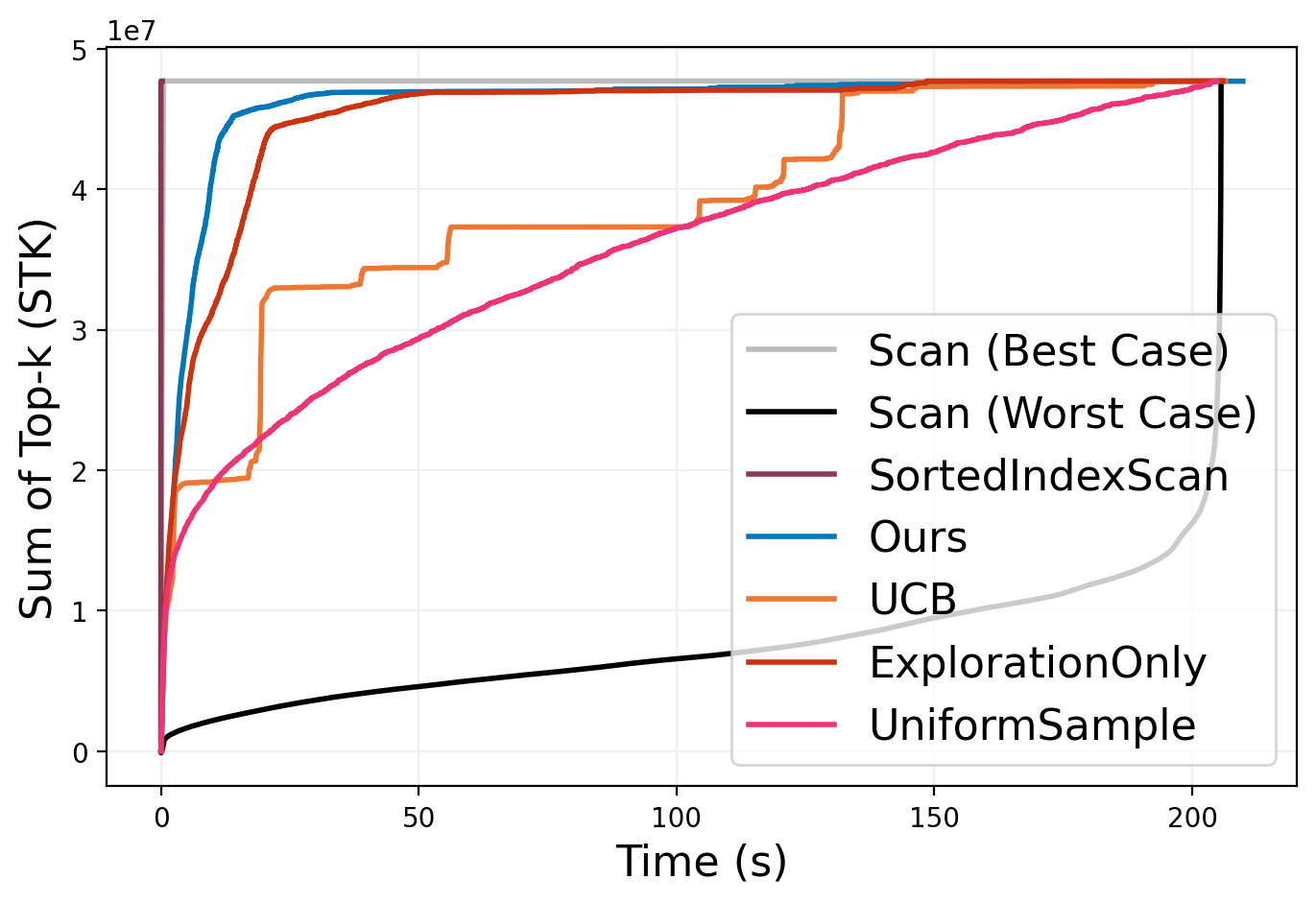}}
        \label{fig:usedcars-stk}
    }
    \hfill
    \subfloat[Precision@K]{
        \raisebox{4pt}{
        \includegraphics[width=0.29\textwidth]{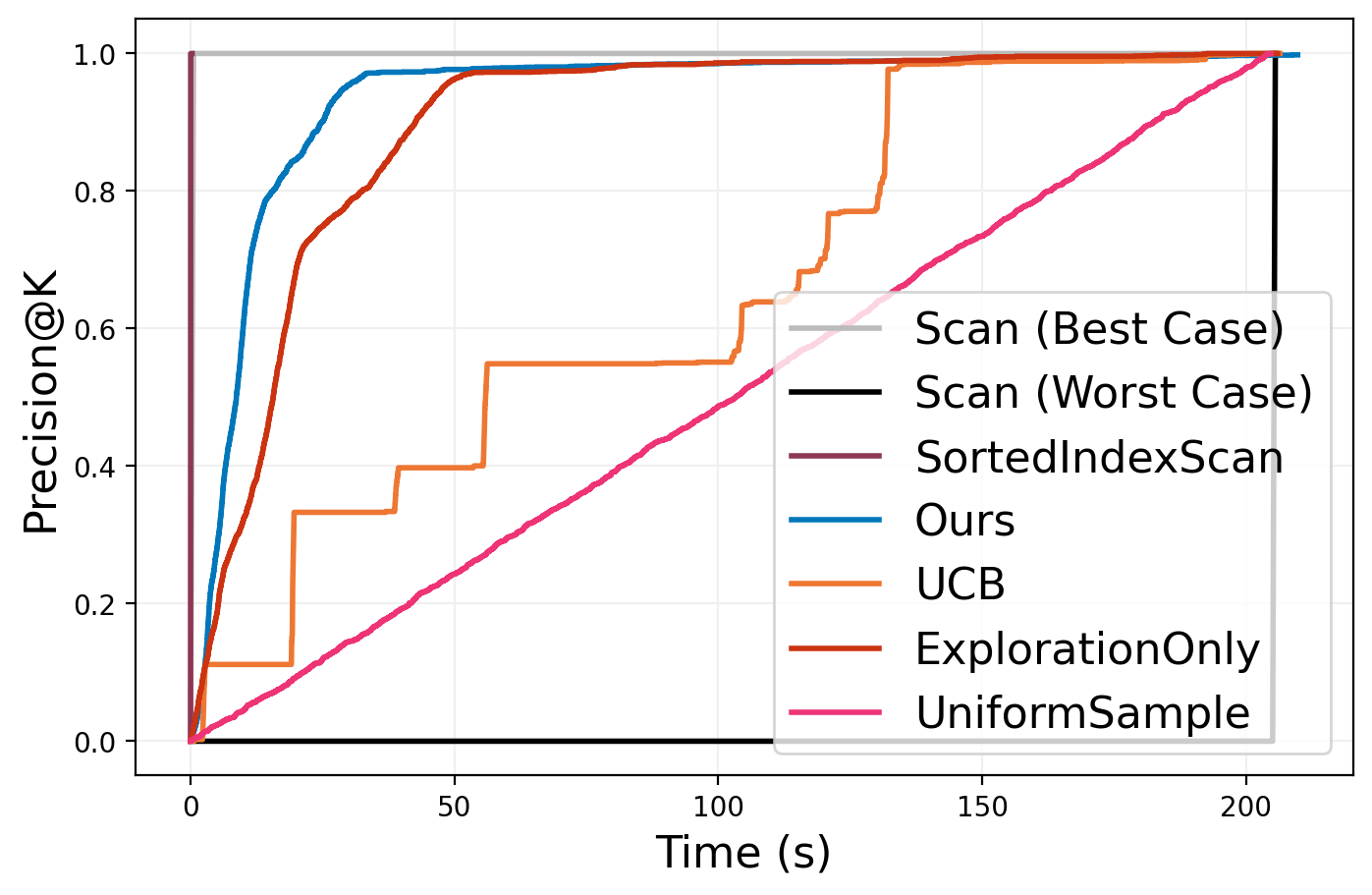}}
        \label{fig:usedcars-precision}
    }
    \hfill
    \subfloat[End-to-end latency]{
        \includegraphics[width=0.29\textwidth]{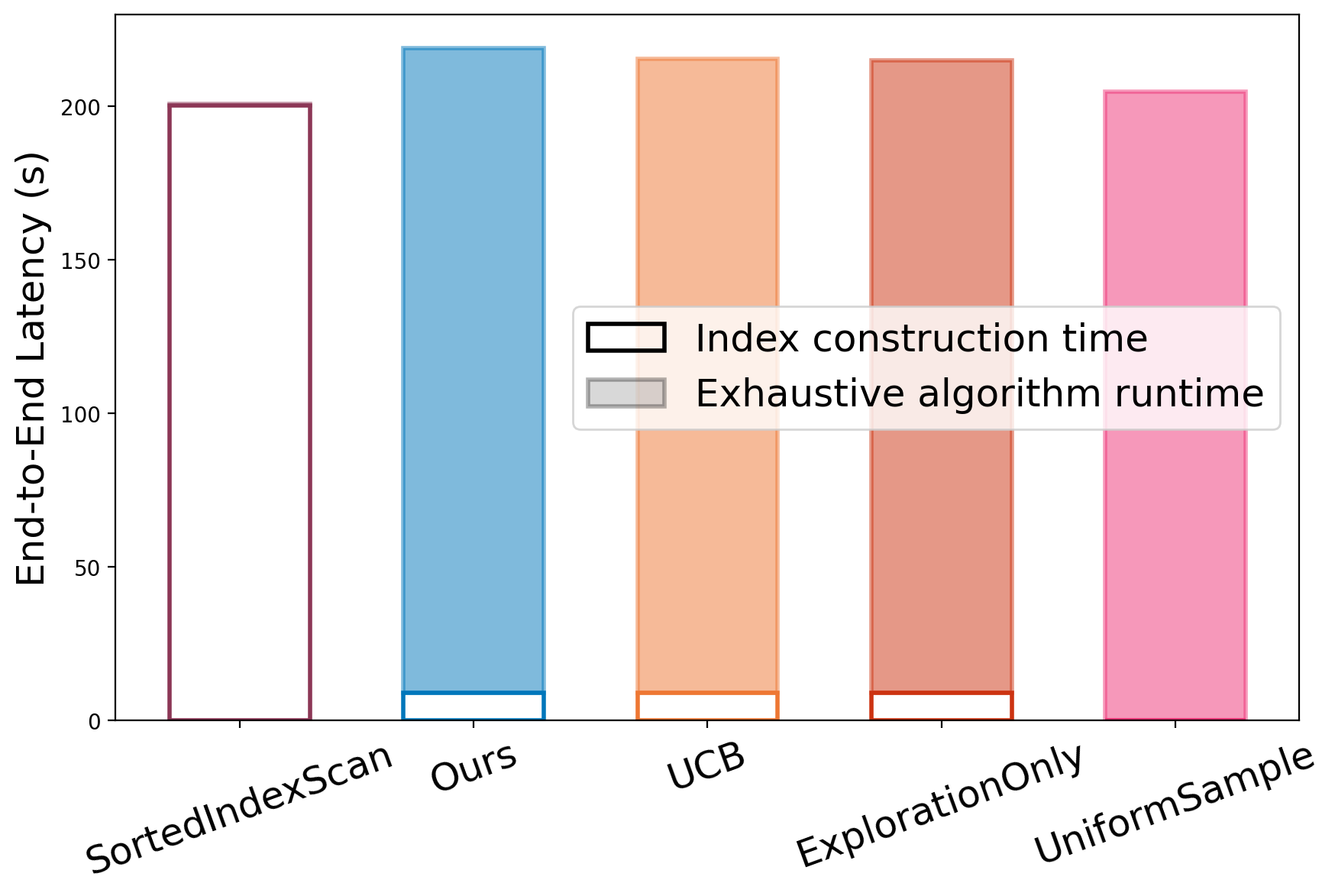}
        \label{fig:usedcars-latency-total}
    }
    \caption{Selecting 250 highest valued car listings form the UsedCars dataset ($n = 100,000$) where the valuation is given by an XGBoost model. Averaged over 10 runs. (a-b) STK, Precision@K vs time. (c) End-to-end latency, which includes the index building time and the time to run the algorithm exhaustively over the entire dataset.}
    \label{fig:usedcars1}
\end{figure*}

\begin{figure*}[htb]
    \centering
    
    \subfloat[Ablation study]{
        \raisebox{10pt}{\includegraphics[width=0.3\textwidth]{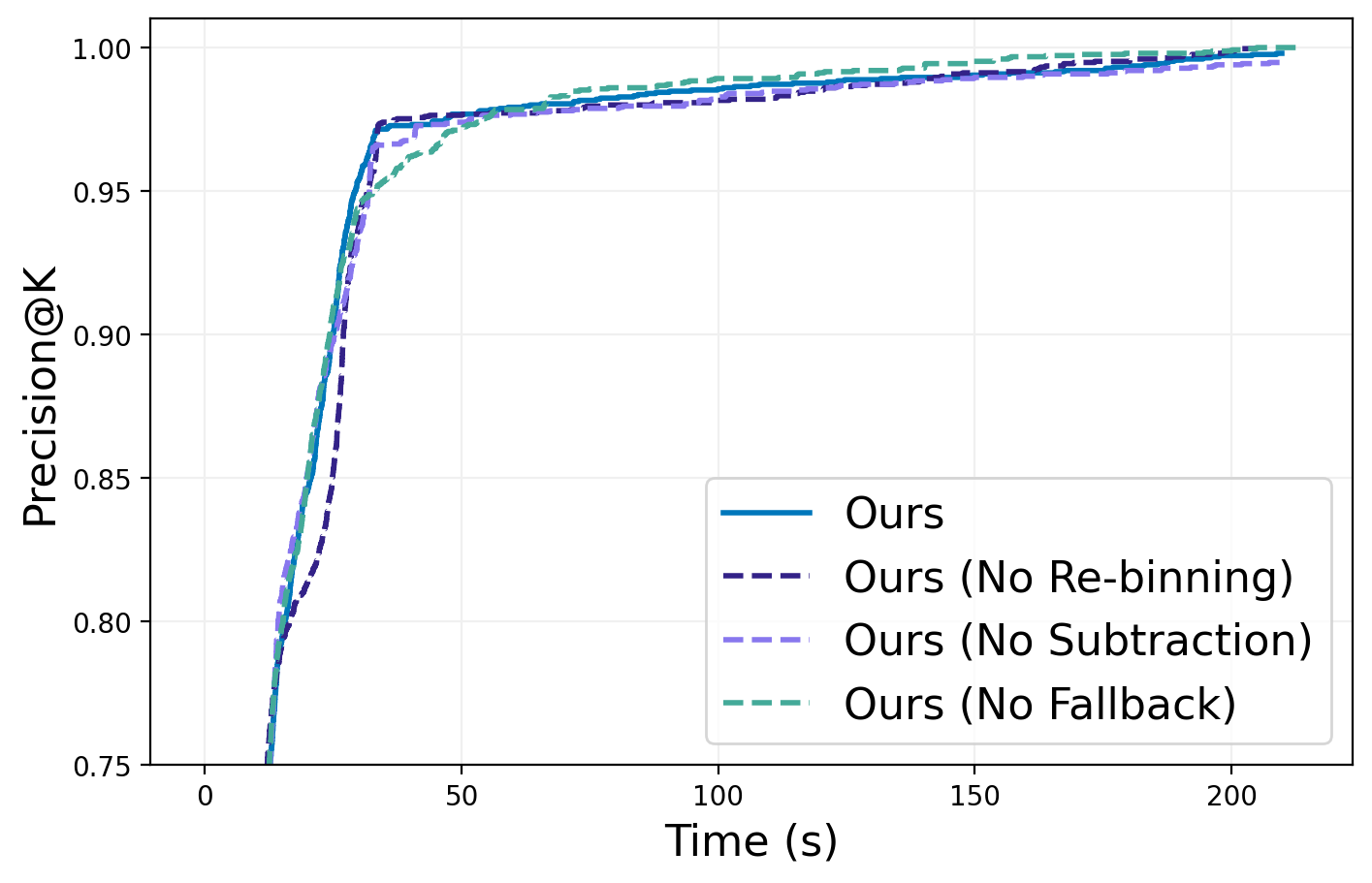}}
        \label{fig:usedcars-ablation}
    }
    \hfill
    \subfloat[Overhead per iteration]{
        \raisebox{0pt}{\includegraphics[width=0.3\textwidth]{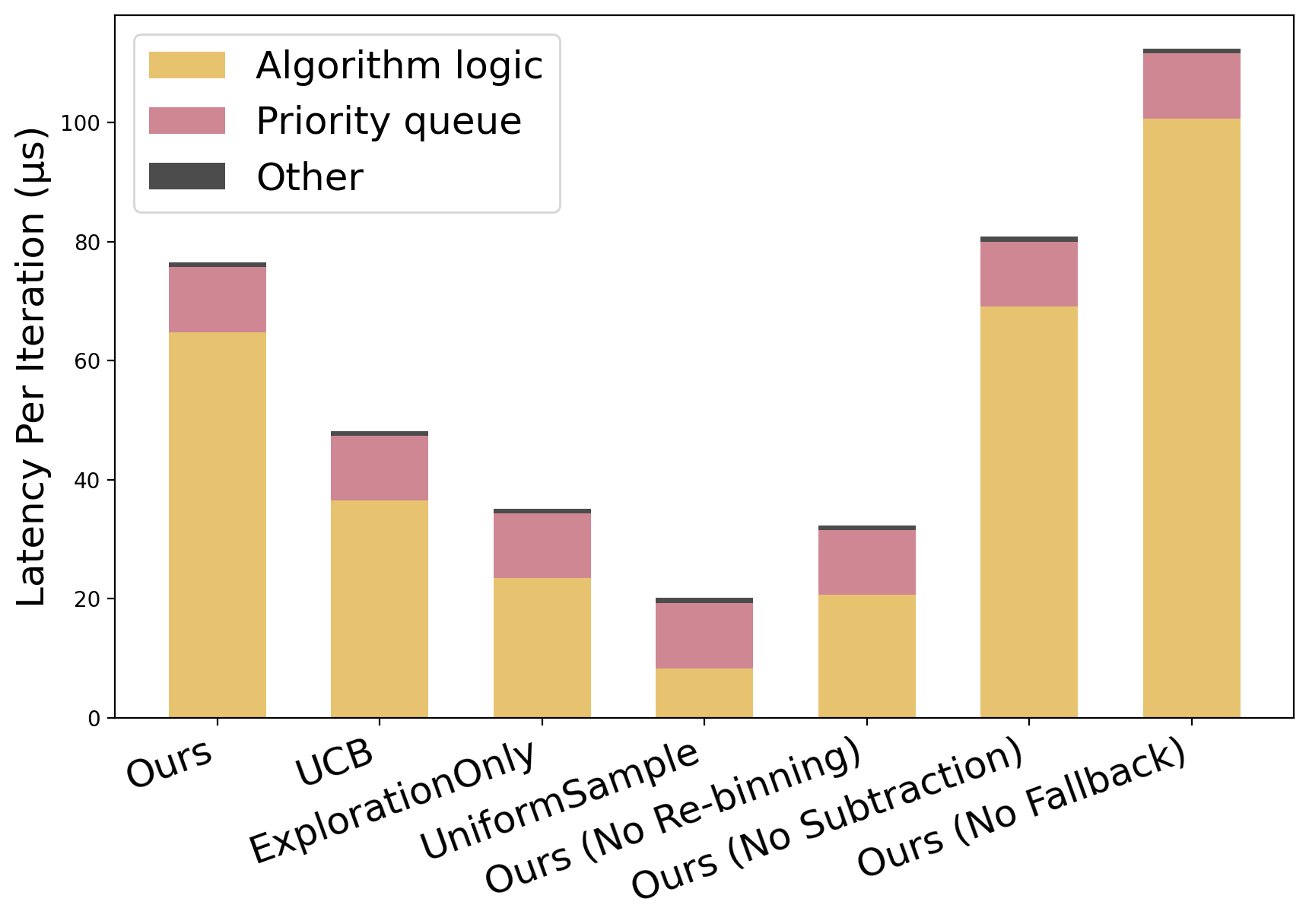}}
        \label{fig:usedcars-latency-iter}
    }
    \hfill
    \subfloat[Parameter study]{
        \raisebox{10pt}{\includegraphics[width=0.3\textwidth]{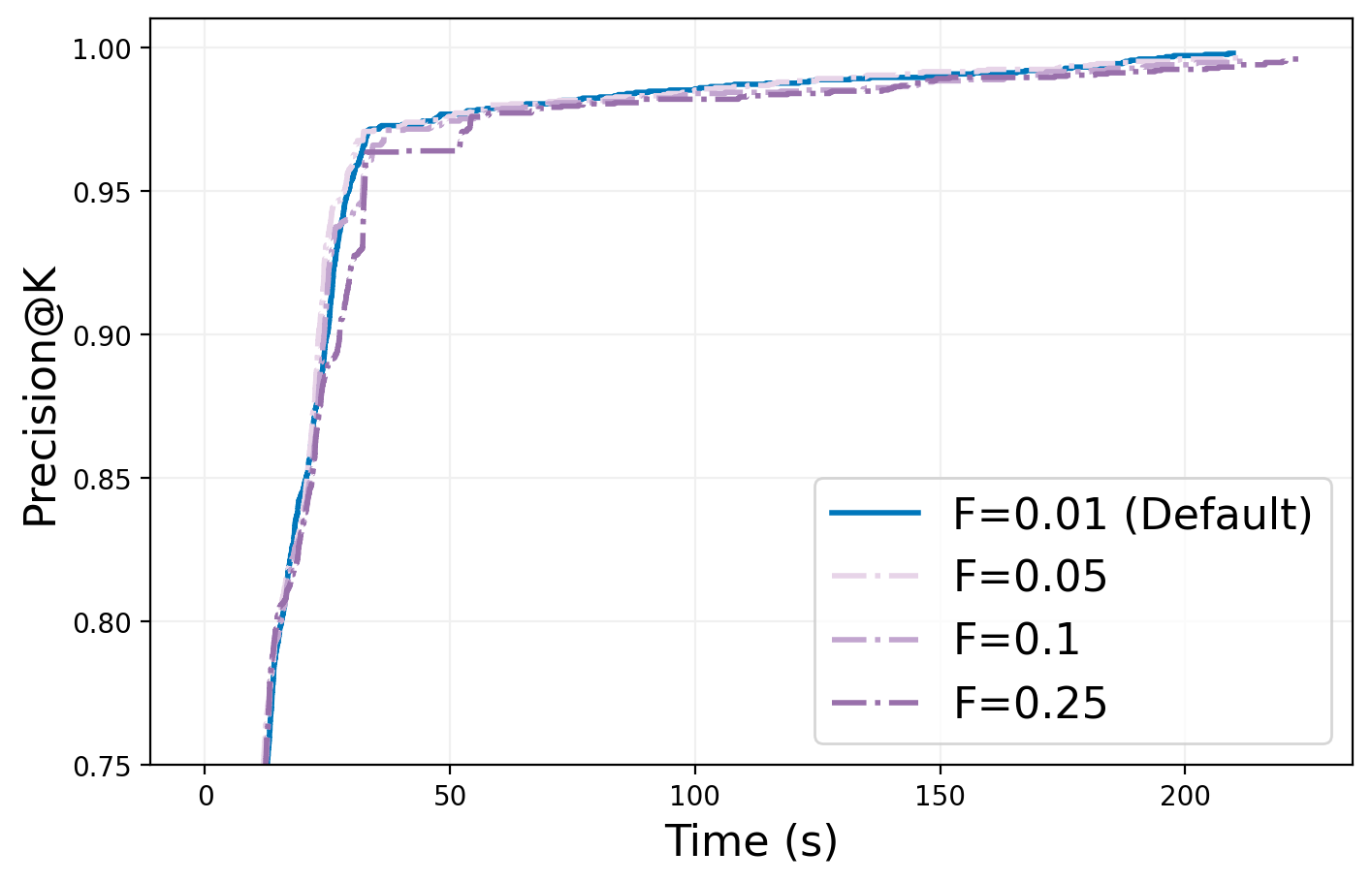}}
        \label{fig:usedcars-params}
    }
    \caption{Additional UsedCars results. (a) Ablation study. Note the $y$-axis cutoff. (b) Overhead incurred by different algorithms per iteration. Excludes scoring function latency (2ms/iter). (c) Parameter study. Variants have different fallback condition checking frequencies ($F$). Note the $y$-axis cutoff.}
    \label{fig:usedcars2}
\end{figure*}

To test the applicability of \ours on tabular regression, we evaluate the algorithms on the task of selecting cars with the highest predicted valuation. This task also demonstrates the flexibility of the histogram re-binning strategy, as the maximum predicted price is unknown and unbounded. We set $k = 250$ and $T = 100,000$. 
We also report a \sortedscan baseline, end-to-end and per-iteration latency analysis, and an ablation study.

\textbf{Comparison to \sortedscan.} As shown in Figure~\ref{fig:usedcars-stk} and \ref{fig:usedcars-precision}, \sortedscan is very fast in query time, since it skips scoring function evaluation. However, as we observe in Figure~\ref{fig:usedcars-latency-total}, \sortedscan has a much higher index construction time than \ours. If the user only needs an approximate result, then \ours is much faster.

\textbf{Comparison to other baselines.} Figure~\ref{fig:usedcars-stk} and \ref{fig:usedcars-precision} show the quality of results for \ours and all baselines. 
\ours significantly out-performs baselines and obtains near-optimal STK and Precision@K in a small number of iterations, as it identifies a few clusters that contain most of the exact solution. 
\explore performs well, sometimes eclipsing \ours. 
We believe this is caused by two reasons. 
A few shallow leaf nodes that contain a large proportion of the ground truth solution, which \explore is skewed towards. 
Furthermore, if the uniform value assumption does not hold, then \ours can fail to model the exact distributions. 
\ucb significantly under-performs. 
Most likely, maximizing expected reward in each iteration causes \ucb to select arms with high mean and low variance, which does not improve the running solution.

\textbf{Parameter and Ablation Study.} Figure~\ref{fig:usedcars-ablation} shows the result of an ablation study. 
All variants perform similarly, though with minor performance degradations if some features are turned off. 
Figure~\ref{fig:usedcars-params} shows a comparison of different fallback frequencies. Reducing the frequency slightly diminishes performance, but has minor impact.

Figure~\ref{fig:usedcars-latency-iter} shows the overhead of different algorithms. 
While \ours has high overhead, scoring function latency (2ms) is 18-25x longer. 
While enabling fallback incurs additional costs, it reduces average overhead over the entire query. 
Skipping re-binning decreases overhead, whereas skipping subtraction has little effect.

\subsection{Image Fuzzy Classification}

\begin{figure*}[htb]
    \centering
    \subfloat[Beacon, lighthouse (STK)]{
        \includegraphics[width=0.3\textwidth]{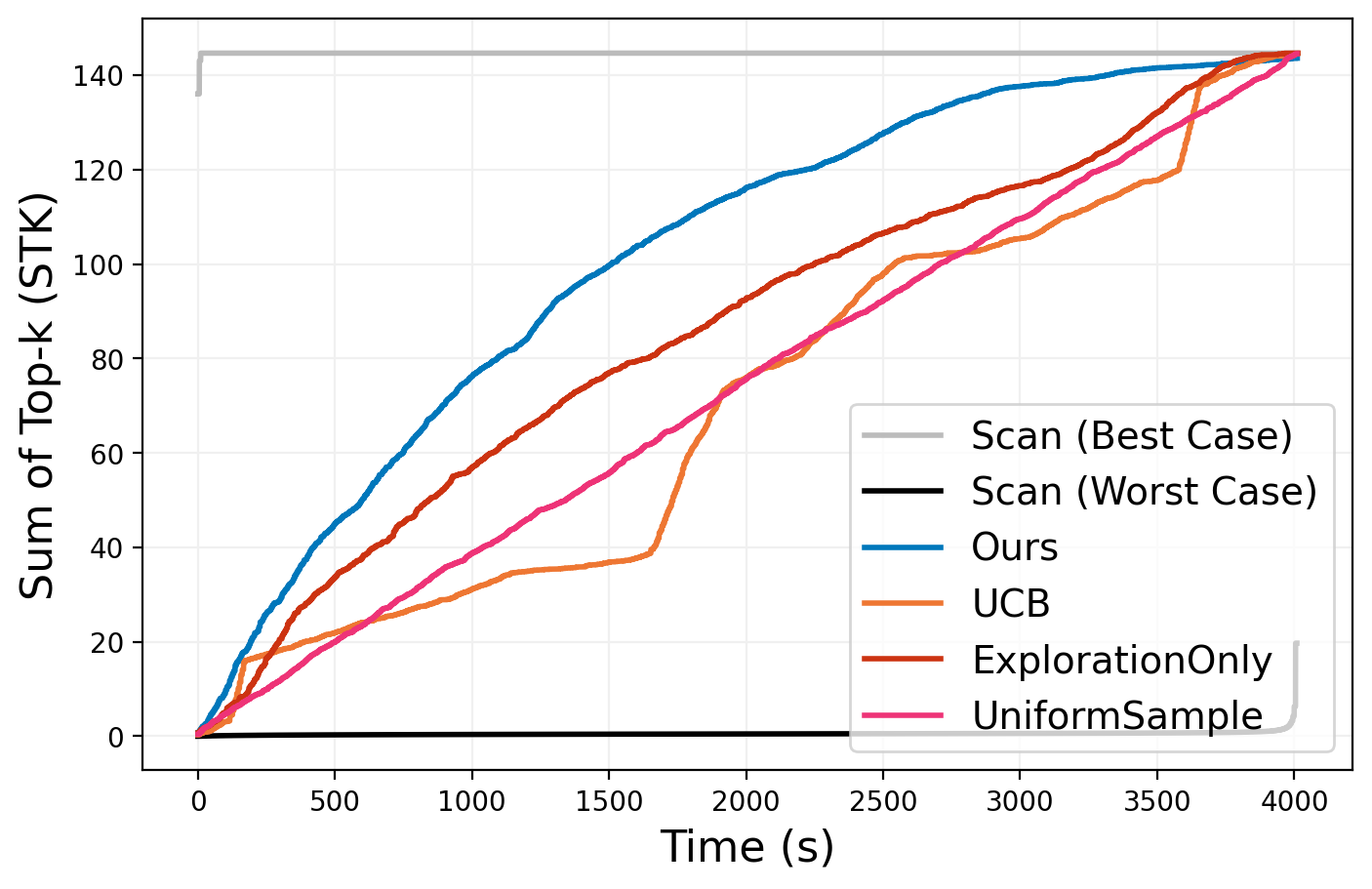}
        \label{fig:classify-1-stk}
    }
    \hfill
    \subfloat[Hand-held computer (STK)]{
        \includegraphics[width=0.3\textwidth]{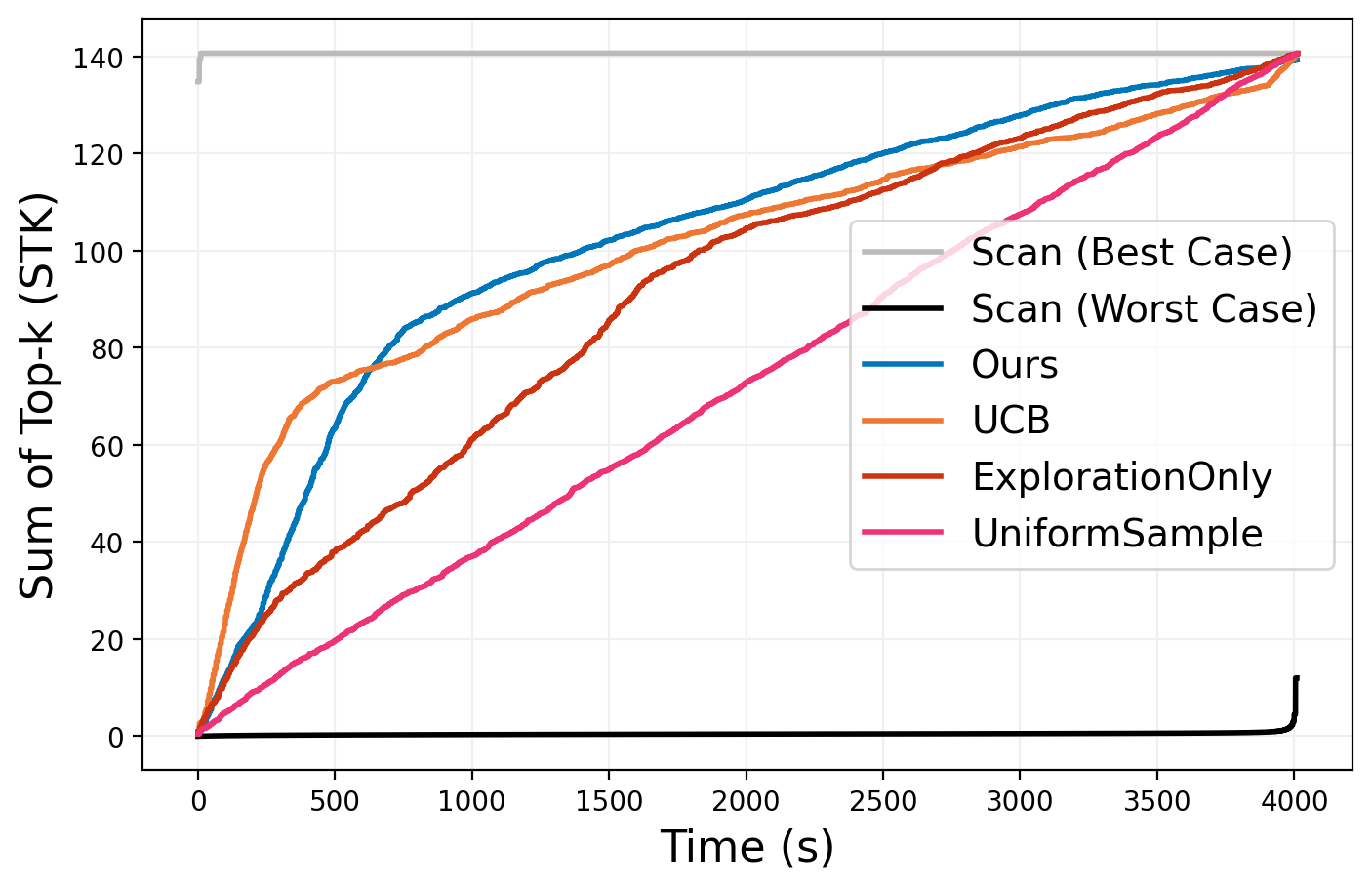}
        \label{fig:classify-2-stk}
    }
    \hfill
    \subfloat[Washing machine (STK)]{
        \includegraphics[width=0.3\textwidth]{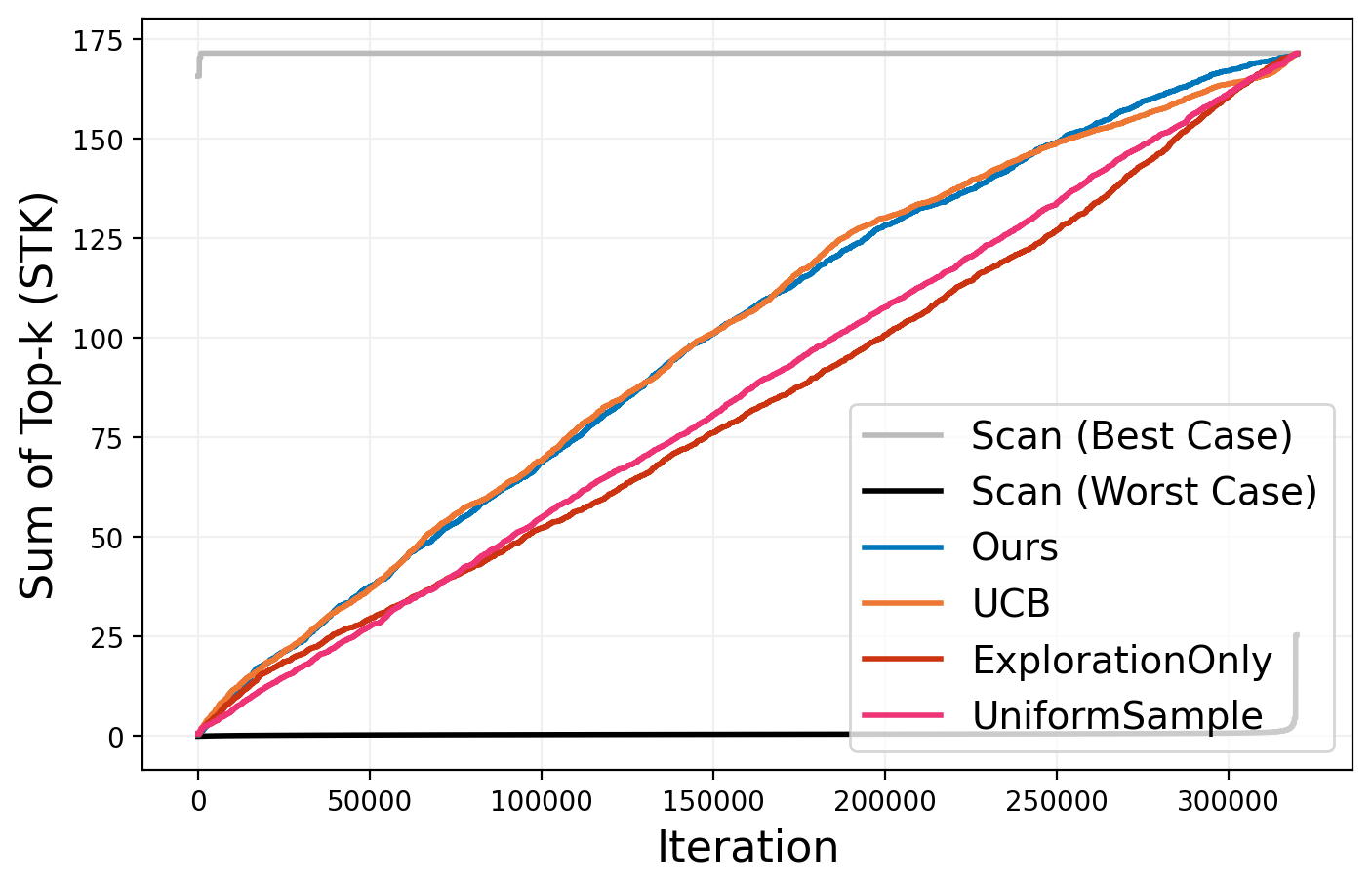}
        \label{fig:classify-3-stk}
    }
    \\
    \subfloat[Beacon, lighthouse (Precision@K)]{
        \includegraphics[width=0.3\textwidth]{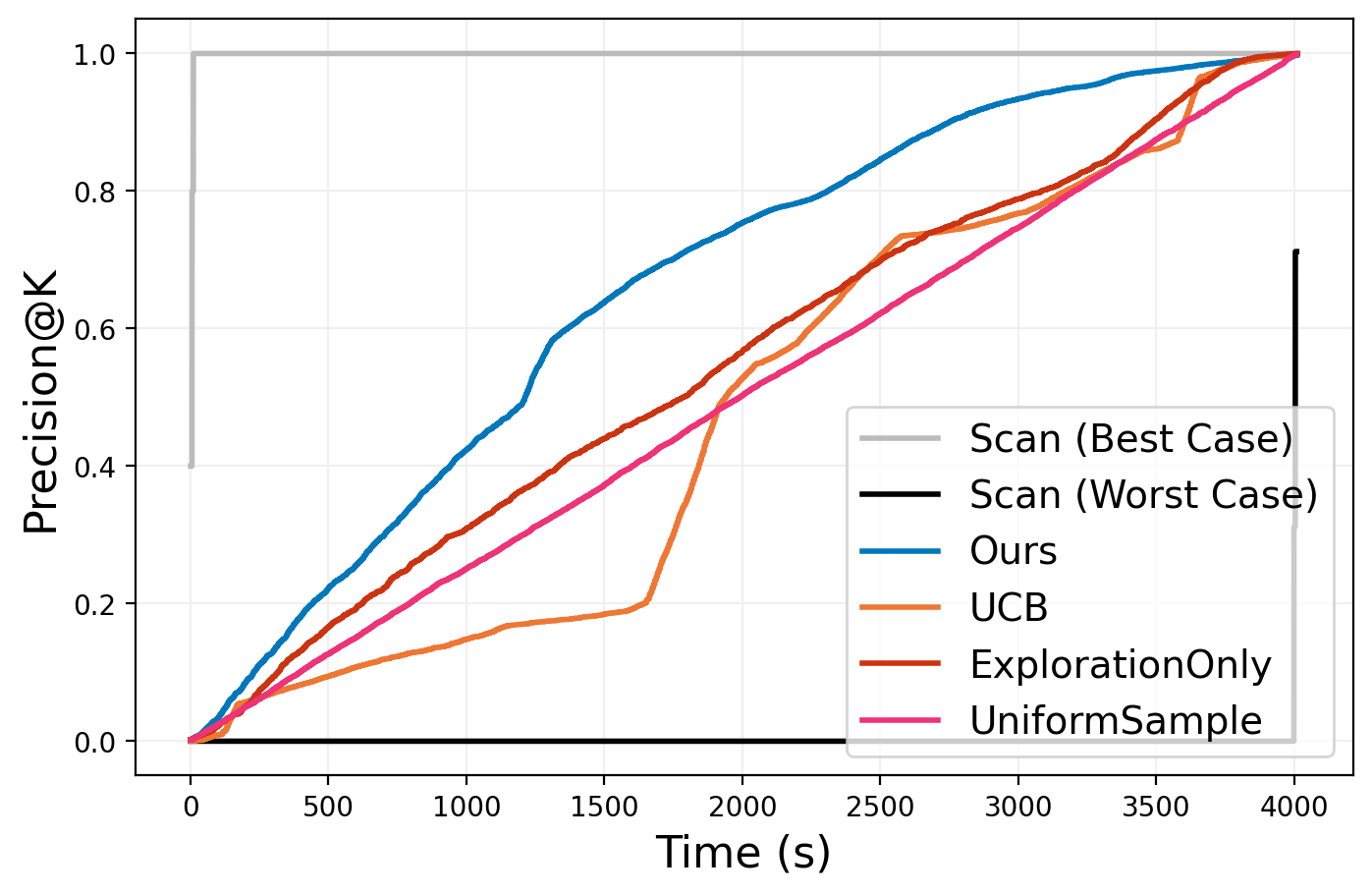}
        \label{fig:classify-1-precision}
    }
    \hfill
    \subfloat[Hand-held computer (Precision@K)]{
        \includegraphics[width=0.3\textwidth]{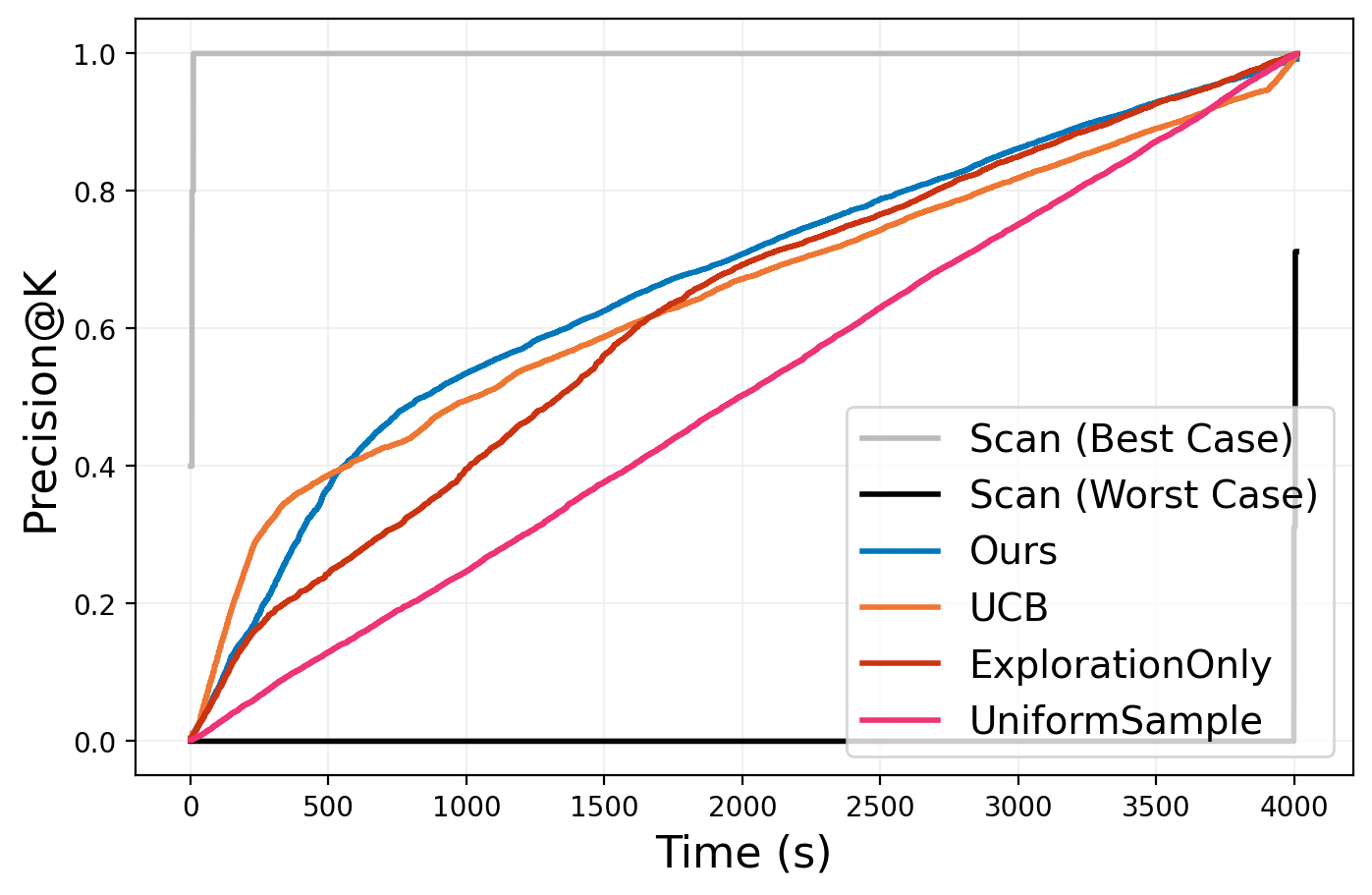}
        \label{fig:classify-2-precision}
    }
    \hfill
    \subfloat[Washing machine (Precision@K)]{
        \includegraphics[width=0.3\textwidth]{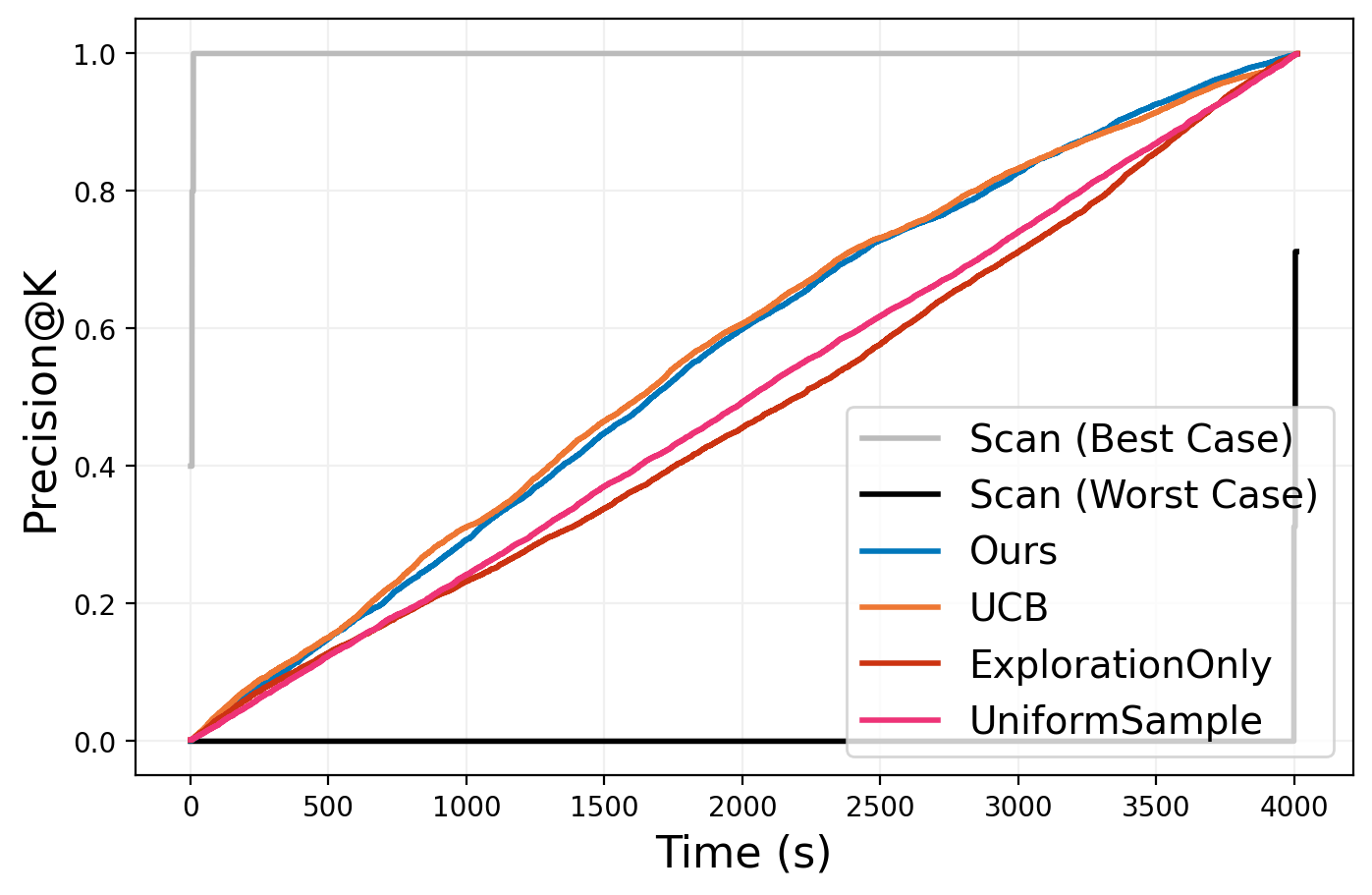}
        \label{fig:classify-3-precision}
    }
    \caption{Selecting top 1000 images most confidently classified as a class label by a pre-trained ResNeXT model, over a subset of ImageNet ($n = 320,000$). Averaged over 10 runs. (a-c) STK vs time. (d-f) Precision@K vs time.}
    \label{fig:classify}
\end{figure*}

\begin{figure*}[htb]
    \centering
    \subfloat[Batch size vs. latency, memory]{
        \includegraphics[width=0.3\textwidth]{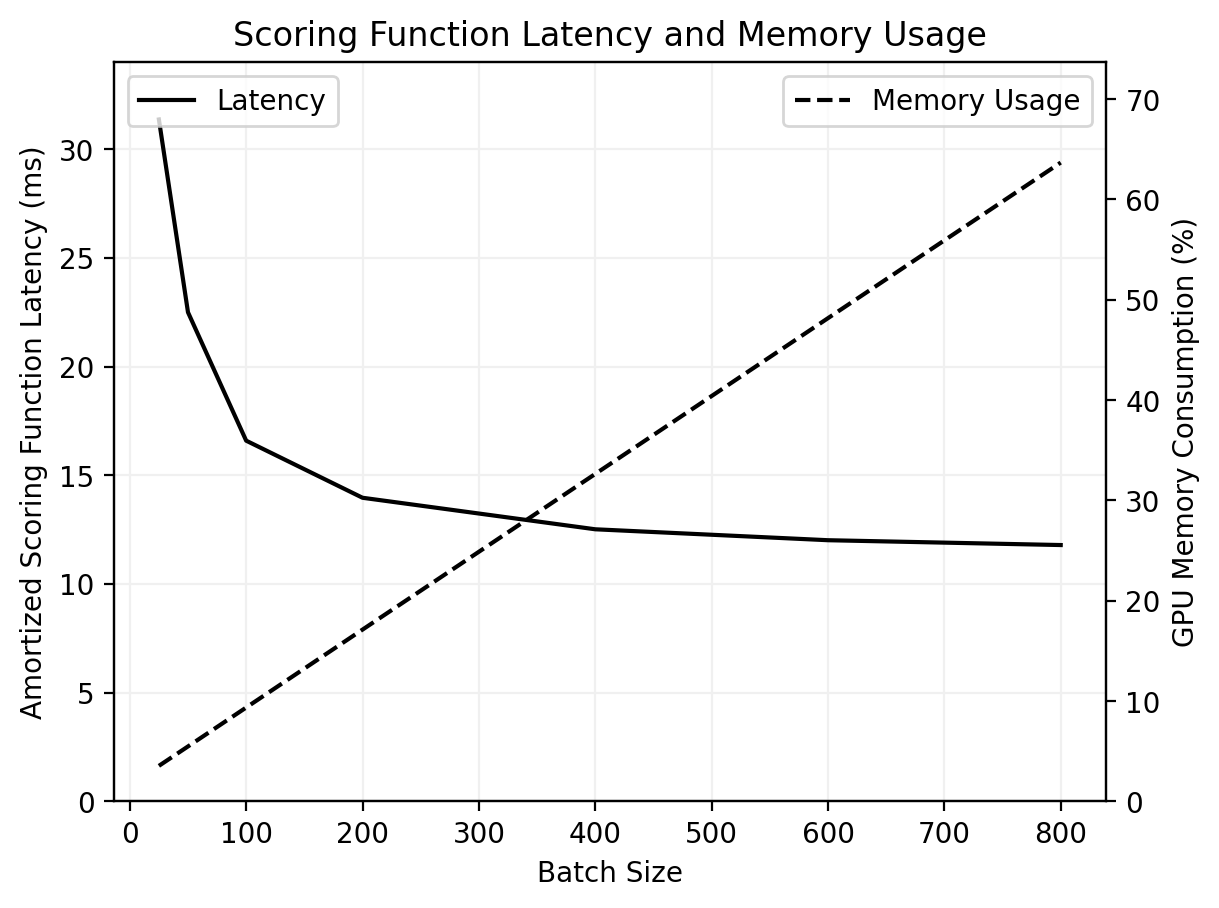}
        \label{fig:imagenet-batchsize}
    }
    \hfill
    \subfloat[End-to-end latency]{
        \includegraphics[width=0.3\textwidth]{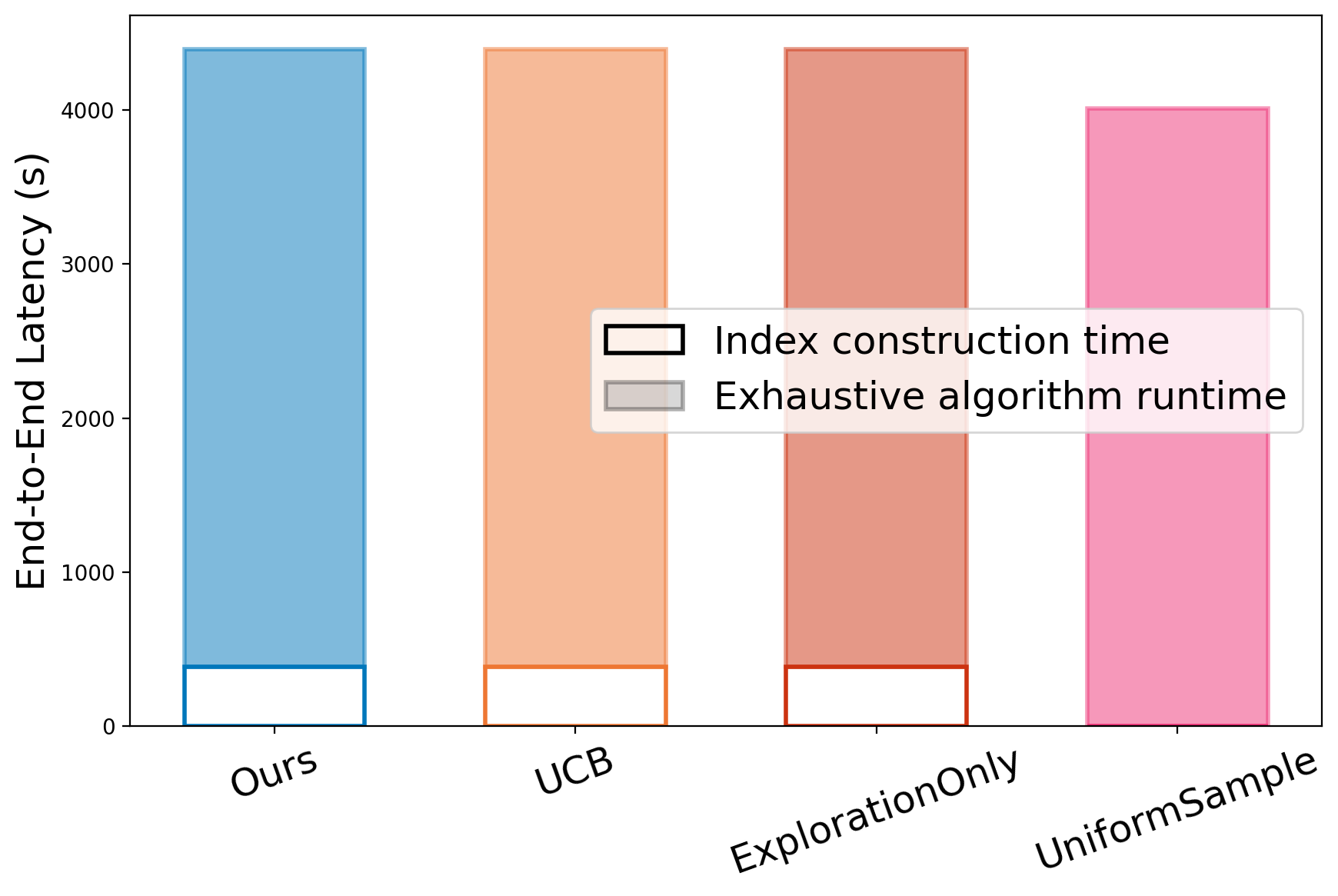}
        \label{fig:imagenet-latency-total}
    }
    \hfill
    \subfloat[Overhead per iteration]{
        \includegraphics[width=0.3\textwidth]{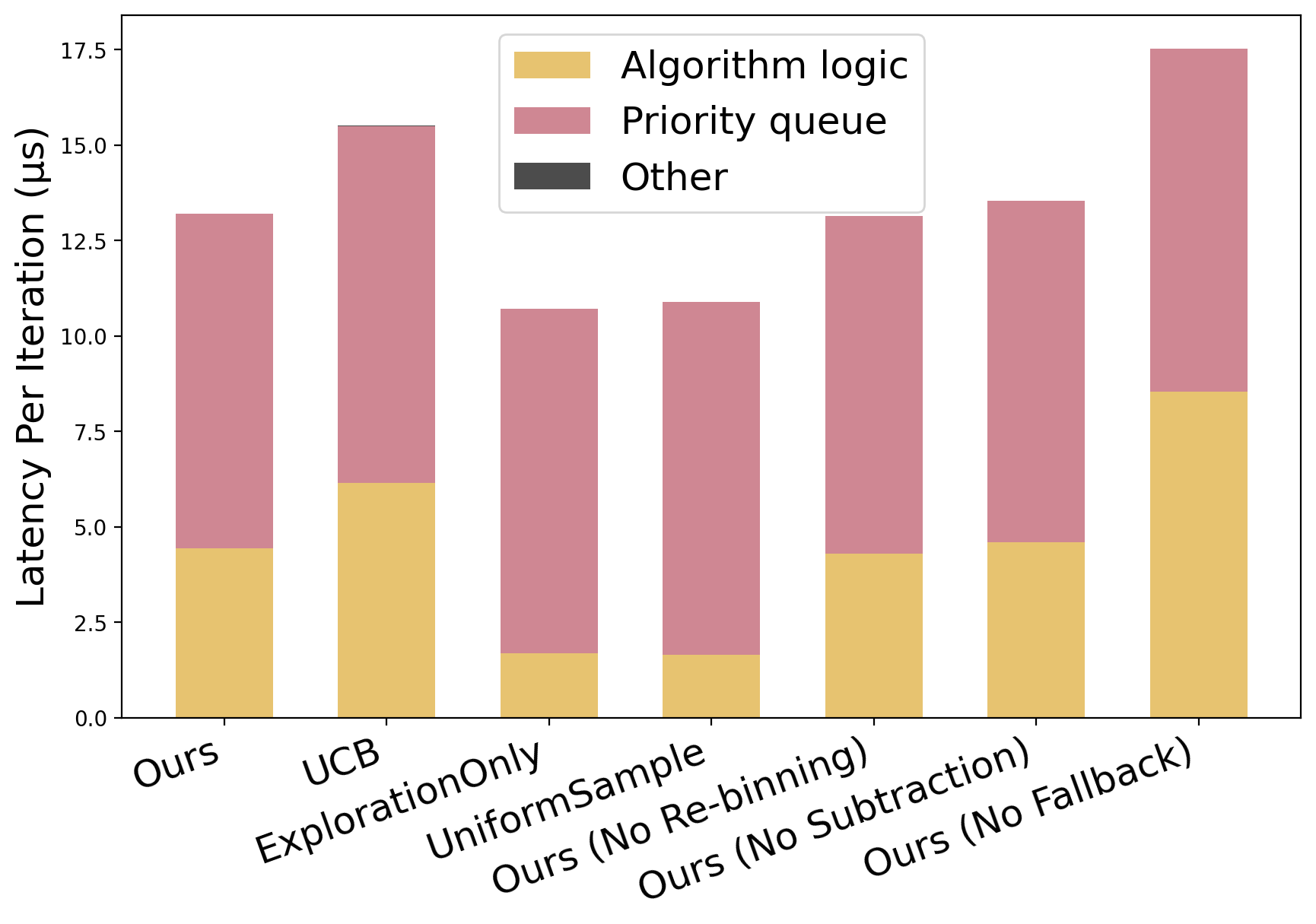}
        \label{fig:imagenet-latency-iters}
    }
    \caption{ImageNet latency results. (a) Scoring function latency, GPU memory consumption vs batch size. (b) End-to-end latency, which includes index buliding time and the time to run the algorithm exhaustively over the entire dataset. (c) Overhead incurred by different algorithms and ablation variants per iteration. Excludes scoring function latency (13ms/iter).}
    \label{fig:imagenet-latency}
\end{figure*}

\begin{figure*}[htb]
    \centering
    \subfloat[Parameter study (beacon, lighthouse)]{
        \includegraphics[width=0.32\textwidth]{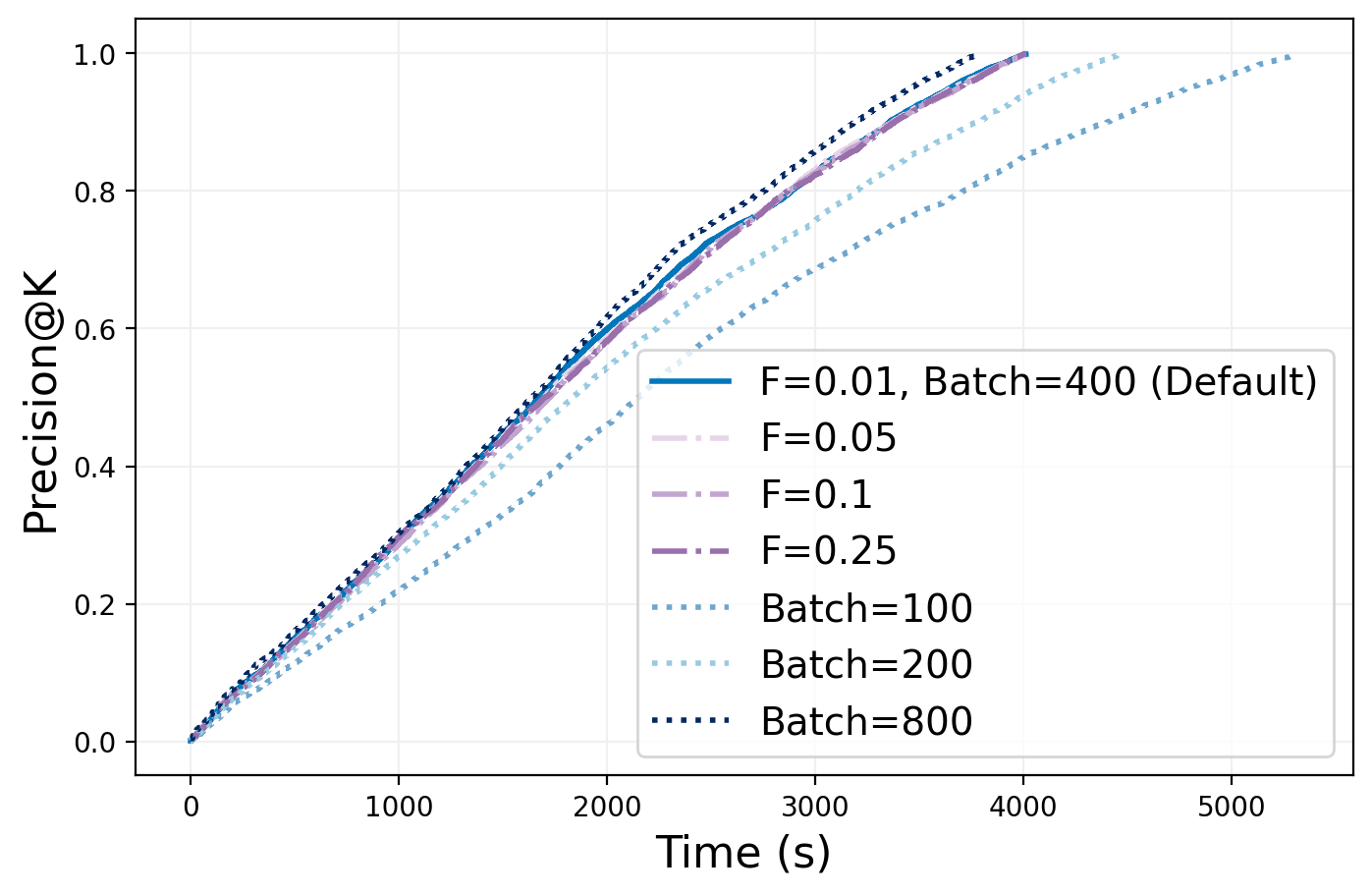}
        \label{fig:imagenet-params-1}
    }
    \hfill
    \subfloat[Parameter study (hand-held computer)]{
        \includegraphics[width=0.32\textwidth]{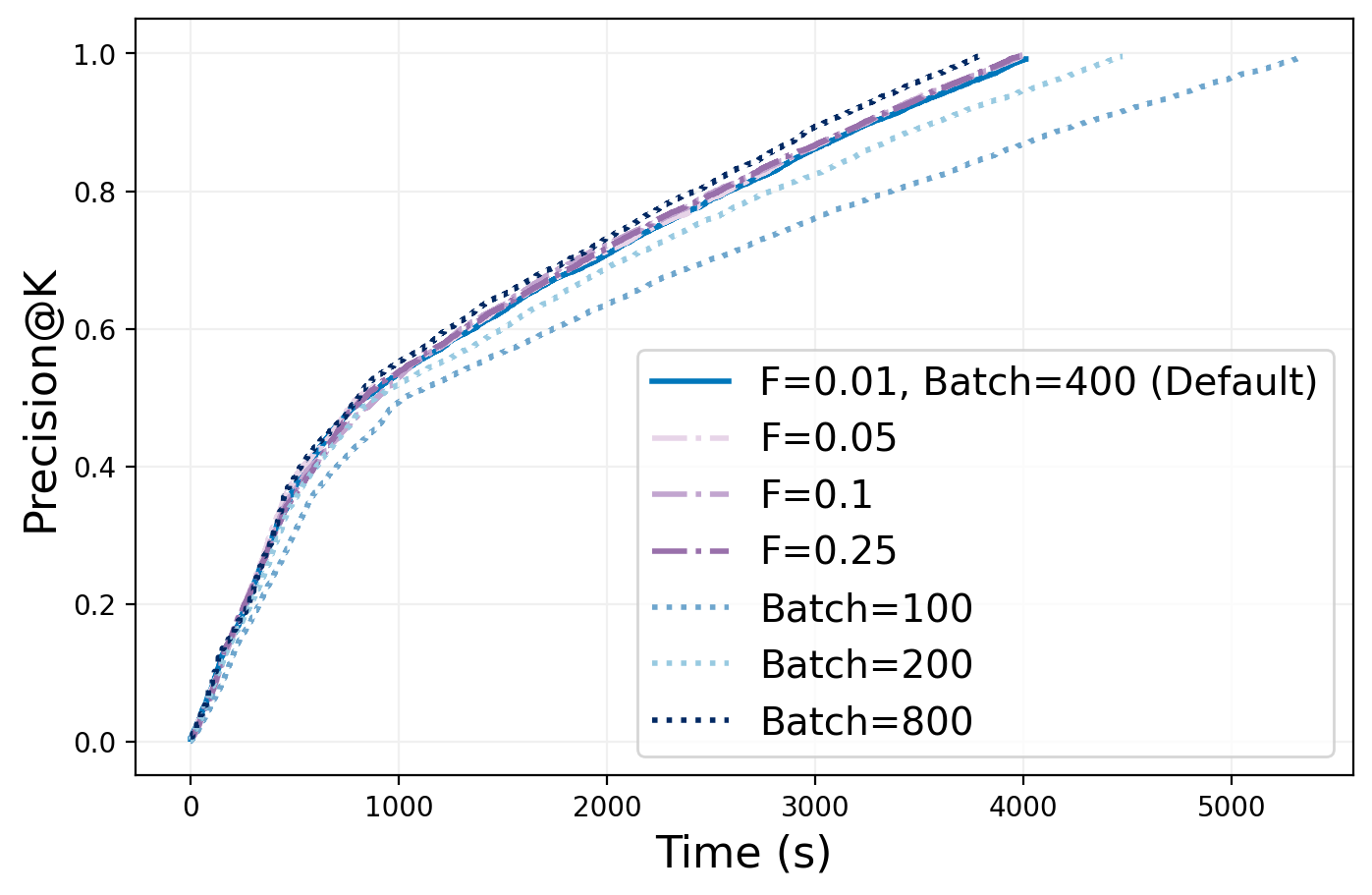}
        \label{fig:imagenet-params-2}
    }
    \hfill
    \subfloat[Parameter study (washing machine)]{
        \includegraphics[width=0.32\textwidth]{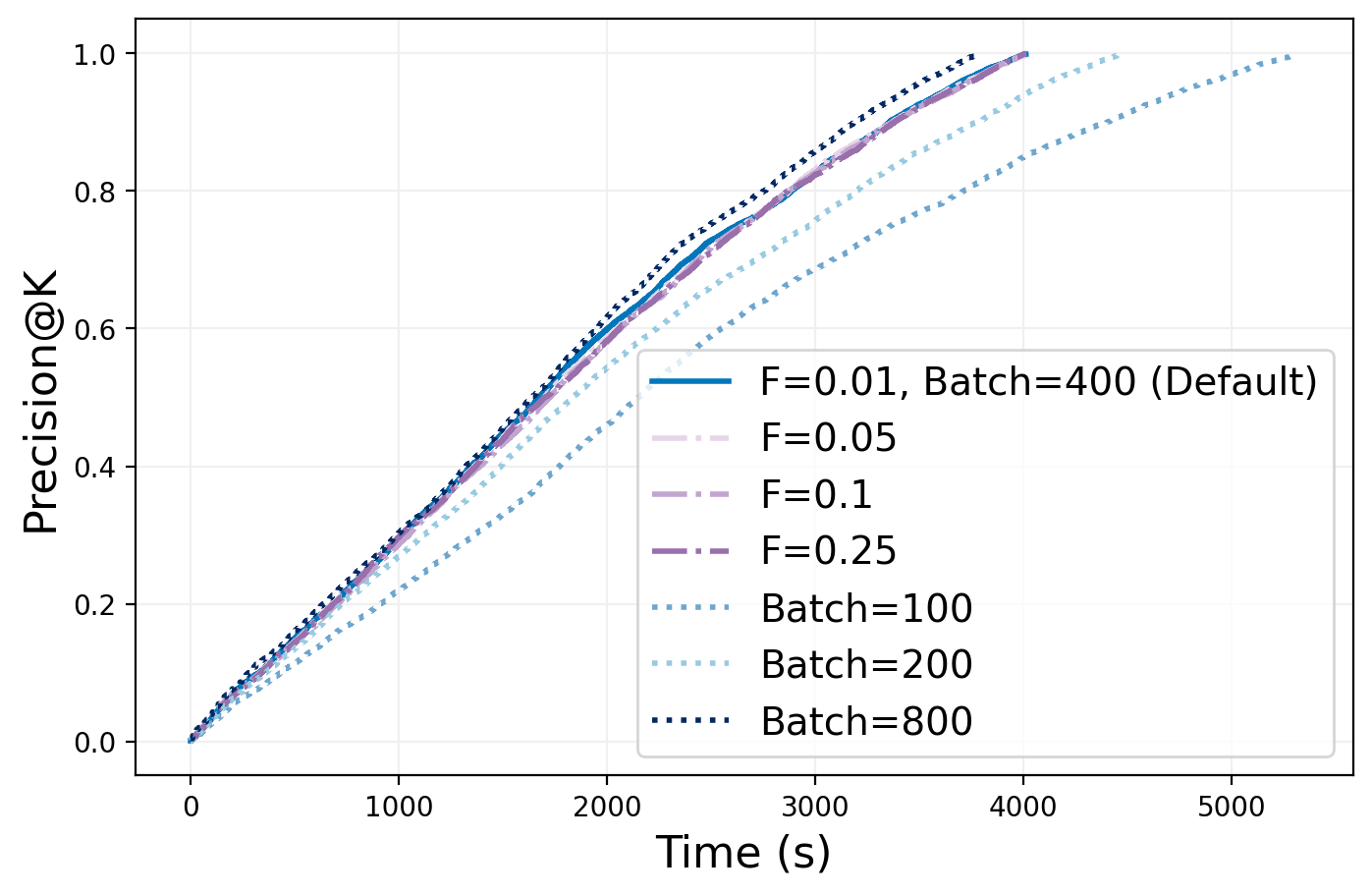}
        \label{fig:imagenet-params-3}
    }
    \caption{ImageNet parameter study. Variants have different fallback condition checking frequency ($F$) and batch sizes.}
    \label{fig:imagenet-params}
\end{figure*}

We evaluate algorithms for selecting images most confidently classified as some label. 
This task generalizes the binary classification task in He et al. \cite{he2020method} to fuzzy classification. 
We let $k = 1000$ and $T = 320,000$. 
Three random labels were chosen. 
We also report a detailed latency analysis and a parameter study.

\textbf{Comparison to baselines.} Figure~\ref{fig:classify} shows the main result of this experiment. 
There are three main takeaways. 
First, \ours almost always out-performs baseline algorithms. 
Second, \ucb is sometimes on par with (Figure~\ref{fig:classify-3-stk}), or out-performs \ours (Figure~\ref{fig:classify-2-stk}). 
This might be caused by the reduced statistical efficiency of the $\varepsilon$-greedy strategy with a large batch size, or if the edge case when \ucb is suboptimal does not occur. 
Then, the ability to simultaneously explore and exploit using \ucb might be advantageous, even if \ucb has no approximation guarantee. 
Third, the amount of advantage that \ours has over the baseline algorithms varies heavily for different labels.
Images that belong to some labels have a consistent visual pattern, whereas others may not.

\textbf{Latency analysis.} Figure~\ref{fig:imagenet-latency} analyzes the latency of \ours and baselines. 
Figure~\ref{fig:imagenet-batchsize} shows that scoring function latency decreases as batch size increases, but with diminishing returns, as the model becomes compute bound. 
GPU memory capacity is not a bottleneck. 
Figure~\ref{fig:imagenet-latency-total} shows the end-to-end latency of \ours and baselines. 
The index building cost is recouped after one or two queries if the user only needs an approximate solution. 
Figure~\ref{fig:imagenet-latency-iters} shows the overhead of various algorithms. 
The scoring function latency of 13ms per iteration is 70x longer than the highest algorithm overhead of 18$\mu$s. 
Enabling the fallback strategy reduces overall overhead. 
Skipping re-binning or subtraction has negligible impact.

\textbf{Parameter study}. Figure~\ref{fig:imagenet-params} shows the impact of various batch sizes and fallback frequencies. 
Batch size of 800 slightly out-performs 400, which means the lower scoring function latency more than compensates for the loss in statistical efficiency. 
This might be because an average leaf cluster contains over 10,000 images. 
Conversely, decreasing the batch size degrades performance. 
Modifying fallback checking frequency has negligible impact. 

\subsection{Summary \& Recommendations}

\ours almost always out-performs the baseline algorithms. 
While there are edge cases where \ucb or \explore performs well, there are also scenarios where they perform poorly. 
If scoring function latency is low, then the overhead of \ours might be non-negligible. 
If so, turn off the re-binning feature for lower overhead. 
Check the fallback condition frequently, as the fallback strategy is highly effective. 
If applicable, use large batch sizes to reduce latency, and set larger leaf cluster sizes to compensate. 

\section{Related Work} \label{sec:related}

\textbf{Statistical Methods for UDF Optimization} Our paper is most similar to He et al. \cite{he2020method}, which optimizes opaque filter queries with cardinality constraints. We borrow their indexing scheme and generalize filter functions to scoring functions. In doing so, we replace the UCB bandit with a histogram-based $\varepsilon$-greedy bandit. Other related works optimize feature engineering using UCB bandits~\cite{anderson2016input} and LLM-based selection queries using online learning~\cite{dai2024uqe}. Another tunes queries with partially obscured UDFs dynamically using runtime statistics~\cite{sikdar2020monsoon}.

\textbf{Other UDF Optimization Methods} Data engines such as SparkSQL treat UDFs as black boxes, relying on distributed computing techniques such as MapReduce to reduce latency~\cite{dean2008mapreduce}.
Our method can be combined with MapReduce by running the indexing and bandit algorithm on each worker, and periodically communicating the running solution back to a coordinator. We do not include such method as baseline, as we assume a single-machine setting for the experiments throughout.

Another line of work studied how to use compilers to optimize Python UDFs within SQL queries~\cite{hagedorn2021putting,grulich2021babelfish,spiegelberg2021tuplex}. 
Python UDF queries tend to be prohibitively slow, as Python has dynamic typing~\cite{spiegelberg2021tuplex} and there are various impedance mismatches between Python and SQL~\cite{grulich2021babelfish}.
Common techniques include compiling Python to SQL~\cite{hagedorn2021putting}, or using a unified IR for Python and SQL to holistically optimize the query and compile into machine code~\cite{grulich2021babelfish,palkar2018evaluating}. 
In our standalone system, we do not have context switching costs, so these techniques are not applicable. 
A full system implementation should adopt these optimization opportunities by compiling each batch of the batched execution model as efficient machine code. 
There are also specialized compilation techniques for PL/SQL UDFs~\cite{hirn2021one}, though they are not as relevant to our target data science workload.

\textbf{Approximate top-$k$ queries} There is a wealth of literature on approximate top-$k$ query answering for various settings. Prior works cover disjunctive queries~\cite{ding2011faster}, distributed networks~\cite{dedzoe2011efficient,cao2004efficient}, and knowledge graphs~\cite{yang2016fast,wang2020semantic,wagner2014pay}. In the classical TA setting~\cite{ilyas2008survey}, budget-constrained~\cite{shmueli2009best}, probabilistic threshold~\cite{theobald2004top}, and anytime~\cite{arai2009anytime} algorithms are known. Ranking problems have also been studied under budget constraints~\cite{politz2011learning,politz2012ranking}. 

\textbf{(DR-)Submodular budget allocation} Our algorithm design and theoretical analysis draw heavily from submodular maximization~\cite{wolsey1982analysis}, especially DR-submodularity over the integer lattice~\cite{soma2018maximizing} and stochastic, monotone submodular maximization~\cite{hellerstein2015discrete,asadpour2016maximizing}. 
Submodular analysis has seen a number of applications, mainly in machine learning~\cite{bilmes2022submodularity}. 
The fixed budget version of our problem generalizes the submodular portfolio problem in \cite{chade2006simultaneous}. 

\textbf{Bandits with nonlinear reward} There are various bandit variants where the total reward is not merely the sum of each iteration's rewards. Prior work studied bandits with convex reward and established a $O(\sqrt{T})$ regret bound~\cite{agrawal2014bandits}. 
Another work studied a bandit problem with DR-submodular reward in data integration and established sublinear regret~\cite{chang2024data}. 
Several works in assortment selection studied bandits with subadditive reward~\cite{goyal2023mnl}. 

\section{Discussion} \label{sec:discussion}

\subsection{Further applications} We focus on the opaque top-$k$ setting, as model-based UDFs are increasingly common. Since the analysis for the top-$k$ bandit is generic, our algorithm has wider applicability. For example, it  can be applied over classic database indexes such as B-trees.  Another potential application is high-priority data acquisition over a union of heterogeneous data sources for model improvement~\cite{chen2023data,chai2022selective}. The scoring function could be proximity to decision boundary, data difficulty, etc. 

\subsection{Fixed budget} We assume that the query execution could terminate at any point.

Hence, we design an anytime algorithm and report the performance of our method at each time step. 
However, prior work shows that knowing the total budget improves solution quality in other top-$k$ settings~\cite{shmueli2009best}. 
In our setting, a budget-constrained algorithm could be risky (i.e., prioritize arms with high variance) earlier and be risk-averse later, and atch all exploration at the beginning.

Computing the budget score ($\text{BS}$ function defined in \S~\ref{sec:algorithm-monotone-dr}) with known budget requires solving an expensive combinatorial problem involving the expectation of order statistics. 
Consequently, a first-principles approach incurs too much overhead. 
Our algorithm skips this computation by being adaptive greedy, computing only the next iteration's marginal gains w.r.t. a realization of $S$. 
A practical alternative is to utilize a variant of Algorithm~\ref{alg:cont-epsgreedy}, batching all exploration rounds at the beginning. 
The number of exploration rounds should be in the order of $\Theta(T^{2/3})$. 

\subsection{Improving the index} We assume that the hierarchical index has been pre-built and is immutable. 
Then, we provide a theoretical performance bound for a given index. 
In practice, scoring functions are likely not completely opaque, so prior knowledge about future queries could help build a better index. 
For example, if queries over some images will be semantic in nature, then light-weight representation learning models could be useful, though it will make the index construction time longer. 
If the scoring functions will primarily use visual information, then using pixel values as we did in this paper would be more effective. 

We may also modify the $k$-means clustering or agglomerative clustering steps. 
Constrained $k$-means clustering might be useful if some leaf clusters are too small to sufficiently explore and exploit~\cite{bradley2000constrained}. 
Since HAC with average linkage is expensive, other linkage types could be more efficient if there are many leaf clusters.

\subsection{In-DBMS implementation}

While we focus on the query execution algorithm, future work could explore how to best incorporate our technique into a practical system. 
A minimal implementation is natural in a system that supports UDFs and an incrementally updating query interface.

We also envision a system specialized for interactive analysis of big data using opaque UDFs. 
It may have a declarative interface for training and prompting models in an ad-hoc manner. 
Such a system should combine data parallelism, statistical optimization, ML systems techniques, and compilers techniques. 
An any-time query model can be used by default for exploratory queries. 
Hand-crafted active learning methods can be used for common types of queries, such as selection queries~\cite{he2020method,dai2024uqe}, aggregation queries~\cite{dai2024uqe}, and top-$k$ queries. 
Other AQP techniques such as non-adaptive sampling algorithms can be used for more complex and general relational queries~\cite{li2018approximate}. 
Then, the entire query plan should be compiled to efficient machine code. 
ML systems techniques such as tuning the model scale for the available hardware~\cite{tan2019efficientnet}, and compressing very large models~\cite{frantar2023qmoe} could be employed to reduce model inference costs. 
How to best tune the query optimizer and execution engine for such a system are problems we leave to future work.

\section{Conclusion} \label{sec:conclusion}

We present an approximate algorithm for opaque top-$k$ query evaluation. Our framework first constructs a generic index over the search domain (\S~\ref{sec:index}). 
Then, we use a novel $\varepsilon$-greedy bandit algorithm for query execution (\S~\ref{sec:algorithms}). We prove that in discrete domains, it approaches a constant-approximation of the optimal (Theorem~\ref{thm:regret}). For practical application, we describe a histogram maintenance strategy that is robust to unknown domains, achieves increasingly higher precision over time, has low overhead (\S~\ref{sec:algorithms}), and can handle failure modes (\S~\ref{sec:fallback}). 
Extensive experiments (\S~\ref{sec:experiments}) of the full framework demonstrate the generality and scalability of our approach to large real-world datasets and a variety of scoring functions. 

\section{Acknowledgement}

This work was supported by the National Science Foundation grant \#2107050. 

\bibliographystyle{ACM-Reference-Format}
\bibliography{main}


\begin{thebibliography}{59}


\ifx \showCODEN    \undefined \def \showCODEN     #1{\unskip}     \fi
\ifx \showDOI      \undefined \def \showDOI       #1{#1}\fi
\ifx \showISBNx    \undefined \def \showISBNx     #1{\unskip}     \fi
\ifx \showISBNxiii \undefined \def \showISBNxiii  #1{\unskip}     \fi
\ifx \showISSN     \undefined \def \showISSN      #1{\unskip}     \fi
\ifx \showLCCN     \undefined \def \showLCCN      #1{\unskip}     \fi
\ifx \shownote     \undefined \def \shownote      #1{#1}          \fi
\ifx \showarticletitle \undefined \def \showarticletitle #1{#1}   \fi
\ifx \showURL      \undefined \def \showURL       {\relax}        \fi
\providecommand\bibfield[2]{#2}
\providecommand\bibinfo[2]{#2}
\providecommand\natexlab[1]{#1}
\providecommand\showeprint[2][]{arXiv:#2}

\bibitem[Agrawal and Devanur(2014)]%
        {agrawal2014bandits}
\bibfield{author}{\bibinfo{person}{Shipra Agrawal} {and} \bibinfo{person}{Nikhil~R Devanur}.} \bibinfo{year}{2014}\natexlab{}.
\newblock \showarticletitle{Bandits with concave rewards and convex knapsacks}. In \bibinfo{booktitle}{\emph{Proceedings of the fifteenth ACM conference on Economics and computation}}. \bibinfo{pages}{989--1006}.
\newblock


\bibitem[Anderson et~al\mbox{.}(2024)]%
        {anderson2024design}
\bibfield{author}{\bibinfo{person}{Eric Anderson}, \bibinfo{person}{Jonathan Fritz}, \bibinfo{person}{Austin Lee}, \bibinfo{person}{Bohou Li}, \bibinfo{person}{Mark Lindblad}, \bibinfo{person}{Henry Lindeman}, \bibinfo{person}{Alex Meyer}, \bibinfo{person}{Parth Parmar}, \bibinfo{person}{Tanvi Ranade}, \bibinfo{person}{Mehul~A Shah}, {et~al\mbox{.}}} \bibinfo{year}{2024}\natexlab{}.
\newblock \showarticletitle{The Design of an LLM-powered Unstructured Analytics System}.
\newblock \bibinfo{journal}{\emph{arXiv preprint arXiv:2409.00847}} (\bibinfo{year}{2024}).
\newblock


\bibitem[Anderson and Cafarella(2016)]%
        {anderson2016input}
\bibfield{author}{\bibinfo{person}{Michael~R Anderson} {and} \bibinfo{person}{Michael Cafarella}.} \bibinfo{year}{2016}\natexlab{}.
\newblock \showarticletitle{Input selection for fast feature engineering}. In \bibinfo{booktitle}{\emph{2016 IEEE 32nd International Conference on Data Engineering (ICDE)}}. IEEE, \bibinfo{pages}{577--588}.
\newblock


\bibitem[Arai et~al\mbox{.}(2009)]%
        {arai2009anytime}
\bibfield{author}{\bibinfo{person}{Benjamin Arai}, \bibinfo{person}{Gautam Das}, \bibinfo{person}{Dimitrios Gunopulos}, {and} \bibinfo{person}{Nick Koudas}.} \bibinfo{year}{2009}\natexlab{}.
\newblock \showarticletitle{Anytime measures for top-k algorithms on exact and fuzzy data sets}.
\newblock \bibinfo{journal}{\emph{The VLDB Journal}}  \bibinfo{volume}{18} (\bibinfo{year}{2009}), \bibinfo{pages}{407--427}.
\newblock


\bibitem[Asadpour and Nazerzadeh(2016)]%
        {asadpour2016maximizing}
\bibfield{author}{\bibinfo{person}{Arash Asadpour} {and} \bibinfo{person}{Hamid Nazerzadeh}.} \bibinfo{year}{2016}\natexlab{}.
\newblock \showarticletitle{Maximizing stochastic monotone submodular functions}.
\newblock \bibinfo{journal}{\emph{Management Science}} \bibinfo{volume}{62}, \bibinfo{number}{8} (\bibinfo{year}{2016}), \bibinfo{pages}{2374--2391}.
\newblock


\bibitem[Atkinson et~al\mbox{.}(1986)]%
        {atkinson1986min}
\bibfield{author}{\bibinfo{person}{Michael~D Atkinson}, \bibinfo{person}{J-R Sack}, \bibinfo{person}{Nicola Santoro}, {and} \bibinfo{person}{Thomas Strothotte}.} \bibinfo{year}{1986}\natexlab{}.
\newblock \showarticletitle{Min-max heaps and generalized priority queues}.
\newblock \bibinfo{journal}{\emph{Commun. ACM}} \bibinfo{volume}{29}, \bibinfo{number}{10} (\bibinfo{year}{1986}), \bibinfo{pages}{996--1000}.
\newblock


\bibitem[Bilmes(2022)]%
        {bilmes2022submodularity}
\bibfield{author}{\bibinfo{person}{Jeff Bilmes}.} \bibinfo{year}{2022}\natexlab{}.
\newblock \showarticletitle{Submodularity in machine learning and artificial intelligence}.
\newblock \bibinfo{journal}{\emph{arXiv preprint arXiv:2202.00132}} (\bibinfo{year}{2022}).
\newblock


\bibitem[Bradley et~al\mbox{.}(2000)]%
        {bradley2000constrained}
\bibfield{author}{\bibinfo{person}{Paul~S Bradley}, \bibinfo{person}{Kristin~P Bennett}, {and} \bibinfo{person}{Ayhan Demiriz}.} \bibinfo{year}{2000}\natexlab{}.
\newblock \showarticletitle{Constrained k-means clustering}.
\newblock \bibinfo{journal}{\emph{Microsoft Research, Redmond}} \bibinfo{volume}{20}, \bibinfo{number}{0} (\bibinfo{year}{2000}), \bibinfo{pages}{0}.
\newblock


\bibitem[Bruno and Wang(2007)]%
        {bruno2007threshold}
\bibfield{author}{\bibinfo{person}{Nicolas Bruno} {and} \bibinfo{person}{Hui Wang}.} \bibinfo{year}{2007}\natexlab{}.
\newblock \showarticletitle{The threshold algorithm: From middleware systems to the relational engine}.
\newblock \bibinfo{journal}{\emph{IEEE Transactions on Knowledge and Data Engineering}} \bibinfo{volume}{19}, \bibinfo{number}{4} (\bibinfo{year}{2007}), \bibinfo{pages}{523--537}.
\newblock


\bibitem[Cao and Wang(2004)]%
        {cao2004efficient}
\bibfield{author}{\bibinfo{person}{Pei Cao} {and} \bibinfo{person}{Zhe Wang}.} \bibinfo{year}{2004}\natexlab{}.
\newblock \showarticletitle{Efficient top-k query calculation in distributed networks}. In \bibinfo{booktitle}{\emph{Proceedings of the twenty-third annual ACM symposium on Principles of distributed computing}}. \bibinfo{pages}{206--215}.
\newblock


\bibitem[CarGurus({[n.\,d.]})]%
        {cargurus}
\bibfield{author}{\bibinfo{person}{CarGurus}.} \bibinfo{year}{[n.\,d.]}\natexlab{}.
\newblock \bibinfo{title}{CarGurus}.
\newblock \bibinfo{howpublished}{\url{https://www.cargurus.com/}}.
\newblock


\bibitem[Chade and Smith(2006)]%
        {chade2006simultaneous}
\bibfield{author}{\bibinfo{person}{Hector Chade} {and} \bibinfo{person}{Lones Smith}.} \bibinfo{year}{2006}\natexlab{}.
\newblock \showarticletitle{Simultaneous search}.
\newblock \bibinfo{journal}{\emph{Econometrica}} \bibinfo{volume}{74}, \bibinfo{number}{5} (\bibinfo{year}{2006}), \bibinfo{pages}{1293--1307}.
\newblock


\bibitem[Chai et~al\mbox{.}(2022)]%
        {chai2022selective}
\bibfield{author}{\bibinfo{person}{Chengliang Chai}, \bibinfo{person}{Jiabin Liu}, \bibinfo{person}{Nan Tang}, \bibinfo{person}{Guoliang Li}, {and} \bibinfo{person}{Yuyu Luo}.} \bibinfo{year}{2022}\natexlab{}.
\newblock \showarticletitle{Selective data acquisition in the wild for model charging}.
\newblock \bibinfo{journal}{\emph{Proc. {VLDB} Endow.}} \bibinfo{volume}{15}, \bibinfo{number}{7} (\bibinfo{year}{2022}), \bibinfo{pages}{1466--1478}.
\newblock


\bibitem[Chang et~al\mbox{.}(2024)]%
        {chang2024data}
\bibfield{author}{\bibinfo{person}{Jiwon Chang}, \bibinfo{person}{Bohan Cui}, \bibinfo{person}{Fatemeh Nargesian}, \bibinfo{person}{Abolfazl Asudeh}, {and} \bibinfo{person}{HV Jagadish}.} \bibinfo{year}{2024}\natexlab{}.
\newblock \showarticletitle{Data distribution tailoring revisited: cost-efficient integration of representative data}.
\newblock \bibinfo{journal}{\emph{The VLDB Journal}} (\bibinfo{year}{2024}), \bibinfo{pages}{1--24}.
\newblock


\bibitem[Chen et~al\mbox{.}(2023)]%
        {chen2023data}
\bibfield{author}{\bibinfo{person}{Lingjiao Chen}, \bibinfo{person}{Bilge Acun}, \bibinfo{person}{Newsha Ardalani}, \bibinfo{person}{Yifan Sun}, \bibinfo{person}{Feiyang Kang}, \bibinfo{person}{Hanrui Lyu}, \bibinfo{person}{Yongchan Kwon}, \bibinfo{person}{Ruoxi Jia}, \bibinfo{person}{Carole-Jean Wu}, \bibinfo{person}{Matei Zaharia}, {et~al\mbox{.}}} \bibinfo{year}{2023}\natexlab{}.
\newblock \showarticletitle{Data acquisition: A new frontier in data-centric AI}.
\newblock \bibinfo{journal}{\emph{arXiv preprint arXiv:2311.13712}} (\bibinfo{year}{2023}).
\newblock


\bibitem[Chen and Guestrin(2016)]%
        {chen2016xgboost}
\bibfield{author}{\bibinfo{person}{Tianqi Chen} {and} \bibinfo{person}{Carlos Guestrin}.} \bibinfo{year}{2016}\natexlab{}.
\newblock \showarticletitle{Xgboost: A scalable tree boosting system}. In \bibinfo{booktitle}{\emph{Proceedings of the 22nd acm sigkdd international conference on knowledge discovery and data mining}}. \bibinfo{pages}{785--794}.
\newblock


\bibitem[Dai et~al\mbox{.}(2024)]%
        {dai2024uqe}
\bibfield{author}{\bibinfo{person}{Hanjun Dai}, \bibinfo{person}{Bethany~Yixin Wang}, \bibinfo{person}{Xingchen Wan}, \bibinfo{person}{Bo Dai}, \bibinfo{person}{Sherry Yang}, \bibinfo{person}{Azade Nova}, \bibinfo{person}{Pengcheng Yin}, \bibinfo{person}{Phitchaya~Mangpo Phothilimthana}, \bibinfo{person}{Charles Sutton}, {and} \bibinfo{person}{Dale Schuurmans}.} \bibinfo{year}{2024}\natexlab{}.
\newblock \showarticletitle{UQE: A Query Engine for Unstructured Databases}.
\newblock \bibinfo{journal}{\emph{arXiv preprint arXiv:2407.09522}} (\bibinfo{year}{2024}).
\newblock


\bibitem[DataBricks(2024)]%
        {noauthor_ai_query_nodate}
\bibfield{author}{\bibinfo{person}{DataBricks}.} \bibinfo{year}{2024}\natexlab{}.
\newblock \bibinfo{title}{ai\_query function}.
\newblock
\newblock
\urldef\tempurl%
\url{https://docs.databricks.com}
\showURL{%
\tempurl}


\bibitem[De~Lange et~al\mbox{.}(2021)]%
        {de2021continual}
\bibfield{author}{\bibinfo{person}{Matthias De~Lange}, \bibinfo{person}{Rahaf Aljundi}, \bibinfo{person}{Marc Masana}, \bibinfo{person}{Sarah Parisot}, \bibinfo{person}{Xu Jia}, \bibinfo{person}{Ale{\v{s}} Leonardis}, \bibinfo{person}{Gregory Slabaugh}, {and} \bibinfo{person}{Tinne Tuytelaars}.} \bibinfo{year}{2021}\natexlab{}.
\newblock \showarticletitle{A continual learning survey: Defying forgetting in classification tasks}.
\newblock \bibinfo{journal}{\emph{IEEE transactions on pattern analysis and machine intelligence}} \bibinfo{volume}{44}, \bibinfo{number}{7} (\bibinfo{year}{2021}), \bibinfo{pages}{3366--3385}.
\newblock


\bibitem[Dean and Ghemawat(2008)]%
        {dean2008mapreduce}
\bibfield{author}{\bibinfo{person}{Jeffrey Dean} {and} \bibinfo{person}{Sanjay Ghemawat}.} \bibinfo{year}{2008}\natexlab{}.
\newblock \showarticletitle{MapReduce: simplified data processing on large clusters}.
\newblock \bibinfo{journal}{\emph{Commun. ACM}} \bibinfo{volume}{51}, \bibinfo{number}{1} (\bibinfo{year}{2008}), \bibinfo{pages}{107--113}.
\newblock


\bibitem[Dedzoe et~al\mbox{.}(2011)]%
        {dedzoe2011efficient}
\bibfield{author}{\bibinfo{person}{William~Kokou Dedzoe}, \bibinfo{person}{Philippe Lamarre}, \bibinfo{person}{Reza Akbarinia}, {and} \bibinfo{person}{Patrick Valduriez}.} \bibinfo{year}{2011}\natexlab{}.
\newblock \showarticletitle{Efficient early top-k query processing in overloaded p2p systems}. In \bibinfo{booktitle}{\emph{Database and Expert Systems Applications: 22nd International Conference, DEXA 2011, Toulouse, France, August 29-September 2, 2011. Proceedings, Part I 22}}. Springer, \bibinfo{pages}{140--155}.
\newblock


\bibitem[Deng et~al\mbox{.}(2009)]%
        {deng2009imagenet}
\bibfield{author}{\bibinfo{person}{Jia Deng}, \bibinfo{person}{Wei Dong}, \bibinfo{person}{Richard Socher}, \bibinfo{person}{Li-Jia Li}, \bibinfo{person}{Kai Li}, {and} \bibinfo{person}{Li Fei-Fei}.} \bibinfo{year}{2009}\natexlab{}.
\newblock \showarticletitle{Imagenet: A large-scale hierarchical image database}. In \bibinfo{booktitle}{\emph{2009 IEEE conference on computer vision and pattern recognition}}. Ieee, \bibinfo{pages}{248--255}.
\newblock


\bibitem[Ding and Suel(2011)]%
        {ding2011faster}
\bibfield{author}{\bibinfo{person}{Shuai Ding} {and} \bibinfo{person}{Torsten Suel}.} \bibinfo{year}{2011}\natexlab{}.
\newblock \showarticletitle{Faster top-k document retrieval using block-max indexes}. In \bibinfo{booktitle}{\emph{Proceedings of the 34th international ACM SIGIR conference on Research and development in Information Retrieval}}. \bibinfo{pages}{993--1002}.
\newblock


\bibitem[Fagin(1996)]%
        {fagin1996combining}
\bibfield{author}{\bibinfo{person}{Ronald Fagin}.} \bibinfo{year}{1996}\natexlab{}.
\newblock \showarticletitle{Combining fuzzy information from multiple systems}. In \bibinfo{booktitle}{\emph{Proceedings of the fifteenth ACM SIGACT-SIGMOD-SIGART symposium on Principles of database systems}}. \bibinfo{pages}{216--226}.
\newblock


\bibitem[Fagin et~al\mbox{.}(2001)]%
        {fagin2001optimal}
\bibfield{author}{\bibinfo{person}{Ronald Fagin}, \bibinfo{person}{Amnon Lotem}, {and} \bibinfo{person}{Moni Naor}.} \bibinfo{year}{2001}\natexlab{}.
\newblock \showarticletitle{Optimal aggregation algorithms for middleware}. In \bibinfo{booktitle}{\emph{Proceedings of the twentieth ACM SIGMOD-SIGACT-SIGART symposium on Principles of database systems}}. \bibinfo{pages}{102--113}.
\newblock


\bibitem[Foufoulas and Simitsis(2023)]%
        {foufoulas2023efficient}
\bibfield{author}{\bibinfo{person}{Yannis Foufoulas} {and} \bibinfo{person}{Alkis Simitsis}.} \bibinfo{year}{2023}\natexlab{}.
\newblock \showarticletitle{Efficient execution of user-defined functions in SQL queries}.
\newblock \bibinfo{journal}{\emph{Proc. {VLDB} Endow.}} \bibinfo{volume}{16}, \bibinfo{number}{12} (\bibinfo{year}{2023}), \bibinfo{pages}{3874--3877}.
\newblock


\bibitem[Frantar and Alistarh(2023)]%
        {frantar2023qmoe}
\bibfield{author}{\bibinfo{person}{Elias Frantar} {and} \bibinfo{person}{Dan Alistarh}.} \bibinfo{year}{2023}\natexlab{}.
\newblock \showarticletitle{Qmoe: Practical sub-1-bit compression of trillion-parameter models}.
\newblock \bibinfo{journal}{\emph{arXiv preprint arXiv:2310.16795}} (\bibinfo{year}{2023}).
\newblock


\bibitem[Goyal et~al\mbox{.}(2023)]%
        {goyal2023mnl}
\bibfield{author}{\bibinfo{person}{Vineet Goyal}, \bibinfo{person}{Salal Humair}, \bibinfo{person}{Orestis Papadigenopoulos}, {and} \bibinfo{person}{Assaf Zeevi}.} \bibinfo{year}{2023}\natexlab{}.
\newblock \showarticletitle{MNL-Prophet: sequential assortment selection under uncertainty}.
\newblock \bibinfo{journal}{\emph{arXiv preprint arXiv:2308.05207}} (\bibinfo{year}{2023}).
\newblock


\bibitem[Grulich et~al\mbox{.}(2021)]%
        {grulich2021babelfish}
\bibfield{author}{\bibinfo{person}{Philipp~Marian Grulich}, \bibinfo{person}{Steffen Zeuch}, {and} \bibinfo{person}{Volker Markl}.} \bibinfo{year}{2021}\natexlab{}.
\newblock \showarticletitle{Babelfish: Efficient execution of polyglot queries}.
\newblock \bibinfo{journal}{\emph{Proc. {VLDB} Endow.}} \bibinfo{volume}{15}, \bibinfo{number}{2} (\bibinfo{year}{2021}), \bibinfo{pages}{196--210}.
\newblock


\bibitem[Hagedorn et~al\mbox{.}(2021)]%
        {hagedorn2021putting}
\bibfield{author}{\bibinfo{person}{Stefan Hagedorn}, \bibinfo{person}{Steffen Kl{\"a}be}, {and} \bibinfo{person}{Kai-Uwe Sattler}.} \bibinfo{year}{2021}\natexlab{}.
\newblock \showarticletitle{Putting pandas in a box}. In \bibinfo{booktitle}{\emph{Conference on Innovative Data Systems Research (CIDR);(Online)}}. \bibinfo{pages}{15}.
\newblock


\bibitem[Hasani et~al\mbox{.}(2019)]%
        {hasani2019approxml}
\bibfield{author}{\bibinfo{person}{Sona Hasani}, \bibinfo{person}{Faezeh Ghaderi}, \bibinfo{person}{Shohedul Hasan}, \bibinfo{person}{Saravanan Thirumuruganathan}, \bibinfo{person}{Abolfazl Asudeh}, \bibinfo{person}{Nick Koudas}, {and} \bibinfo{person}{Gautam Das}.} \bibinfo{year}{2019}\natexlab{}.
\newblock \showarticletitle{ApproxML: efficient approximate ad-hoc ML models through materialization and reuse}.
\newblock \bibinfo{journal}{\emph{Proceedings of the VLDB Endowment}} \bibinfo{volume}{12}, \bibinfo{number}{12} (\bibinfo{year}{2019}), \bibinfo{pages}{1906--1909}.
\newblock


\bibitem[He et~al\mbox{.}(2020)]%
        {he2020method}
\bibfield{author}{\bibinfo{person}{Wenjia He}, \bibinfo{person}{Michael~R Anderson}, \bibinfo{person}{Maxwell Strome}, {and} \bibinfo{person}{Michael Cafarella}.} \bibinfo{year}{2020}\natexlab{}.
\newblock \showarticletitle{A method for optimizing opaque filter queries}. In \bibinfo{booktitle}{\emph{Proceedings of the 2020 ACM SIGMOD International Conference on Management of Data}}. \bibinfo{pages}{1257--1272}.
\newblock


\bibitem[He et~al\mbox{.}(2024)]%
        {he2024optimizing}
\bibfield{author}{\bibinfo{person}{Wenjia He}, \bibinfo{person}{Ibrahim Sabek}, \bibinfo{person}{Yuze Lou}, {and} \bibinfo{person}{Michael Cafarella}.} \bibinfo{year}{2024}\natexlab{}.
\newblock \showarticletitle{Optimizing Video Selection LIMIT Queries with Commonsense Knowledge}.
\newblock \bibinfo{journal}{\emph{Proc. {VLDB} Endow.}} \bibinfo{volume}{17}, \bibinfo{number}{7} (\bibinfo{year}{2024}), \bibinfo{pages}{1751--1764}.
\newblock


\bibitem[Hellerstein et~al\mbox{.}(2015)]%
        {hellerstein2015discrete}
\bibfield{author}{\bibinfo{person}{Lisa Hellerstein}, \bibinfo{person}{Devorah Kletenik}, {and} \bibinfo{person}{Patrick Lin}.} \bibinfo{year}{2015}\natexlab{}.
\newblock \showarticletitle{Discrete stochastic submodular maximization: Adaptive vs. non-adaptive vs. offline}. In \bibinfo{booktitle}{\emph{International Conference on Algorithms and Complexity}}. Springer, \bibinfo{pages}{235--248}.
\newblock


\bibitem[Hirn and Grust(2021)]%
        {hirn2021one}
\bibfield{author}{\bibinfo{person}{Denis Hirn} {and} \bibinfo{person}{Torsten Grust}.} \bibinfo{year}{2021}\natexlab{}.
\newblock \showarticletitle{One with recursive is worth many GOTOs}. In \bibinfo{booktitle}{\emph{Proceedings of the 2021 International Conference on Management of Data}}. \bibinfo{pages}{723--735}.
\newblock


\bibitem[Ilyas et~al\mbox{.}(2008)]%
        {ilyas2008survey}
\bibfield{author}{\bibinfo{person}{Ihab~F Ilyas}, \bibinfo{person}{George Beskales}, {and} \bibinfo{person}{Mohamed~A Soliman}.} \bibinfo{year}{2008}\natexlab{}.
\newblock \showarticletitle{A survey of top-k query processing techniques in relational database systems}.
\newblock \bibinfo{journal}{\emph{ACM Computing Surveys (CSUR)}} \bibinfo{volume}{40}, \bibinfo{number}{4} (\bibinfo{year}{2008}), \bibinfo{pages}{1--58}.
\newblock


\bibitem[Jin et~al\mbox{.}(2024)]%
        {jin2024aidb}
\bibfield{author}{\bibinfo{person}{Tengjun Jin}, \bibinfo{person}{Akash Mittal}, \bibinfo{person}{Chenghao Mo}, \bibinfo{person}{Jiahao Fang}, \bibinfo{person}{Chengsong Zhang}, \bibinfo{person}{Timothy Dai}, {and} \bibinfo{person}{Daniel Kang}.} \bibinfo{year}{2024}\natexlab{}.
\newblock \showarticletitle{AIDB: a Sparsely Materialized Database for Queries using Machine Learning}. In \bibinfo{booktitle}{\emph{Proceedings of the Eighth Workshop on Data Management for End-to-End Machine Learning}}. \bibinfo{pages}{23--28}.
\newblock


\bibitem[Li and Li(2018)]%
        {li2018approximate}
\bibfield{author}{\bibinfo{person}{Kaiyu Li} {and} \bibinfo{person}{Guoliang Li}.} \bibinfo{year}{2018}\natexlab{}.
\newblock \showarticletitle{Approximate query processing: What is new and where to go? a survey on approximate query processing}.
\newblock \bibinfo{journal}{\emph{Data Science and Engineering}} \bibinfo{volume}{3}, \bibinfo{number}{4} (\bibinfo{year}{2018}), \bibinfo{pages}{379--397}.
\newblock


\bibitem[Liu et~al\mbox{.}(2024)]%
        {liu2024optimizing}
\bibfield{author}{\bibinfo{person}{Shu Liu}, \bibinfo{person}{Asim Biswal}, \bibinfo{person}{Audrey Cheng}, \bibinfo{person}{Xiangxi Mo}, \bibinfo{person}{Shiyi Cao}, \bibinfo{person}{Joseph~E Gonzalez}, \bibinfo{person}{Ion Stoica}, {and} \bibinfo{person}{Matei Zaharia}.} \bibinfo{year}{2024}\natexlab{}.
\newblock \showarticletitle{Optimizing llm queries in relational workloads}.
\newblock \bibinfo{journal}{\emph{arXiv preprint arXiv:2403.05821}} (\bibinfo{year}{2024}).
\newblock


\bibitem[Mital({[n.\,d.]})]%
        {usedcars}
\bibfield{author}{\bibinfo{person}{Ananay Mital}.} \bibinfo{year}{[n.\,d.]}\natexlab{}.
\newblock \bibinfo{title}{US Used Cars Dataset}.
\newblock \bibinfo{howpublished}{\url{https://www.kaggle.com/datasets/ananaymital/us-used-cars-dataset}}.
\newblock
\newblock
\shownote{Accessed: 2024-10-11}.


\bibitem[Palkar et~al\mbox{.}(2018)]%
        {palkar2018evaluating}
\bibfield{author}{\bibinfo{person}{Shoumik Palkar}, \bibinfo{person}{James Thomas}, \bibinfo{person}{Deepak Narayanan}, \bibinfo{person}{Pratiksha Thaker}, \bibinfo{person}{Rahul Palamuttam}, \bibinfo{person}{Parimajan Negi}, \bibinfo{person}{Anil Shanbhag}, \bibinfo{person}{Malte Schwarzkopf}, \bibinfo{person}{Holger Pirk}, \bibinfo{person}{Saman Amarasinghe}, {et~al\mbox{.}}} \bibinfo{year}{2018}\natexlab{}.
\newblock \showarticletitle{Evaluating end-to-end optimization for data analytics applications in weld}.
\newblock \bibinfo{journal}{\emph{Proceedings of the VLDB Endowment}} \bibinfo{volume}{11}, \bibinfo{number}{9} (\bibinfo{year}{2018}), \bibinfo{pages}{1002--1015}.
\newblock


\bibitem[Pan et~al\mbox{.}(2024)]%
        {pan2024survey}
\bibfield{author}{\bibinfo{person}{James~Jie Pan}, \bibinfo{person}{Jianguo Wang}, {and} \bibinfo{person}{Guoliang Li}.} \bibinfo{year}{2024}\natexlab{}.
\newblock \showarticletitle{Survey of vector database management systems}.
\newblock \bibinfo{journal}{\emph{The VLDB Journal}} (\bibinfo{year}{2024}), \bibinfo{pages}{1--25}.
\newblock


\bibitem[Patel et~al\mbox{.}(2024)]%
        {patel2024lotus}
\bibfield{author}{\bibinfo{person}{Liana Patel}, \bibinfo{person}{Siddharth Jha}, \bibinfo{person}{Carlos Guestrin}, {and} \bibinfo{person}{Matei Zaharia}.} \bibinfo{year}{2024}\natexlab{}.
\newblock \showarticletitle{Lotus: Enabling semantic queries with llms over tables of unstructured and structured data}.
\newblock \bibinfo{journal}{\emph{arXiv preprint arXiv:2407.11418}} (\bibinfo{year}{2024}).
\newblock


\bibitem[P{\"o}litz and Schenkel(2011)]%
        {politz2011learning}
\bibfield{author}{\bibinfo{person}{Christian P{\"o}litz} {and} \bibinfo{person}{Ralf Schenkel}.} \bibinfo{year}{2011}\natexlab{}.
\newblock \showarticletitle{Learning to rank under tight budget constraints}. In \bibinfo{booktitle}{\emph{Proceedings of the 34th international ACM SIGIR conference on Research and development in Information Retrieval}}. \bibinfo{pages}{1173--1174}.
\newblock


\bibitem[P{\"o}litz and Schenkel(2012)]%
        {politz2012ranking}
\bibfield{author}{\bibinfo{person}{Christian P{\"o}litz} {and} \bibinfo{person}{Ralf Schenkel}.} \bibinfo{year}{2012}\natexlab{}.
\newblock \showarticletitle{Ranking under tight budgets}. In \bibinfo{booktitle}{\emph{2012 23rd International Workshop on Database and Expert Systems Applications}}. IEEE, \bibinfo{pages}{161--165}.
\newblock


\bibitem[Shmueli-Scheuer et~al\mbox{.}(2009)]%
        {shmueli2009best}
\bibfield{author}{\bibinfo{person}{Michal Shmueli-Scheuer}, \bibinfo{person}{Chen Li}, \bibinfo{person}{Yosi Mass}, \bibinfo{person}{Haggai Roitman}, \bibinfo{person}{Ralf Schenkel}, {and} \bibinfo{person}{Gerhard Weikum}.} \bibinfo{year}{2009}\natexlab{}.
\newblock \showarticletitle{Best-effort top-k query processing under budgetary constraints}. In \bibinfo{booktitle}{\emph{2009 IEEE 25th International Conference on Data Engineering}}. IEEE, \bibinfo{pages}{928--939}.
\newblock


\bibitem[Sikdar and Jermaine(2020)]%
        {sikdar2020monsoon}
\bibfield{author}{\bibinfo{person}{Sourav Sikdar} {and} \bibinfo{person}{Chris Jermaine}.} \bibinfo{year}{2020}\natexlab{}.
\newblock \showarticletitle{Monsoon: Multi-step optimization and execution of queries with partially obscured predicates}. In \bibinfo{booktitle}{\emph{Proceedings of the 2020 ACM SIGMOD International Conference on Management of Data}}. \bibinfo{pages}{225--240}.
\newblock


\bibitem[Soma et~al\mbox{.}(2014)]%
        {soma2014optimal}
\bibfield{author}{\bibinfo{person}{Tasuku Soma}, \bibinfo{person}{Naonori Kakimura}, \bibinfo{person}{Kazuhiro Inaba}, {and} \bibinfo{person}{Ken-ichi Kawarabayashi}.} \bibinfo{year}{2014}\natexlab{}.
\newblock \showarticletitle{Optimal budget allocation: Theoretical guarantee and efficient algorithm}. In \bibinfo{booktitle}{\emph{International Conference on Machine Learning}}. PMLR, \bibinfo{pages}{351--359}.
\newblock


\bibitem[Soma and Yoshida(2018)]%
        {soma2018maximizing}
\bibfield{author}{\bibinfo{person}{Tasuku Soma} {and} \bibinfo{person}{Yuichi Yoshida}.} \bibinfo{year}{2018}\natexlab{}.
\newblock \showarticletitle{Maximizing monotone submodular functions over the integer lattice}.
\newblock \bibinfo{journal}{\emph{Mathematical Programming}}  \bibinfo{volume}{172} (\bibinfo{year}{2018}), \bibinfo{pages}{539--563}.
\newblock


\bibitem[Spiegelberg et~al\mbox{.}(2021)]%
        {spiegelberg2021tuplex}
\bibfield{author}{\bibinfo{person}{Leonhard Spiegelberg}, \bibinfo{person}{Rahul Yesantharao}, \bibinfo{person}{Malte Schwarzkopf}, {and} \bibinfo{person}{Tim Kraska}.} \bibinfo{year}{2021}\natexlab{}.
\newblock \showarticletitle{Tuplex: Data science in python at native code speed}. In \bibinfo{booktitle}{\emph{Proceedings of the 2021 International Conference on Management of Data}}. \bibinfo{pages}{1718--1731}.
\newblock


\bibitem[Sukumaran(2023)]%
        {noauthor_llm_nodate}
\bibfield{author}{\bibinfo{person}{Abirami Sukumaran}.} \bibinfo{year}{2023}\natexlab{}.
\newblock \bibinfo{title}{{LLM} with {Vertex} {AI} only using {SQL} queries in {BigQuery}}.
\newblock
\newblock
\urldef\tempurl%
\url{https://cloud.google.com/blog/products/ai-machine-learning/llm-with-vertex-ai-only-using-sql-queries-in-bigquery}
\showURL{%
\tempurl}


\bibitem[Tan and Le(2019)]%
        {tan2019efficientnet}
\bibfield{author}{\bibinfo{person}{Mingxing Tan} {and} \bibinfo{person}{Quoc Le}.} \bibinfo{year}{2019}\natexlab{}.
\newblock \showarticletitle{Efficientnet: Rethinking model scaling for convolutional neural networks}. In \bibinfo{booktitle}{\emph{International conference on machine learning}}. PMLR, \bibinfo{pages}{6105--6114}.
\newblock


\bibitem[Theobald et~al\mbox{.}(2004)]%
        {theobald2004top}
\bibfield{author}{\bibinfo{person}{Martin Theobald}, \bibinfo{person}{Gerhard Weikum}, {and} \bibinfo{person}{Ralf Schenkel}.} \bibinfo{year}{2004}\natexlab{}.
\newblock \showarticletitle{Top-k query evaluation with probabilistic guarantees}. In \bibinfo{booktitle}{\emph{Proceedings of the Thirtieth international conference on Very large data bases-Volume 30}}. \bibinfo{pages}{648--659}.
\newblock


\bibitem[Wagner et~al\mbox{.}(2014)]%
        {wagner2014pay}
\bibfield{author}{\bibinfo{person}{Andreas Wagner}, \bibinfo{person}{Veli Bicer}, {and} \bibinfo{person}{Thanh Tran}.} \bibinfo{year}{2014}\natexlab{}.
\newblock \showarticletitle{Pay-as-you-go approximate join top-k processing for the web of data}. In \bibinfo{booktitle}{\emph{The Semantic Web: Trends and Challenges: 11th International Conference, ESWC 2014, Anissaras, Crete, Greece, May 25-29, 2014. Proceedings 11}}. Springer, \bibinfo{pages}{130--145}.
\newblock


\bibitem[Wang et~al\mbox{.}(2021)]%
        {WangXY021}
\bibfield{author}{\bibinfo{person}{Mengzhao Wang}, \bibinfo{person}{Xiaoliang Xu}, \bibinfo{person}{Qiang Yue}, {and} \bibinfo{person}{Yuxiang Wang}.} \bibinfo{year}{2021}\natexlab{}.
\newblock \showarticletitle{A Comprehensive Survey and Experimental Comparison of Graph-Based Approximate Nearest Neighbor Search}.
\newblock \bibinfo{journal}{\emph{Proc. {VLDB} Endow.}} \bibinfo{volume}{14}, \bibinfo{number}{11} (\bibinfo{year}{2021}), \bibinfo{pages}{1964--1978}.
\newblock


\bibitem[Wang et~al\mbox{.}(2020)]%
        {wang2020semantic}
\bibfield{author}{\bibinfo{person}{Yuxiang Wang}, \bibinfo{person}{Arijit Khan}, \bibinfo{person}{Tianxing Wu}, \bibinfo{person}{Jiahui Jin}, {and} \bibinfo{person}{Haijiang Yan}.} \bibinfo{year}{2020}\natexlab{}.
\newblock \showarticletitle{Semantic guided and response times bounded top-k similarity search over knowledge graphs}. In \bibinfo{booktitle}{\emph{2020 IEEE 36th International Conference on Data Engineering (ICDE)}}. IEEE, \bibinfo{pages}{445--456}.
\newblock


\bibitem[Wolsey(1982)]%
        {wolsey1982analysis}
\bibfield{author}{\bibinfo{person}{Laurence~A Wolsey}.} \bibinfo{year}{1982}\natexlab{}.
\newblock \showarticletitle{An analysis of the greedy algorithm for the submodular set covering problem}.
\newblock \bibinfo{journal}{\emph{Combinatorica}} \bibinfo{volume}{2}, \bibinfo{number}{4} (\bibinfo{year}{1982}), \bibinfo{pages}{385--393}.
\newblock


\bibitem[Yang et~al\mbox{.}(2016)]%
        {yang2016fast}
\bibfield{author}{\bibinfo{person}{Shengqi Yang}, \bibinfo{person}{Fangqiu Han}, \bibinfo{person}{Yinghui Wu}, {and} \bibinfo{person}{Xifeng Yan}.} \bibinfo{year}{2016}\natexlab{}.
\newblock \showarticletitle{Fast top-k search in knowledge graphs}. In \bibinfo{booktitle}{\emph{2016 IEEE 32nd international conference on data engineering (ICDE)}}. IEEE, \bibinfo{pages}{990--1001}.
\newblock


\bibitem[Zhao et~al\mbox{.}(2024)]%
        {zhao2024neurdb}
\bibfield{author}{\bibinfo{person}{Zhanhao Zhao}, \bibinfo{person}{Shaofeng Cai}, \bibinfo{person}{Haotian Gao}, \bibinfo{person}{Hexiang Pan}, \bibinfo{person}{Siqi Xiang}, \bibinfo{person}{Naili Xing}, \bibinfo{person}{Gang Chen}, \bibinfo{person}{Beng~Chin Ooi}, \bibinfo{person}{Yanyan Shen}, \bibinfo{person}{Yuncheng Wu}, {et~al\mbox{.}}} \bibinfo{year}{2024}\natexlab{}.
\newblock \showarticletitle{NeurDB: On the Design and Implementation of an AI-powered Autonomous Database}.
\newblock \bibinfo{journal}{\emph{arXiv preprint arXiv:2408.03013}} (\bibinfo{year}{2024}).
\newblock


\end{thebibliography}


\end{document}